\theoremstyle{thmstyleone}%
\newtheorem{thm}{Theorem}[section]
\newtheorem{prop}[thm]{Proposition}
\newtheorem{lem}[thm]{Lemma}
\newtheorem{cor}[thm]{Corollary}
\theoremstyle{thmstyletwo}%
\newtheorem{examplenn}[thm]{Example}%
\newtheorem{rmk}[thm]{Remark}%
\theoremstyle{thmstylethree}%
\newtheorem{defn}[thm]{Definition}
\newtheorem{problem}[thm]{Problem}%
\numberwithin{equation}{section}
\begin{document}

\title[Estimates on the ImFT]{Estimates on the size of the domain of the Implicit Function Theorem: A Mapping Degree based Approach}


\author*[1]{\fnm{Ashutosh} \sur{Jindal}}\email{ashutosh.1@sc.iitb.ac.in}

\author[1]{\fnm{Debasish} \sur{Chatterjee}}\email{dchatter@iitb.ac.in}

\author[1]{\fnm{Ravi} \sur{Banavar}}\email{banavar@iitb.ac.in}

\affil*[1]{\orgdiv{Systems and Control Engineering}, \orgname{Indian Institute of Technology, Bombay}, \orgaddress{\street{IIT Area, Powai}, \city{Mumbai}, \postcode{400076}, \state{Maharashtra}, \country{India}}}


\abstract{In this article we present explicit estimates of size
of the domain on which the Implicit Function Theorem and  the Inverse Function Theorem are valid. For maps that are \correction{twice continuously differentiable}, these estimates depend upon the magnitude of the first-order derivatives evaluated at the point of interest, and a bound on the second-order derivatives over a region of interest. One of the key contributions of this article is that the estimates presented require minimal numerical computation. In particular, these estimates are arrived at without any intermediate optimization procedures. We then present three applications in \correction{optimazation and} systems and control theory where the computation of such bounds \correction{turns out to be important.} \correction{First,} in electrical networks, the power flow operations can be written as Quadratically Constrained Quadratic Programs (QCQPs), and we utilize our  bounds to compute \correction{the size of permissible} power variations to ensure stable operations of the power system network. \correction{Second, robustness margin of positive definite solutions to the Algebraic Riccati Equation} (frequently encountered in control problems) subject to perturbations in the system matrices \correction{are} computed \correction{with the aid of our bounds}. \correction{Finally, we employ these bounds to provide quantitative estimates of the \correction{size of the} domains for feedback linearization of discrete-time control systems.}
}

\keywords{Implicit Function Theorem, Numerical Analysis, Degree Theory, Nonlinear Analysis}



\maketitle

\section{Introduction}\label{sec1}
The \emph{Implicit Function Theorem} (ImFT) and \correction{its conjoined twin} the \emph{Inverse Function Theorem} (IFT) constitute a cornerstone of mathematical analysis and multivariate calculus \cite{krantz2002implicit}. \correction{These theorems serve as the basis of several existential results in mathematics and find applications in the following areas:  optimization   \cite{bertsekas1999nonlinear,dontchev1999lipschitzian,fletcher2001practical}, numerical analysis,
\cite{mordukhovich1994lipschitzian,nash1996linear}, control theory applications \cite{lhmnc}, ordinary differential equation \cite{eldering2013normally,duistermaat2012lie}, etc. In control theory applications specifically, the ImFT allows us to  show the existence of solutions of ordinary differential equations \cite{eldering2013normally,duistermaat2012lie}, to assert the existence of a unique trajectory for the dynamical system, and also  continuity properties of the solutions with respect to initial conditions and the control input.} \correction{In particular, it appears as the driving engine behind several assertions concerning the so called \emph{end point map} \cite[Chapter~2,~3]{bressan2007}.} ImFT also finds applications in areas such as feedback linearization of numerically discretized systems \cite{lhmnc}. 

\correction{The assertions made by these theorems are of existential and local character, i.e., they hold only in a \correction{sufficiently small} neighborhood around the point of interest. This feature extends to the methods developed based on these results. For instance, the well-known Frobenius's Theorem\footnote{Here we refer to the local version of the theorem, however a global version of the Frobenius theorem also exists.} utilizes the IFT to compute a coordinate system that rectifies a given involutive distribution \cite{isidori1985nonlinear}}, \correction{and the region of such rectification is only qualitatively available in general.} 
\correction{The ImFT does not provide us with information on the size of the domain on which these results are valid. In engineering applications, having a quantitative estimate of the size of the domain is useful and  often crucial. Such a quantitative analysis of the ImFT and the IFT with explicit bounds on the sizes of the respective domains on which the theorems are applicable would be of use in many existential results in the broad area of cybernetics, especially where estimates of robustness are essential. This need
serves as our chief motivation to arrive at estimates of the domain of the validity of the ImFT and the IFT.}

Despite the wide applicability of the ImFT and IFT, few attempts have been made on estimating such
quantitative bounds of the domains of validity of these theorems. One set of such estimates for the ImFT was provided by \correction{Holtzman~\cite{holtzman1970explicit}}, and the estimates there are explicit functions of the bounds on the first-order derivatives of the underlying map over a given domain of interest. \correction{Chang et al.~\cite{chang2003analytic}}  provide another set of estimates on the  neighborhoods involved in ImFT  based on the application of the Roche Theorem \cite{ash2014complex}. These estimates were first provided for the scalar case and then applied inductively to generalize it for vector-valued maps. The bounds provided by \correction{Chang et al.~\cite{chang2003analytic}} were based on the boundedness of the underlying complex map over a bounded domain. The accuracy of these estimates is contingent on the accuracy with which one computes the bound of the underlying map over the domain of interest and for vector-valued maps, the bounds were calculated component-wise using induction. \correction{The method needs computation of  $(m-1)\times (m-1)$ sub-determinants of a certain Jacobian matrix, where $m$ is the dimension of the underlying vector space, and since the bounds are an explicit function of these subdeterminants, they must be computed over all possible minors, due to which the method performs poorly with an increase in the dimension.} \correction{In the context of the ImFT for Lipschitz continuous maps}, a third set of bounds based on the \correction{bounds on the generalized derivatives of the underlying map computed over a domain of interest}, was provided by \correction{Papi~\cite{papi2005domain}} for the ImFT for Lipschitz continuous functions. Although the bounds provided by Papi in \correction{\cite{papi2005domain}}  cover a larger class of functions than those covered by \correction{Chang et al.~\cite{chang2003analytic}}, to compute these bounds one needs to ensure that the generalized derivatives of the underlying map evaluated at the point of interest are invertible, and the accuracy of the estimate is dependent on how tightly the bounds on the generalized derivatives are computed over a given domain.

For the IFT, \correction{Abraham et al.~\cite{abraham2012manifolds}} provide a set of bounds based on the magnitude of the first-order derivatives evaluated at the point of interest and the boundedness of the magnitude of second-order derivatives over a domain. The boundedness of the second-order derivative restricts the application of these bounds to functions that are continuously differentiable at least up to second order. These bounds require minimal numerical computations since one only needs to compute the bounds on the second-order derivatives, and thus are useful in situations where limited computation capacity is available \correction{and/or coarse structural information about the maps are at hand.} 

In this article, we utilize the topological degree to compute estimates of the neighborhoods given by the IFT and ImFT. 
Degree theoretic methods \correction{are by nature flexible, depend on a few basic properties of the underlying maps,} and require minimal assumptions on the premise; in particular, one only requires continuity of various maps over the underlying sets, \correction{ although for computational ease one can utilize higher order differential properties. The technique is inherently topological and permits appreciable flexibility in terms of appropriate homotopies (as we shall see in the sequel).} This makes the analysis applicable to a relatively large class of functions. \correction{Although the key results presented in this article are given for $\cont{2}$ functions, these bounds are further generalized for $\cont{1}$ (see Proposition \ref{Imft_holtzman} in Section \ref{key_res}) and $\cont{0}$ functions (see Proposition \ref{imft-c0} in Appendix \ref{appen1}).} 
\subsection{Contributions}
\begin{enumerate}[label=\textup{(\roman*)},leftmargin=*, widest=b, align=left]
\item We utilize the topological degree to compute lower bounds on the domain of  validity of the ImFT.  Note that our objective here is to arrive at bounds that require \textbf{minimal numerical computation}. For $\cont{2}$ maps, our bounds are dependent on the first-order derivatives evaluated at the point of interest and second-order derivatives on a bounded set containing \correction{the} point of interest. These estimates are reported in Theorem \ref{imft_est}. 
\item Using Theorem \ref{imft_est}, in Corollary \ref{ift_est}, we derive estimates on the size of the domain of validity of the IFT. \correction{For finite-dimensional spaces, these estimates \textbf{improve on} the estimates given by Abraham et al.~\cite[Proposition 2.5.6]{abraham2012manifolds}.}
\item \correction{We also demonstrate that several existing results on the estimates on the domain of applications of IFT and ImFT can also be derived by degree theoretic methods. In particular, we show that when restricted to finite-dimensional vector spaces, the bounds given Holtzman~\cite{holtzman1970explicit} (see Proposition \ref{Imft_holtzman} in Section \ref{key_res}) and  Abraham et al.~\cite{abraham2012manifolds} (see Proposition \ref{propamd} in Appendix \ref{appen}) can also be derived by elementary application of the mapping degree} 
\item As the last theoretical contribution of this article, in Proposition \ref{imft-c0}, we extend these bounds to the generalized ImFT for continuous maps. (However, uniqueness and regularity properties cannot be asserted in this framework.) 
    %
    \item \correction{We present two key applications of our bounds: As our first application, we investigate the robustness of the solutions of Quadratically Constrained Quadratic Problems. In this setting we present two examples: one is the robustness margin for the stable operation of a power system network, and the other is the robustness of the solutions of the Algebraic Riccati Equation, a nonlinear algebraic equation, frequently encountered in optimal control
    theory. In the second application we utilize our bounds to estimate the domain on which a given discrete time control system is feedback linearizable.}
\end{enumerate}
\correction{Finally, through this work we also intend to draw the attention of the readers to the gamut of tools offered by the mapping degree theory; we selected the ImFT and IFT because of their central importance in much of control theory.}
\subsection{Notations}   We use standard notations throughout the article. The set of real numbers is denoted by $\R$ and the set of integers is denoted by $\Z$. The set of positive integers is denoted by $\N$, and the set of all positive integers less than or equal to $n$ is denoted by $\n$.
\\ \\ 
For $U\subset \R[n]$ a nonempty set, $\cl{U}$ denotes the smallest closed set containing $U$ called the closure of $U$; $\inte{U}$ is the largest open set contained in $U$ called the interior of $U$; and $\bd{U} = (\cl{U})\setminus\inte{U}$ denotes the boundary of $U$. For a given $r>0$ and $x_0\in \R[n]$,$$\ball{x_0}{r}=\{x\in\R\mid\norm{x-x_0}<r \}$$ denotes the open ball of radius $r$ centered \correction{at} $x_0$, for some fixed well-defined norm $\norm{\cdot}$ on $\R[n]$. The identity matrix of order $n$ is denoted by $\eye{n}$; if the order is unambiguous from the expression, we may drop the subscript and simply write $\eye{}$. For given $n,m\in\N$, for $A\: \R[n]\lra\R[m]$ a linear map,  $\norm{A}$ is defined by
$$\norm{A} = \supr{\norm{Ax}}{x\in\correction{\cl}\ball{0}{1}\subset\R[n]}.$$
For a given $\r\in \N$ and nonempty set $U$ and $V$, $\contr{\r}{U}{V}$ denotes the class of \emph{$\r$-times continuously differentiable maps} with domain and co-domain as $U$ and $V$\correction{,} respectively. If $U$ and $V$ are unambiguous from the context, we may simply write $\cont{\r}$. The differential operator is denoted by $\D$. For a given $f\in \contr{\r}{U}{\R[m]}$, mapping $\R[n]\supset U\in x\mapsto f(x)\in\R[m]$, $x_0\in U$, $\D f(x_0)$ is the Jacobian matrix of $f$ evaluated at $x_0$. The partial derivatives are denoted by adding subscript to the differential operator i.e., $\pdv{}{x}\eqqcolon \pdf{}{x}$.
\section{Degree Theory: Preliminaries}
\label{deg_theory}
In order to derive the estimates for the domains of the ImFT and IFT, we utilize several results from topological degree theory. This section serves as a rapid refresher on the topological degree. For a detailed study, one may look into \cite{zeidler1993vol,outerelo2009mapping,dinca2021brouwer}. 
\begin{defn}(Topological Degree \cite[Chapter~IV,~Proposition and Definition 1.1]{outerelo2009mapping})
\label{top_deg}
\correction{
Let $U\subset\R[n]$ be a nonempty and bounded open set and $f\: \cl U\lra\R[n]$ be a smooth map. Suppose $0\in\R[n]\setminus f(\bd U)$ is a regular value of $f$. Then $\eqpoint[f] \coloneqq f^{-1}(0)$ is finite (possibly empty), and we define the degree of $f$ by 
\begin{equation}
\label{deg_f}
    \DEG{f}{U}{} = \sum_{x\in \eqpoint[f]}\sign(\det (\D f(x)))
\end{equation}
where $\R\ni z\mapsto \sign(z)\in\{-1,0,1\}$ is defined as 
\begin{equation*}
    \sign(z) = 
    \begin{cases}
        +1\quad &z>0,\\
        0\quad &z=0,\\
        -1\quad &z<0.
    \end{cases}
\end{equation*}
}
\end{defn}
\begin{defn}(Topological Degree (for continuous maps) \cite[Chapter~IV,~Proposition and Definition 2.1]{outerelo2009mapping}) \correction{Let $U\subset\R[n]$ be bounded and open. Let $f\:\cl U \lra\R[n]$ be a continuous map. Suppose $0\in\R[n]\setminus f(\bd U)$, then there exists a smooth mapping $g\: \cl U \lra \R[n]$ such that $0$ is a regular value of $g$ and for all $x\in\bd U$, $\norm{f(x)-g(x)}<\norm{g(x)}$. For all such $g$'s, $\DEG{g}{U}$ is same and we define 
\begin{equation}
    \DEG{f}{U} \coloneqq \DEG{g}{U}.
\end{equation}}
\end{defn}
\subsection{Properties of $\DEG{f}{U}$} \correction{Let U and $f$ be as in \ref{top_deg}, then $\DEG{f}{U}$ satisfies the following properties.}  
\begin{enumerate}[label=\textup{(\textbf{DEG}-\alph*)},leftmargin=*, widest=b, align=left]
    \item \textbf{Negative Existence Principle} \cite[Chapter 4~Corollary 2.5]{outerelo2009mapping}: If $f$ is nonvanishing on $\cl U$, i.e., $\eqpoint[f] = \emptyset$ then $\DEG{f}{U}=0$.
    \label{deg_NI}
    \end{enumerate}
    \begin{rmk}
    An immediate consequence of \ref{deg_NI} is: $\DEG{f}{U}\neq 0$ implies $\eqpoint[f]\neq \emptyset$ i.e., there exists at least one $x\in U$ such that $f(x)=0$. 
    \end{rmk}
    \begin{enumerate}[resume,label=\textup{(\textbf{DEG}-\alph*)},leftmargin=*, widest=b, align=left]
    \item \textbf{Additivity} \cite[Chapter 4~Corollary 2.5]{outerelo2009mapping}: Let $U_1,U_2\subset U$ be nonempty and bounded open sets. Suppose \correction{$U_1\cap U_2 =\emptyset$} and $f$ is non vanishing on $U\setminus\cl{(U_1\cup U_2)}$, then $$\DEG{f}{U} = \DEG{f}{{U_1}}+\DEG{f}{{U_2}}.$$
    \label{deg_AD}
\end{enumerate}
\begin{defn}(Nonvanishing Homotopy)
\correction{
Let $U\subset\R[n]$ be bounded and open. Let $[0,1]\times \cl U \ni (t,x)\mapsto H(t,x)\in\R[n]$ be continuous. Moreover, for each $t\in[0,1]$, $0\in\R[n]\setminus H(t,\bd U)$, then $H$ defines a nonvanishing homotopy on $\bd U$. Two given continuous maps $f_1\:\cl U\lra\R[n]$ and $f_2\:\cl U\lra\R[n]$ are said to be nonvanishingly homotopic on $\bd U$ if there exists a nonvanishing homotopy $H$ such that $H(0,\cdot) = f_1$ and $H(1,\cdot)= f_2$.}        
\end{defn}
\begin{enumerate}[resume,label=\textup{(\textbf{DEG}-\alph*)},leftmargin=*, widest=b, align=left]
    \item \textbf{Homotopy Invariance} \cite[Chapter 4~Proposition 2.4]{outerelo2009mapping}: Let $f_1:\cl{U}\lra\R[n]$ and $f_2:\cl{U}\lra\R[n]$ be two continuous maps. \correction{Suppose} $f_1 \And f_2$ are nonvanishingly homotopic on $\bd{U}$ then $$\DEG{f_1}{U} = \DEG{f_2}{U}.$$
    \label{deg_HI}
\end{enumerate}
\ref{deg_HI} plays an important role in computing the degree of arbitrary maps. The standard approach is as follows: for a given $f$, find $f_1$ such that $\DEG{f_1}{U}$ is known apriori or
easily computable, and $f$ and $f_1$ are nonvanishingly homotopic on $\bd{U}$. Use \ref{deg_NI} along with \ref{deg_HI} to comment on the existence of the equilibrium points of $f$ on $U$. 

We now state two key results for two functions to be nonvanishingly homotopic; these results will be useful in proving the main contributions of this article.
\begin{thm}\emph{(Poincar\'e-Bohl) \cite[Chapter~2, Theorem~2.1]{krasnoselskij1984geometrical}}  
\label{pbh}
Let $U\subset\R[n]$ \correction{be open and bounded} and $f_1,f_2$ be two continuous maps on $\cl U$ with $f_1(x)\neq0$, and $f_2(x)\neq0$ for all $x\in\bd{U}$. Suppose, for no $x\in \bd {U}$, $f_1(x)$ and $f_2(x)$ are \emph{anti-parallel}, i.e., 
$$\left\langle \frac{f_1(x)}{\norm{f_1(x)}},\frac{f_2(x)}{\norm{f_2(x)}}\right\rangle\neq -1\quad \forall~x\in \bd U.$$ Then $f_1$ and $f_2$ are nonvanishingly homotopic on $\bd U$ and the underlying homotopy is 
\begin{equation*}
    [0,1]\times \bd U\ni(t,x)\mapsto H(t,x)\coloneqq tf_1(x)+(1-t)f_2(x)\in\R[n].
\end{equation*}
Moreover, one has
\begin{equation*}
    \DEG{f_1}{U} = \DEG{f_2}{U}.
\end{equation*}
\end{thm}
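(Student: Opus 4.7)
The plan is to exhibit the affine combination $H(t,x) = t f_1(x) + (1-t) f_2(x)$ as a bona fide nonvanishing homotopy between $f_2$ (at $t=0$) and $f_1$ (at $t=1$), and then read off the degree equality from the homotopy invariance property \ref{deg_HI}. Continuity of $H$ on $[0,1] \times \cl U$ is immediate from the continuity of $f_1$ and $f_2$ together with the continuity of scalar multiplication and vector addition, so the entire content of the argument is packed into verifying that $H(t,x) \neq 0$ for every $(t,x) \in [0,1] \times \bd U$.

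I would argue this by contradiction. Fix $x_0 \in \bd U$ and $t_0 \in [0,1]$ with $H(t_0, x_0) = 0$. The endpoint cases $t_0 = 0$ and $t_0 = 1$ collapse immediately to $f_2(x_0) = 0$ and $f_1(x_0) = 0$ respectively, both ruled out by the standing hypothesis that $f_1$ and $f_2$ are nonvanishing on $\bd U$. For the interior case $t_0 \in (0,1)$, rearrange to
\[
t_0 f_1(x_0) = -(1-t_0)\, f_2(x_0),
\]
divide through by $t_0 > 0$, and then normalize by $\norm{f_1(x_0)}$; since both $t_0, 1-t_0 > 0$ and neither $f_1(x_0)$ nor $f_2(x_0)$ vanishes, one obtains
\[
\frac{f_1(x_0)}{\norm{f_1(x_0)}} = -\frac{f_2(x_0)}{\norm{f_2(x_0)}},
\]
equivalently $\bigl\langle f_1(x_0)/\norm{f_1(x_0)},\, f_2(x_0)/\norm{f_2(x_0)} \bigr\rangle = -1$, which contradicts the hypothesis that $f_1(x)$ and $f_2(x)$ are nowhere anti-parallel on $\bd U$.

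With the nonvanishing homotopy established, the degree equality $\DEG{f_1}{U} = \DEG{f_2}{U}$ is a one-line consequence of \ref{deg_HI}. There is no real obstacle to overcome in this argument; the only point that requires any care is the separate handling of the endpoints $t_0 \in \{0,1\}$, where rescaling the vector equation by $t_0$ or $1-t_0$ would be illegal, and where the contradiction must instead be read off directly from the nonvanishing of $f_1$ and $f_2$ on $\bd U$.
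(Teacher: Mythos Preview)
Your argument is correct and is the standard proof of the Poincar\'e--Bohl theorem. Note, however, that the paper does not supply its own proof of this statement: it is quoted as a known result from \cite[Chapter~2, Theorem~2.1]{krasnoselskij1984geometrical} and used as a tool in the sequel, so there is no paper-proof to compare against. Your case split on $t_0 \in \{0,1\}$ versus $t_0 \in (0,1)$, followed by the normalization step forcing the forbidden anti-parallel configuration, is exactly the textbook route, and the concluding appeal to \ref{deg_HI} is the intended one.
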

\begin{cor}\emph{\cite[Chapter~2, Theorem 2.3]{krasnoselskij1984geometrical}}
\label{cor_pbh}
Let $U\subset\R[n]$ be \correction{open and bounded} and $f_1:U\lra\R[n],~f_2:U\lra\R[n]$ be two continuous maps satisfying 
\begin{equation}
\label{eq_cor_pbh}
    \norm{f_1(x)-f_2(x)}<\norm{f_1(x)} \quad \forall~x\in \bd U.
\end{equation}
Then $f_1$ and $f_2$ are nonvanishingly homotopic on $\bd U$ with 
\begin{equation*}
    \DEG{f_1}{U} = \DEG{f_2}{U}.
\end{equation*}
\correction{Moreover, if $0\notin f_1(\bd U)$ and $0\notin f_2(\bd U)$, then one can replace \eqref{eq_cor_pbh} with $$\norm{f_1(x)-f_2(x)}\leq \norm{f_1(x)}\quad\forall~x\in\bd U,$$ and the assertion still holds.}
\end{cor}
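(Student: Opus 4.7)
The plan is to derive this corollary directly from the Poincaré--Bohl theorem (Theorem~\ref{pbh}) by verifying its two hypotheses: non-vanishing of $f_1$ and $f_2$ on $\bd U$, and the non-anti-parallel condition. Both hypotheses follow from the displayed norm inequality by elementary manipulation.

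\textbf{Non-vanishing step.} First I would show that neither $f_1$ nor $f_2$ can vanish on $\bd U$. If $f_1(x)=0$ for some $x\in\bd U$, then the hypothesis reads $\norm{f_2(x)}<0$, which is impossible. If $f_2(x)=0$, the hypothesis gives $\norm{f_1(x)}<\norm{f_1(x)}$, again impossible. Hence $0\notin f_1(\bd U)$ and $0\notin f_2(\bd U)$ are automatic consequences of the strict inequality.

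\textbf{Non-anti-parallel step.} Suppose, for contradiction, that $f_1(x)$ and $f_2(x)$ are anti-parallel at some $x\in\bd U$; then $f_2(x)=-\lambda f_1(x)$ for some $\lambda>0$. A direct computation gives
\begin{equation*}
\norm{f_1(x)-f_2(x)} = \norm{(1+\lambda)f_1(x)} = (1+\lambda)\norm{f_1(x)} \geq \norm{f_1(x)},
\end{equation*}
contradicting the strict hypothesis. Thus no anti-parallel pair can occur on $\bd U$. Invoking Theorem~\ref{pbh} then yields simultaneously the nonvanishing homotopy between $f_1$ and $f_2$ and the equality of degrees $\DEG{f_1}{U}=\DEG{f_2}{U}$.

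\textbf{The ``moreover'' extension.} Here the hypothesis is weakened to $\norm{f_1(x)-f_2(x)}\leq\norm{f_1(x)}$, but non-vanishing of $f_1$ and $f_2$ on $\bd U$ is assumed separately. The non-vanishing step is therefore no longer derivable from the inequality and must be imported from the assumption. For the non-anti-parallel step, the same computation as above shows that if $f_2(x)=-\lambda f_1(x)$ with $\lambda>0$ at a boundary point, then $\norm{f_1(x)-f_2(x)}=(1+\lambda)\norm{f_1(x)}>\norm{f_1(x)}$, which is a strict inequality (since $\lambda>0$ and $\norm{f_1(x)}>0$ by non-vanishing), contradicting the weak hypothesis. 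The conclusion again follows from Theorem~\ref{pbh}. There is no substantive obstacle; the only subtlety is to recognize that the strict-versus-non-strict distinction matters only at the stage of ensuring non-vanishing, which is why the ``moreover'' part must assume it.
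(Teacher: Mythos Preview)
Your proof is correct and is precisely the natural derivation of this corollary from the Poincar\'e--Bohl theorem. The paper itself does not supply a proof of Corollary~\ref{cor_pbh}; it merely cites it from \cite[Chapter~2, Theorem~2.3]{krasnoselskij1984geometrical} and moves on. Your argument---first extracting non-vanishing of $f_1$ and $f_2$ on $\bd U$ from the strict inequality, then ruling out anti-parallel pairs by the computation $(1+\lambda)\norm{f_1(x)}>\norm{f_1(x)}$, and finally invoking Theorem~\ref{pbh}---is exactly the standard route and matches what one finds in Krasnosel'skij--Zabrejko. The handling of the ``moreover'' clause is also sound: you correctly isolate that the strict inequality is needed only for the non-vanishing step, so once non-vanishing is assumed separately, the non-strict inequality still excludes anti-parallel vectors (since $(1+\lambda)\norm{f_1(x)}$ is \emph{strictly} larger than $\norm{f_1(x)}$ whenever $\lambda>0$ and $f_1(x)\neq 0$).

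One cosmetic point: in the strict case you write $(1+\lambda)\norm{f_1(x)}\geq\norm{f_1(x)}$; while this suffices to contradict the strict hypothesis, the inequality is in fact strict there too, and writing it as $>$ would make the parallel with the ``moreover'' part cleaner.
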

\section{Estimates for the IFT and ImFT}
\label{key_res}
The ImFT and IFT find applications in proving several existential results in mathematical analysis and also provide the basis for several engineering algorithms. These theorems have a rich history and have been studied with various degree of generalizations by several authors in various sources \cite{zeidler1993vol,zeidler2012applied1,spivak2018calculus,clarke1976inverse,lang2013undergraduate,krantz2002implicit,abraham2012manifolds}. We provide prototypical versions of the ImFT and IFT below:
\begin{thm} \emph{(Implicit  Function Theorem) \cite[Theorem~4.B]{zeidler1993vol}}
\label{imft1}
Let $U\subset\R[n]$ and $V\subset\R[m]$ be open and $U\times V\ni(x,y)\mapsto f(x,y)\in\R[m]$ be a \correction{$\cont{\nu}$ map where $\nu\geq1$}. Suppose $(x_0,y_0)\in U\times V$ is such that $\pdf[]{f}{y}(x_0,y_0)\: \R[m]\lra\R[m]$ is an isomorphism. Then for any neighborhood  $\Open{y_0}$ of $y_0$, there is a neighborhood $\Open{x_0}$ of $x_0$ and a \correction{$\cont{\nu}$} map $g\: \Open{x_0}\lra \Open{y_0}$ satisfying $f(x,g(x))=w_0\coloneqq f(x_0,y_0)$ for all $x\in \Open{x_0}$. \correction{ Furthermore, we have 
\begin{equation*}
 \D g(x) = -(\pdf{f}{y}(x,g(x))^{-1}\pdf{f}{x}(x,g(x)). 
\end{equation*}
}
\end{thm}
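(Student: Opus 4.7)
The plan is to apply the degree-theoretic framework of Section \ref{deg_theory} to the parametrized family $F_x := f(x,\cdot)$. After translating coordinates so that $x_0 = 0$, $y_0 = 0$, and $w_0 = 0$ (by replacing $f$ with $f - w_0$), the goal reduces to producing, for every $x$ in some $\Omega_{x_0} \ni 0$, a unique $y \in \Omega_{y_0}$ with $F_x(y) = 0$, together with a $\cont{\nu}$ map $g\colon x \mapsto y$. The strategy is: (a) compute $\DEG{F_0}{B}$ on a suitable ball $B \subset \Omega_{y_0}$; (b) transfer this degree to $F_x$ for $x$ near $0$ via Corollary \ref{cor_pbh}; and (c) isolate the zero to establish uniqueness and regularity.

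For (a) and (b), since $\pdf{f}{y}(0,0)$ is an isomorphism and $f$ is $\cont{\nu}$, continuity of the partial derivatives yields an $r > 0$ such that $B := \ball{0}{r} \subset \Omega_{y_0}$, $\pdf{f}{y}(0,y)$ is invertible on $\cl B$, and $F_0(y) \neq 0$ on $\bd B$ (the last point follows because $F_0(y) = \pdf{f}{y}(0,0)y + o(\norm{y})$, so $0$ is an isolated zero). Definition \ref{top_deg} then gives $\DEG{F_0}{B} = \sign(\det \pdf{f}{y}(0,0)) \in \{-1, +1\}$. By compactness, $\delta := \min_{y \in \bd B} \norm{F_0(y)} > 0$, and by uniform continuity of $f$ on a compact neighborhood of $\{0\} \times \cl B$ there is an open $\Omega_{x_0} \ni 0$ with $\norm{F_x(y) - F_0(y)} < \delta \leq \norm{F_0(y)}$ for every $x \in \Omega_{x_0}$ and $y \in \bd B$. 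Corollary \ref{cor_pbh} then yields $\DEG{F_x}{B} = \DEG{F_0}{B} \neq 0$, and \ref{deg_NI} produces a zero $y \in B$ of $F_x$.

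The main obstacle is uniqueness and regularity, because the degree records only a signed count of zeros. For uniqueness I would shrink $r$ and $\Omega_{x_0}$ (if necessary) so that
\begin{equation*}
\norm{\pdf{f}{y}(x,y) - \pdf{f}{y}(0,0)} < \norm{\pdf{f}{y}(0,0)^{-1}}^{-1} \quad \text{on } \Omega_{x_0} \times \cl B.
\end{equation*}
For any two zeros $y_1, y_2 \in B$ of $F_x$, the fundamental theorem of calculus gives
\begin{equation*}
0 = F_x(y_2) - F_x(y_1) = \left(\int_0^1 \pdf{f}{y}\bigl(x, y_1 + t(y_2 - y_1)\bigr)\,dt\right)(y_2 - y_1),
\end{equation*}
and the integrand is a uniform perturbation of $\pdf{f}{y}(0,0)$ within the above tolerance, so the bracketed operator is invertible and $y_1 = y_2$. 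Define $g(x)$ as this unique zero; continuity of $g$ at $0$ follows because any cluster point of $g(x_k)$ for $x_k \to 0$ lies in $\cl B$ and is a zero of $F_0$, hence equals $0$. The formula $\D g(x) = -\pdf{f}{y}(x,g(x))^{-1}\pdf{f}{x}(x,g(x))$ and $\cont{\nu}$ regularity of $g$ follow by the standard implicit linearization: differentiate $f(x,g(x)) \equiv 0$, invert $\pdf{f}{y}$, and iterate the argument using smoothness of $f$ and of matrix inversion.
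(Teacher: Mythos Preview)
The paper does not supply its own proof of Theorem~\ref{imft1}; it is quoted from Zeidler as a reference result, and the paper's degree-theoretic machinery is deployed instead on the \emph{quantitative} version, Theorem~\ref{imft_est}. So there is no proof in the paper to compare against directly. That said, your strategy is precisely the one the paper uses for Theorem~\ref{imft_est}: define $F(\cdot\,;x)$, exhibit a nonvanishing homotopy to a map of known degree via Corollary~\ref{cor_pbh}, invoke \ref{deg_NI} for existence, and then argue uniqueness from invertibility of $\pdf{f}{y}$ on the relevant ball. Two differences are worth noting. First, the paper homotopes $F(\cdot\,;x)$ to its \emph{affine approximation} $\Fapx{x}{\cdot}$, whereas you homotope $F_x$ to $F_0$; both work, and yours is arguably cleaner when no explicit radii are needed. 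Second, for uniqueness the paper (Claims~3--4 in the proof of Theorem~\ref{imft_est}) shows $\pdf{f}{y}$ is invertible throughout, invokes the Inverse Function Theorem to get discreteness of the zero set, and then reads off $N=1$ from the degree; your integral mean-value argument is more elementary and avoids that detour.

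There is, however, a structural point you should be aware of. In the paper's proofs (Theorem~\ref{imft_est}, Proposition~\ref{Imft_holtzman}, Corollary~\ref{imftc1}) the $\cont{\nu}$ regularity of $g$ is obtained by \emph{invoking} Theorem~\ref{imft1} as a black box; the degree argument is used only for existence and uniqueness on explicit balls. Since you are proving Theorem~\ref{imft1} itself, you cannot do this, and your final paragraph needs more care. As written, ``differentiate $f(x,g(x))\equiv 0$'' presupposes that $g$ is differentiable, which is exactly what must be shown. The honest route is: from continuity of $g$ (which you should state at every point of $\Omega_{x_0}$, not just at $0$; the same cluster-point argument works verbatim), expand
\[
0 = f(x+h,g(x+h)) - f(x,g(x)) = \pdf{f}{x}(x,g(x))\,h + \pdf{f}{y}(x,g(x))\,(g(x+h)-g(x)) + o(\norm{h}+\norm{g(x+h)-g(x)}),
\]
use invertibility of $\pdf{f}{y}(x,g(x))$ together with continuity of $g$ to absorb the little-$o$ term, and conclude $g(x+h)-g(x) = -\pdf{f}{y}^{-1}\pdf{f}{x}\,h + o(\norm{h})$. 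The formula for $\D g$ and the bootstrap to $\cont{\nu}$ then follow as you indicate.
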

\begin{thm} \emph{(Inverse Function Theorem) \cite[Theorem~2.5.2]{abraham2012manifolds}} 
\label{ift}
Let $U\subset\R[n]$ be a nonempty open set. Let $U\ni x\mapsto f(x)\in\R[n]$ be a \correction{$\cont{\nu}$ map with $\nu\geq1$}. Suppose $x_0\in U$ is such that $\D f(x_0)\: \R[n]\lra\R[n]$ is an isomorphism. Then there exists a neighborhood $\openn$ of $x_0$ such that $f\restrto{\openn}$ is a \correction{$\cont{\nu}$ \emph{diffeomorphism}} to its image. Moreover, for $(x,y)\in \openn\times f(\openn)$ satisfying $y=f(x)$, one has
\begin{equation*}
 \D f^{-1}(y) = (\D f(x))^{-1}.   
\end{equation*}
\end{thm}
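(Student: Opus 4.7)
The plan is to prove the IFT using the mapping degree, consistent with the methodology championed in the paper. I would first reduce to a normalized setting, then invoke Corollary \ref{cor_pbh} together with \ref{deg_NI} to obtain local surjectivity onto a small ball, use a contraction-type estimate for local injectivity, and finally bootstrap the chain rule to obtain $\cont{\nu}$ regularity of the inverse.

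\emph{Reduction.} Replace $f$ by $x \mapsto (\D f(x_0))^{-1}\bigl(f(x+x_0)-f(x_0)\bigr)$; this preserves the $\cont{\nu}$ class and reduces the hypothesis to $x_0 = 0$, $f(0)=0$, and $\D f(0) = \eye{n}$. Any conclusion drawn for the normalized map translates back by an affine change of variables.

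\emph{Local bijectivity via degree.} By continuity of $\D f$, choose $r>0$ with $\cl{\ball{0}{r}} \subset U$ and $\norm{\D f(x) - \eye{n}} \le \tfrac{1}{2}$ on $\cl{\ball{0}{r}}$. Setting $g(x) \coloneqq f(x) - x$, the mean-value inequality yields $\norm{g(x) - g(x')} \le \tfrac{1}{2}\norm{x-x'}$ on $\cl{\ball{0}{r}}$; this immediately gives $\norm{f(x) - f(x')} \ge \tfrac{1}{2}\norm{x-x'}$, so $f$ is injective on $\cl{\ball{0}{r}}$. For surjectivity, fix $y$ with $\norm{y} < r/2$ and define $F_y \coloneqq f - y$ on $\cl{\ball{0}{r}}$. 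On the boundary,
\begin{equation*}
\norm{F_y(x) - x} \;=\; \norm{g(x) - y} \;\le\; \tfrac{1}{2}r + \norm{y} \;<\; r \;=\; \norm{x},
\end{equation*}
so Corollary \ref{cor_pbh} gives $\DEG{F_y}{\ball{0}{r}} = \DEG{\mathrm{id}}{\ball{0}{r}} = 1$. By \ref{deg_NI}, $F_y$ vanishes somewhere in $\ball{0}{r}$, i.e.\ $y \in f(\ball{0}{r})$. Taking $\openn \coloneqq \ball{0}{r} \cap f^{-1}(\ball{0}{r/2})$ yields an open neighborhood of $0$ on which $f$ is a bijection onto $\ball{0}{r/2}$, with Lipschitz inverse of constant $2$.

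\emph{Regularity.} Since $\D f$ stays within $\tfrac{1}{2}$ of $\eye{n}$ on $\openn$, it is invertible everywhere on $\openn$. The Lipschitz continuity of $f^{-1}$ together with differentiability of $f$ then shows, by the standard $o(\cdot)$ argument, that $f^{-1}$ is differentiable at each $y = f(x)$ with $\D f^{-1}(y) = (\D f(x))^{-1}$. Bootstrapping the identity $\D f^{-1} = (\D f \circ f^{-1})^{-1}$ via smoothness of matrix inversion raises the regularity of $f^{-1}$ inductively up to $\cont{\nu}$. The main obstacle, and the only place topology enters, is the surjectivity step; injectivity and regularity are routine consequences of the uniform bound $\norm{\D f - \eye{n}} \le \tfrac{1}{2}$ on $\cl{\ball{0}{r}}$.
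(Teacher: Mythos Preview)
The paper does not give its own proof of Theorem~\ref{ift}; the statement is quoted verbatim as the classical Inverse Function Theorem from \cite[Theorem~2.5.2]{abraham2012manifolds} and is used throughout as a black box (for instance, in Claim~4 of the proof of Theorem~\ref{imft_est} to get discreteness of the zero set, and in the final regularity step there). There is therefore no proof in the paper to compare your proposal against.

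That said, your argument is correct and very much in the paper's spirit. Your surjectivity step---compare $F_y=f-y$ to the identity on $\bd\ball{0}{r}$ via Corollary~\ref{cor_pbh} and invoke \ref{deg_NI}---is exactly the template the paper uses for its \emph{quantitative} results: the existence part of Theorem~\ref{imft_est} and the proof of Corollary~\ref{ift_est} (which applies Theorem~\ref{imft_est} to $\psi(x,y)=f(x)-y$) run on the same mechanism, only with an explicit affine approximation $\Fapx{x}{\cdot}$ and second-order remainder bounds in place of your qualitative $\norm{\D f-\eye{}}\le\tfrac12$. Your injectivity and regularity steps are the standard contraction/bootstrap arguments; the paper, by contrast, obtains uniqueness in Theorem~\ref{imft_est} degree-theoretically (nonsingularity of $\pdf{f}{y}$ on the ball forces the degree $\pm1$ to equal the signed count of zeros, hence exactly one zero), and defers regularity to the cited Theorem~\ref{ift} itself.
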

Theorem \ref{imft1} and \ref{ift} \correction{contain assertions about the existence} of certain neighborhoods, but they do not give any information about the sizes of these neighborhoods. Our objective is to arrive at a lower bound on the sizes of these neighborhoods. For this, we use results from degree theory cataloged in Section \ref{deg_theory}. The following \correction{Theorem} provides one set of lower \correction{bounds on the sizes of} $\Open{x_0}$ and $\Open{y_0}$ defined in Theorem \ref{imft1}.
\begin{thm}
\label{imft_est}
Let $U\subset\R[n]$ and $V\subset\R[m]$ be open and $U\times V\ni (x,y)\mapsto f(x,y)\in\R[m]$ be as in Theorem \ref{imft1} and \correction{in addition let $f$ be $\cont{2}$}. Define $$M_y \coloneqq \norm{(\pdf{f}{y}(x_0,y_0))^{-1}},\And L_x \coloneqq \norm{\pdf{f}{x}(x_0,y_0)}.$$ For given $R_x>0~,\And R_y>0$ \textcolor{blue}{such that $\ball{x_0}{R_x}\subset U$ and $\ball{y_0}{R_y}\subset V$} define
\begin{equation*}
\begin{split}
    K_{xx}&\coloneqq\supr{\correction{\norm{\pdf[2]{f}{x}(x,y)}}}{(x,y)\in\ball{x_0}{R_x}\times\ball{y_0}{R_y}},\\
    K_{xy}&\coloneqq\supr{\correction{\norm{\pdf{f}{x,y}(x,y)}}}{(x,y)\in\ball{x_0}{R_x}\times\ball{y_0}{R_y}}, \And\\
    K_{yy}&\coloneqq\supr{\correction{\norm{\pdf[2]{f}{y}(x,y)}}}{(x,y)\in\ball{x_0}{R_x}\times\ball{y_0}{R_y}}.\\
\end{split}    
\end{equation*}
Then for all $0<\rx<R_x$ and $0<\ry<R_y$ satisfying
\begin{subequations}\label{eqn:epsbnd}
      \begin{align}
       \correction{\frac{1}{2}K_{xx}\rx^2+K_{xy}\rx\ry+\frac{1}{2}K_{yy}\ry^2}&<\correction{\frac{\ry}{M_y}-\rx L_x}\qand \label{subeqn-1:epsbnd} \\
        K_{xy}\rx+K_{yy}\ry&<\frac{1}{M_y}, \label{subeqn-2:epsbnd} 
    \end{align}
    \end{subequations}
for each $x\in\ball{x_0}{\rx}$, there exists a unique $y_x\in\ball{y_0}{\ry}$ satisfying $f(x,y_x)=w_0$. \correction{Furthermore, the map $\ball{x_0}{\rx}\ni x\mapsto y_x \eqqcolon g(x)\in\ball{y_0}{\ry}$ is $\cont{2}$ and $$\D g(x) = -(\pdf{f}{y}(x,y_x))^{-1}\pdf{f}{x}(x,y_x).$$}  
\end{thm}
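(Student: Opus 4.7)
The plan is to fix an arbitrary $x\in\ball{x_0}{\rx}$ and reduce the existence of an implicit value $y_x\in\ball{y_0}{\ry}$ with $f(x,y_x)=w_0$ to a mapping-degree computation. I set $F_x\colon\cl{\ball{y_0}{\ry}}\lra\R[m]$ by $F_x(y)\coloneqq f(x,y)-w_0$, and use as a reference the linearization $L_0(y)\coloneqq \pdf{f}{y}(x_0,y_0)(y-y_0)$. Since $L_0$ is a linear isomorphism whose unique zero $y_0$ lies in $\ball{y_0}{\ry}$, Definition~\ref{top_deg} gives $\DEG{L_0}{\ball{y_0}{\ry}}=\pm 1$. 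On $\bd{\ball{y_0}{\ry}}$ I shall use the elementary lower bound $\norm{L_0(y)}\geq \ry/M_y$ that follows from $\norm{Av}\geq\norm{v}/\norm{A^{-1}}$.

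The core calculation is a $\cont{2}$ Taylor expansion of $f$ around $(x_0,y_0)$ with integral remainder, which I plan to rewrite as
\begin{equation*}
F_x(y)-L_0(y)=\pdf{f}{x}(x_0,y_0)(x-x_0)+R(x,y),
\end{equation*}
with the second-order term controlled by $\norm{R(x,y)}\leq \tfrac{1}{2}K_{xx}\rx^2+K_{xy}\rx\ry+\tfrac{1}{2}K_{yy}\ry^2$ whenever $(x,y)\in\cl{\ball{x_0}{\rx}}\times\cl{\ball{y_0}{\ry}}$. Assumption \eqref{subeqn-1:epsbnd} then reads exactly as $\norm{F_x(y)-L_0(y)}<\norm{L_0(y)}$ for $y\in\bd{\ball{y_0}{\ry}}$, so Corollary \ref{cor_pbh} makes $F_x$ and $L_0$ nonvanishingly homotopic on the boundary. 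Homotopy invariance \ref{deg_HI} forces $\DEG{F_x}{\ball{y_0}{\ry}}=\pm 1$, and \ref{deg_NI} (in the contrapositive form recorded in the remark after it) produces at least one $y_x\in\ball{y_0}{\ry}$ with $f(x,y_x)=w_0$.

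Uniqueness and invertibility of $\pdf{f}{y}$ throughout the box are handled by \eqref{subeqn-2:epsbnd}. A one-line Neumann-series estimate shows that for every $(x,y)\in\ball{x_0}{\rx}\times\ball{y_0}{\ry}$ one has $M_y\,\norm{\pdf{f}{y}(x,y)-\pdf{f}{y}(x_0,y_0)}\leq M_y(K_{xy}\rx+K_{yy}\ry)<1$, so $\pdf{f}{y}(x,y)$ is invertible. If $y_1,y_2\in\ball{y_0}{\ry}$ both solve $f(x,\cdot)=w_0$, the fundamental theorem of calculus along the segment $[y_1,y_2]$ gives $0=M(y_2-y_1)$ with $M\coloneqq\int_0^1 \pdf{f}{y}(x,y_1+t(y_2-y_1))\,dt$; the same Neumann bound applied to $M$ makes it invertible, forcing $y_1=y_2$. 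With existence and uniqueness in hand, the classical Theorem \ref{imft1}, applied locally at each $x\in\ball{x_0}{\rx}$ (legitimate because $\pdf{f}{y}(x,g(x))$ is invertible), produces a $\cont{2}$ local implicit function which, by uniqueness, must coincide with $g$; differentiating the identity $f(x,g(x))\equiv w_0$ then yields the asserted formula for $\D g(x)$.

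The main obstacle I anticipate is the Taylor bookkeeping: one must verify that the integral remainder genuinely produces the coefficients $\tfrac{1}{2}K_{xx}$, $K_{xy}$, $\tfrac{1}{2}K_{yy}$ rather than coarser ones, and that every intermediate point entering those integrals lies in $\ball{x_0}{\rx}\times\ball{y_0}{\ry}$, where the second-derivative bounds are actually valid. The degree-theoretic scaffolding — a Poincar\'e--Bohl homotopy combined with \ref{deg_HI} and \ref{deg_NI} — then runs routinely.
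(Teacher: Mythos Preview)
Your proposal is correct and follows the same degree-theoretic template as the paper: a second-order Taylor estimate combined with Corollary~\ref{cor_pbh} to transfer the degree from a linear reference map to $F_x$, then \ref{deg_NI} for existence. Two minor but worthwhile differences from the paper's argument are worth recording. First, the paper takes as reference the full affine map $\hat F(y;x)=\pdf{f}{x}(x_0,y_0)\tilde x+\pdf{f}{y}(x_0,y_0)\tilde y$ and must separately verify that its unique zero $y^*$ actually lands in $\ball{y_0}{\ry}$ before computing its degree; your choice $L_0(y)=\pdf{f}{y}(x_0,y_0)\tilde y$ has its zero trivially at $y_0$, and you simply absorb the $L_x\rx$ term into the ``small'' side of the Poincar\'e--Bohl inequality, which is the same inequality \eqref{subeqn-1:epsbnd} rearranged. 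Second, for uniqueness the paper argues via degree counting (all zeros of $F_x$ in the ball are regular with a common sign of the Jacobian determinant, so $\lvert\deg\rvert$ equals their number, forcing $N=1$), whereas your mean-value-plus-Neumann argument on $M=\int_0^1\pdf{f}{y}(x,y_1+t(y_2-y_1))\,dt$ is more elementary and sidesteps the intermediate step of showing the zero set is finite. Both variants are valid; yours is slightly more direct.
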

\correction{Before we prove Theorem \ref{imft_est}, we state the finite-dimensional version of Lemma 2.5.4  from \cite{abraham2012manifolds} which shall be of use in proving Theorem \ref{imft_est} and several results to follow.}
\begin{lem}
\emph{\cite[Lemma~2.5.4]{abraham2012manifolds}} 
\label{lem_mat_inverse}
\correction{
Let $\mathrm{GL}(n,\R)$ be the set of all linear isomorphisms on $\R[n]$. Then $\mathrm{GL}(n,\R)$ is open. Moreover, for any $M\in \mathrm{GL}(n,\R)$ and a linear map $B:E\to F$ with $\norm{B}<1/\norm{M}$, $M+B\in\mathrm{GL}(n,\R)$.}  
\end{lem}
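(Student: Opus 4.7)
The plan is to establish both assertions via a single classical Neumann series / perturbation argument: proving the invertibility of $M+B$ under the smallness hypothesis on $B$ immediately yields the openness of $\mathrm{GL}(n,\R)$ as a corollary.

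For the perturbation claim, I would factor
\[
M + B \;=\; M\bigl(\eye{n} + M^{-1}B\bigr),
\]
which reduces the invertibility of $M+B$ to that of the identity-plus-perturbation operator $\eye{n} + M^{-1}B$. Sub-multiplicativity of the operator norm gives $\|M^{-1}B\|\le\|M^{-1}\|\|B\|$, so the smallness hypothesis on $B$ forces $\|M^{-1}B\|<1$. The Neumann series
\[
\bigl(\eye{n}+M^{-1}B\bigr)^{-1} \;=\; \sum_{k=0}^{\infty}(-1)^{k}(M^{-1}B)^{k}
\]
then converges absolutely in the operator norm—the partial sums being Cauchy by comparison with the convergent geometric series $\sum_{k}\|M^{-1}B\|^{k}$—and the limit is seen to be a genuine two-sided inverse of $\eye{n}+M^{-1}B$ by telescoping. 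Composing on the right with $M^{-1}$ produces the explicit inverse $(\eye{n}+M^{-1}B)^{-1}M^{-1}$ of $M+B$, so $M+B\in\mathrm{GL}(n,\R)$.

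For openness, note that the perturbation claim is exactly the statement that every $M\in\mathrm{GL}(n,\R)$ is the center of a positive-radius open ball (in the operator-norm topology) that is entirely contained in $\mathrm{GL}(n,\R)$; this is the definition of openness. Alternatively, and specific to the finite-dimensional setting at hand, one may invoke the continuity of the determinant, which is a polynomial in the matrix entries, and realize $\mathrm{GL}(n,\R) = \det^{-1}(\R\setminus\{0\})$ as the preimage of an open set under a continuous map. The only technical content of the whole argument is the absolute convergence of the Neumann series, which is entirely routine once the strict bound $\|M^{-1}B\|<1$ is in hand; I therefore do not anticipate any genuine obstacle in the proof.
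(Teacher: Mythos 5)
The paper does not actually prove this lemma; it is stated verbatim (with a citation) as Lemma~2.5.4 of Abraham–Marsden–Ratiu, so there is no in-paper argument to compare against. Your Neumann-series / geometric-series argument is precisely the standard proof, and it is what the cited source does, so methodologically you are on target.

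One point you should not gloss over: as printed, the hypothesis $\norm{B}<1/\norm{M}$ does not yield $\norm{M^{-1}B}<1$. Sub-multiplicativity gives $\norm{M^{-1}B}\le\norm{M^{-1}}\norm{B}<\norm{M^{-1}}/\norm{M}$, and since $\norm{M^{-1}}\norm{M}\ge\norm{M^{-1}M}=1$ only forces $\norm{M^{-1}}\ge 1/\norm{M}$, the ratio $\norm{M^{-1}}/\norm{M}$ can exceed $1$. Indeed with $M=\operatorname{diag}(1/2,1/3)$ and $B=\operatorname{diag}(0,-1/3)$ one has $\norm{B}=1/3<2=1/\norm{M}$ yet $M+B$ is singular, so the lemma is false as typeset. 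The intended (and classical) hypothesis is $\norm{B}<1/\norm{M^{-1}}$, which is exactly what your chain of inequalities silently assumes; you should state explicitly that you are reading the bound as $\norm{B}<1/\norm{M^{-1}}$ (correcting what is evidently a typographical slip carried over from the source), after which every step of your argument — factorization $M+B=M(\eye{n}+M^{-1}B)$, absolute convergence of the Neumann series by comparison with $\sum_k\norm{M^{-1}B}^k$, and the telescoping verification of the two-sided inverse — is sound. Your two routes to openness (the perturbation bound itself furnishing an open ball around each point, versus $\mathrm{GL}(n,\R)=\det^{-1}(\R\setminus\{0\})$) are both correct; the first is the one that generalizes to Banach spaces, which is the setting of the cited source.
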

\begin{proof}[Proof of Theorem~{\upshape\ref{imft_est}}]
\correction{Assume $K_{xx}$, $K_{xy}$ and $L_x$ are not all zero.}  Let $\rx$ and $\ry$ satisfy \eqref{eqn:epsbnd}. Fix an arbitrary $x\in\ball{x_0}{\rx}$. Our objective here is to show the existence of a unique $y_x\in\ball{y_0}{\ry}$ satisfying $f(x,y_x) =w_0.$ Keeping $x$ fixed, we define  
\begin{equation}
\label{Fx}
V\ni y \mapsto \F{x}{y}\coloneqq f(x,y)-w_0\in\R[m].
\end{equation}
The problem of finding some $y_x$ satisfying $f(x,y_x) = w_0$ is now equivalent \correction{to} finding $y_x$ such that $\F{x}{y_x}=0$. The proof is set into two parts.
\\ \\
\textbf{Existence} : For ease of representation we define $\Tilde{x} \coloneqq x-x_0$ and $\Tilde{y} \coloneqq y-y_0$. Define an affine approximation of $F$ about $(x_0,y_0)$ as  
\begin{equation}
\label{Fx1}
  y\mapsto \Fapx{x}{y}\coloneqq  f(x_0,y_0)+\pdf[]{f}{x}(x_0,y_0)\Tilde{x}
+\pdf[]{f}{y}(x_0,y_0)\Tilde{y}-w_0\in\R[m].  
\end{equation}
\textit{Claim 1: $\F{x}{\cdot}$ and $\Fapx{x}{\cdot}$ are nonvanishingly homotopic on $\bd{\ball{y_0}{\ry}}$} : 
From the definition of $\Fapx{x}{\cdot}$ we have 
\begin{equation*}
\begin{split}
\F{x}{y}-\Fapx{x}{y} &= f(x,y)-\big(f(x_0,y_0)+\pdf{f}{x}(x_0,y_0)\Tilde{x}+\pdf{f}{y}(x_0,y_0)\Tilde{y}\big)\\
&=(f(x,y)-f(x_0,y_0))-\D f(x_0,y_0)\cdot\pmat{\Tilde{x}\\ \Tilde{y}}
\end{split}
\end{equation*}
Using the fundamental theorem of calculus \cite[Theorem~6.24]{rudin1964principles} we write
\begin{equation*}
\begin{split}
\F{x}{y}-\Fapx{x}{y} &=\int_0^1 \D f(x_0+s\Tilde{x},y_0+s\Tilde{y})\cdot\pmat{\Tilde{x}\\\Tilde{y}}\d s - \D f(x_0,y_0)\cdot\pmat{\Tilde{x}\\ \Tilde{y}}\\
&=\int_{0}^{1}\int_{0}^{1} \D^2f(x_0+st\Tilde{x},y_0+st\Tilde{y})\cdot((s\Tilde{x},s\Tilde{y}),(\Tilde{x},\Tilde{y}))\d t \d s. 
\end{split}
\end{equation*}
From the bounds on $\pdf[2]{f}{x},~\pdf[]{f}{x,y}$, and $\pdf[2]{f}{y}$, we have 
\begin{equation*}
\norm{\F{x}{y}-\Fapx{x}{y}}\leq \frac{1}{2}K_{xx}\norm{\Tilde{x}}^2 + K_{xy}\norm{\Tilde{x}}\norm{\Tilde{y}} +\frac{1}{2}K_{yy}\norm{\Tilde{y}}^2.
\end{equation*}
Since $x\in\ball{x_0}{\rx}$ we have $\norm{\Tilde{x}}<\rx$ and for all $y\in\bd{\ball{y_0}{\ry}}$ we have
\begin{equation}
\label{bnd1_fg}
\norm{\F{x}{y}-\Fapx{x}{y}}<\frac{1}{2}K_{xx}\rx^2 + K_{xy}\rx\ry +\frac{1}{2}K_{yy}\ry^2.
\end{equation}
Moreover, 
\begin{equation}
\label{bnd2_fg}
\begin{split}
\norm{\Fapx{x}{y}} &= \norm{f(x_0,y_0)+\pdf{f}{x}(x_0,y_0)\Tilde{x}+\pdf{f}{y}(x_0,y_0)\
\Tilde{y}-w_0}\\
&=\norm{\pdf{f}{x}(x_0,y_0)\Tilde{x}+\pdf{f}{y}(x_0,y_0)\
\Tilde{y}}\\
&\geq \left\lvert\norm{\pdf{f}{y}(x_0,y_0){\Tilde{y}}} - \norm{\pdf{f}{x}(x_0,y_0){\Tilde{x}}}\right\lvert\\
&\geq \frac{1}{M_y}\norm{\Tilde{y}}-L_x \norm{\Tilde{x}}\\
& >\frac{1}{M_y}\ry-L_x\rx ~ \forall y\in\bd\ball{y_0}{\ry}.
\end{split}
\end{equation}
\textcolor{blue}{We restate \eqref{subeqn-1:epsbnd} for convenience. 
\begin{equation*}
    \frac{1}{2}K_{xx}\rx^2 + K_{xy}\rx\ry +\frac{1}{2}K_{yy}\ry^2<\frac{\ry}{M_y}-L_x\rx.
\end{equation*}}
Thus, from \eqref{subeqn-1:epsbnd}, \eqref{bnd1_fg},  and \eqref{bnd2_fg} we have
\begin{equation}
    \norm{\F{x}{y}-\Fapx{x}{y}}\textcolor{blue}{<}\norm{\Fapx{x}{y}} ~ \forall y\in\bd\ball{y_0}{\ry},
\end{equation}
and therefore from Corollary \ref{cor_pbh} it follows that $\F{x}{\cdot}$ is nonvanishingly homotopic to $\Fapx{x}{\cdot}$ on $\bd\ball{y_0}{\ry}$ with
\begin{equation}
\label{deg_fg}
    \DEG{\F{x}{\cdot}}{\ball{y_0}{\ry}} = \DEG{\Fapx{x}{\cdot}}{\ball{y_0}{\ry}}.
\end{equation}
\textit{Claim 2: $\DEG{\Fapx{x}{\cdot}}{\ball{y_0}{\ry}} \in \{-1,1\}$} : Let $y^*\in\R[m]$ be such that $$\Fapx{x}{y^*} = 0$$ (existence and uniqueness of such a $y^*$ is guaranteed by the fact that $\Fapx{x}{\cdot}$ is an affine map in $y$). Then we have
\begin{equation*}
    y^*-y_0 = -\pdf{f}{y}(x_0,y_0)^{-1}\pdf{f}{x}(x_0,y_0)(x-x_0).
\end{equation*}
Taking norms\correction{, we get}
\begin{equation*}
\begin{split}
\norm{y^*-y_0} &= \norm{-\pdf{f}{y}(x_0,y_0)^{-1}\pdf{f}{x}(x_0,y_0)\Tilde{x}}\\
&\leq \norm{\pdf{f}{y}(x_0,y_0)^{-1}}\norm{\pdf{f}{x}(x_0,y_0)}\norm{\Tilde{x}}\\
&\leq M_yL_x\norm{\Tilde{x}}\\
& \leq M_yL_x\rx.
\end{split}
\end{equation*}
Rewriting \eqref{subeqn-1:epsbnd} as
\begin{equation*}
    \ry>M_yL_x\rx+ M_y\left(\frac{1}{2}K_{\textcolor{blue}{xx}}\rx^2+K_{xy}\rx\ry+\frac{1}{2}K_{yy}\ry^2\right),
\end{equation*}
we have 
\begin{equation}
    \norm{y^*-y_0}\leq M_yL_x\rx<\ry,
\end{equation}
which implies $y^*\in\ball{y_0}{\ry}$ and therefore 
\begin{equation}
\label{deg_fg_apx}
\begin{split}
\DEG{\Fapx{x}{\cdot}}{\ball{y_0}{\ry}} &= \sign(\det({\D\Fapx{x}{y^*}})) \\
&= \sign(\det({\pdf{f}{y}(x_0,y_0)}))\in\{-1,1\}.
\end{split}
\end{equation}
From \eqref{deg_fg} we  know $\DEG{\F{x}{\cdot}}{\textcolor{blue}{\ball{y_0}{\ry}}} \in\{-1,1\} \neq 0.$ \ref{deg_NI} now asserts the existence of a $y_x\in\ball{y_0}{\ry}$ satisfying
\begin{equation*}
    \F{x}{y_x} \coloneqq f(x,y_x) = 0.
\end{equation*}
\textbf{Uniqueness} : To complete the proof, all that remains is to show the uniqueness of $y_x$. 
\\ \\
\textit{Claim 3: $\pdf{f}{y}(x,y)$ is invertible for all $(x,y)\in\ball{x_0}{\rx}\times\ball{y_0}{\ry}$} : From the fundamental theorem of calculus \cite[Theorem~6.24]{rudin1964principles} we have
\begin{equation*}
    \pdf{f}{y}(x,y) = \pdf{f}{y}(x_0,y_0)\big(\eye{}+\pdf{f}{y}(x_0,y_0)^{-1}M(x,y)\big), 
\end{equation*}
where $$M(x,y)\coloneqq \int_{0}^{1}\left(\pdf{f}{x,y}(x_0+s\x,y_0+s\y)\x +\pdf[2]{f}{y}(x_0+s\x,y_0+s\y)\y \right)\d s.$$ Moreover, 
\begin{equation*}
\begin{split}
\norm{M(x,y)}&=\norm{\int_{0}^{1}\left(\pdf{f}{x,y}(x_0+s\x,y_0+s\y)\x+\pdf[2]{f}{y}(x_0+s\x,y_0+s\y)\y\right)\d s}\\
&\leq K_{xy}\norm{\x}+K_{yy}\norm{\y}\\
&< K_{xy}\rx+K_{yy}\ry~ \forall (x,y)\in\ball{x_0}{\rx}\times\ball{y_0}{\ry}.    
\end{split}    
\end{equation*}
From \eqref{subeqn-2:epsbnd} we have
\begin{equation*}
    \norm{M(x,y)}<K_{xy}\rx+K_{yy}\ry\correction{<} \frac{1}{M_y} ~ \forall (x,y)\in\ball{x_0}{\rx}\times\ball{y_0}{\ry}.
\end{equation*}
Hence, using Lemma \ref{lem_mat_inverse}, we see that $\pdf{f}{y}(x,y)$ is invertible for all $x,y\in\ball{x_0}{\rx}\times\ball{y_0}{\ry}.$
\\ \\
\correction{
\textit{Claim 4: $\F{x}{\cdot}$ has finite number of zeros in $\ball{y_0}{\ry}$.} : From definition of $\F{x}{\cdot}$ we have
\begin{equation*}
    \D \F{x}{y} = \pdf{f}{y}(x,y).
\end{equation*}
Since $\pdf{f}{y}(x,y)$ is invertible for all $(x,y)\in\ball{x_0}{\rx}\times\ball{y_0}{\ry}$, $0$ is a regular value of $\F{x}{\cdot}\restrto{\ball{y_0}{\ry}}$. Therefore, using Theorem \ref{ift} (Inverse Function Theorem) $F^{-1}(0~;x) \coloneqq\{y\in\ball{y_0}{\ry}\mid \F{x}{y}=0\}$ is discrete i.e. around each $y_x\in\ball{y_0}{\ry}$ there exists a small enough neighborhood such that there is no other $y$ satisfying $\F{y}{x} = 0$. From the continuity of $\F{x}{\cdot}$, $F^{-1}(0~;x)$ is closed and is bounded since $F^{-1}(0~;x)\subset\cl\ball{y_0}{\ry}$ and therefore compact and hence finite. 
\\ \\ 
Let $\F{x}{\cdot}$ have $N$ such zeros in $\ball{y_0}{\ry}$. Then from Definition \ref{deg_f} and the fact that $\pdf{f}{y}(x,y)$ is invertible on $\ball{x_0}{\rx}\times\ball{y_0}{\ry}$ we have 
\begin{equation*}
    \pm1 = \DEG{\F{x}{\cdot}}{\ball{y_0}{\ry}} = \pm N
\end{equation*}
and thus $N=1$. Therefore for each $x\in\ball{x_0}{\rx}$ there exists a unique $y_x\in\ball{y_0}{\ry}$ such that $f(x,y_x)=0$. 
\\ \\
The regularity of the map $x\mapsto g(x)\coloneqq y_x$ is a consequence of Theorem \ref{imft1} (Implicit function theorem).
\\ \\
\correction{Suppose $K_{xx} = K_{xy}=L_x=0$} then, for each $y\in V$, $f(\cdot,y)$ is a constant map and $f(x,y_0) =0$ for all $x$. Since $f(x,\cdot) = f(x_0,\cdot)$ is now a map from $\R[m]\lra\R[m]$, the uniqueness of $y_0$ can be established using estimates on the IFT which we prove in the following corollary.
} 
\end{proof}
\begin{cor}
\label{ift_est}
Let $U\subset\R[n]$ be nonempty and open and $U\ni x\mapsto f(x)\in \R[n]$ be as in Theorem \ref{ift} and \correction{in addition let $f$ be $\cont{2}$}. Define $L\coloneqq \norm{\D f(x_0)}$ and $M\coloneqq\norm{\D f^{-1}(y_0)}$. For a given $R>0$ set $K \coloneqq \supr{\norm{\D^2 f(x)}}{x\in\ball{x_0}{R}}$, and define $$P \coloneqq \min\left\{\frac{1}{MK},R\right\},\qand P'\coloneqq \frac{P(2-MKP)}{2M}.$$
Further, set $N=8M^3K$ and define
\begin{equation*}
Q\coloneqq \min\left\{\frac{1}{NL},\frac{P}{2M},P\right\} \qand Q'\coloneqq\frac{Q(2-LNQ)}{L}.
\end{equation*}
Then there exist
\begin{enumerate}[label=\textup{(\ref{ift_est}\alph*)},leftmargin=*, widest=b, align=left]
    \item an open set $H\subset\ball{x_0}{P}$ such that $f$ maps $H$ diffeomorphically onto $\ball{y_0}{P'}$ with $y_0\coloneqq f(x_0)$, and 
    \label{ift_est_a}
    \item an open set $H'\subset\ball{y_0}{Q}$ such that $f^{-1}$ maps $H'$ diffeomorphically onto $\ball{x_0}{Q'}$\correction{.}
    \label{ift_est_b}
\end{enumerate}
A graphical representation of these sets is shown in Figure \ref{inv_nbd}.
\end{cor}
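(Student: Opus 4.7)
The plan is to derive both parts by specializing Theorem~\ref{imft_est} to appropriate auxiliary maps.

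\emph{Part~\ref{ift_est_a}.} I introduce the $\cont{2}$ map $F(\eta,\xi)\coloneqq f(\xi)-\eta$ on a neighborhood of $(y_0,x_0)\in\R[n]\times\R[n]$, so that solving $f(x)=y$ is equivalent to solving $F(y,x)=0$. In the notation of Theorem~\ref{imft_est} (with $\eta$ in the role of "$x$" and $\xi$ in the role of "$y$"), the constants specialize to $M_y=M$, $L_x=1$, $K_{xx}=K_{xy}=0$, and $K_{yy}\leq K$ on $\ball{x_0}{R}$. The two hypotheses \eqref{subeqn-1:epsbnd}--\eqref{subeqn-2:epsbnd} then collapse to
\begin{equation*}
\rx<\frac{\ry(2-MK\ry)}{2M} \quad\text{and}\quad K\ry<\frac{1}{M}.
\end{equation*}
Letting $\ry$ approach $P=\min\{1/(MK),R\}$ from below drives the admissible $\rx$ up to $P'$ from below. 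For any prescribed $y\in\ball{y_0}{P'}$ I choose $\rx,\ry$ so that $\|y-y_0\|<\rx$, and then invoke Theorem~\ref{imft_est} to produce a unique preimage in $\ball{x_0}{\ry}$; a nested-ball argument over all admissible $\ry<P$ promotes this to uniqueness in the full ball $\ball{x_0}{P}$. Setting $H\coloneqq f^{-1}(\ball{y_0}{P'})\cap\ball{x_0}{P}$, which is open, I conclude that $f|_{H}$ is a $\cont{2}$ bijection onto $\ball{y_0}{P'}$; the $\cont{2}$-regularity of its inverse, furnished by Theorem~\ref{imft_est}, promotes it to a $\cont{2}$ diffeomorphism.

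\emph{Part~\ref{ift_est_b}.} I apply part~\ref{ift_est_a} to the inverse map $g\coloneqq f^{-1}:\ball{y_0}{P'}\to H$ produced above. The derivative data at $y_0$ swap roles: $\|\D g(y_0)\|=M$ and $\|(\D g(y_0))^{-1}\|=L$, so $L$ plays the role of "$M$" and $M$ plays the role of "$L$" in part~\ref{ift_est_a}. The key computation is a quantitative bound on $\|\D^2 g\|$. Differentiating the identity $\D g(y)=(\D f(g(y)))^{-1}$ yields
\begin{equation*}
\D^2 g(y)(u,v)=-\D g(y)\,\D^2 f(g(y))\bigl(\D g(y)v,\,\D g(y)u\bigr),
\end{equation*}
and hence $\|\D^2 g(y)\|\leq\|\D g(y)\|^{3}\,\|\D^2 f(g(y))\|$. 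A Neumann-series estimate via Lemma~\ref{lem_mat_inverse} applied to $\D f(g(y))=\D f(x_0)\bigl(\eye{}+(\D f(x_0))^{-1}(\D f(g(y))-\D f(x_0))\bigr)$ gives $\|\D g(y)\|\leq M/(1-MK\|g(y)-x_0\|)$. The constant $P/(2M)$ appearing in the definition of $Q$ is exactly the largest radius on which one can force $\|g(y)-x_0\|\leq P/2$, which in turn implies $\|\D g(y)\|\leq 2M$ and therefore $\|\D^2 g(y)\|\leq(2M)^{3}K=N$. Feeding the triple $(L,M,N)$ together with the radius $P/(2M)$ into part~\ref{ift_est_a} applied to $g$ recovers $Q$ and $Q'$ as the analogs of $P$ and $P'$, delivering the desired diffeomorphism $f^{-1}:H'\to\ball{x_0}{Q'}$.

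The main obstacle is the self-consistent bound $\|g(y)-x_0\|\leq P/2$ on $\ball{y_0}{P/(2M)}$. Part~\ref{ift_est_a} on its own only delivers $g(\ball{y_0}{P'})\subset\ball{x_0}{P}$ with no quantitative margin, while a naive mean-value argument $\|g(y)-x_0\|\leq 2M\|y-y_0\|$ is circular, since the $2M$ bound itself presupposes $\|g(y)-x_0\|\leq 1/(2MK)$. I expect to resolve this either by a secondary application of Theorem~\ref{imft_est} with a strictly reduced ``domain'' radius (so that preimages are controlled inside $\ball{x_0}{P/2}$), or by a continuity/connectedness bootstrap exploiting $g(y_0)=x_0$ together with the openness of the set $\{y:\|g(y)-x_0\|<P/2\}$.
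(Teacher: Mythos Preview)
Your approach mirrors the paper's exactly: apply Theorem~\ref{imft_est} to $\psi(x,y)=f(x)-y$ for part~\ref{ift_est_a}, then feed $f^{-1}$ back through part~\ref{ift_est_a} for part~\ref{ift_est_b} after bounding $\lVert\D^2 f^{-1}\rVert$ via the chain-rule identity you wrote down. The obstacle you flag in part~\ref{ift_est_b} is real and the paper's own proof is equally terse there (it writes only ``after some rearrangement and imposing bounds on $\lVert\D f^{-1}(y)\rVert$''); your bootstrap/secondary-application ideas are the right way to close it, though be aware that pinning down the exact constant $P/(2M)$ rather than, say, $3P/(8M)$ is delicate and not fully spelled out in the paper either.
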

\correction{
\begin{proof}  Define $(x,y)\mapsto\psi(x,y) \coloneqq f(x)-y$. From the definition we have $\psi$ a $\cont{2}$ map and $\psi(x_0,y_0) = f(x_0)-y_0 = 0$. Further, we have $\pdf{\psi}{x}(x_0,y_0) = \D f(x_0)$ and is nonsingular. Therefore $\psi$ satisfies the assumptions of Theorem \ref{imft1} and Theorem \ref{imft_est}. Computing quantities as given in Theorem \ref{imft_est} we have
\begin{equation*}
\begin{split}
M&\coloneqq M_x = \norm{(\pdf{\psi}{x}(x_0,y_0)^{-1}} = \norm{(\D f(x_0))^{-1}}\\
L_y&\coloneqq \norm{\pdf{\psi}{y}(x_0,y_0)} = 1.
\end{split}
\end{equation*}
and for $R\coloneqq R_x>0$ and $R_y>0$ such that $\ball{x_0}{\rx}\subset U$ and $\ball{y_0}{\ry}\subset V$ we have 
\begin{equation*}
\begin{split}
    K_{xx}&=\supr{\norm{\pdf[2]{\psi}{x}(x,y)}}{(x,y)\in\ball{x_0}{R_x}\times\ball{y_0}{R_y}},\\
    &=\supr{\norm{\D^2f(x)}}{x\in\ball{x_0}{R_x}}\eqqcolon K\\
    K_{xy}&\coloneqq\supr{\norm{\pdf{\psi}{x,y}(x,y)}}{(x,y)\in\ball{x_0}{R_x}\times\ball{y_0}{R_y}} = 0, \And\\
    K_{yy}&\coloneqq\supr{\norm{\pdf[2]{\psi}{y}(x,y)}}{(x,y)\in\ball{x_0}{R_x}\times\ball{y_0}{R_y}} =0.
\end{split} 
\end{equation*}
Simplifying equation \ref{eqn:epsbnd} for $\psi$, for each $0<\rx<1/MK$ and for $\ry= \frac{\rx(2-MK\rx)}{2M}$, for each $y\in\ball{y_0}{\ry}$ there exists an unique $x_y\in\ball{x_0}{\rx}$ satisfying $\psi(x_y,y) =0 \iff f(x_y)=y.$ Further, $y\mapsto x_y\eqqcolon f^{-1}(y)$ is $\cont{2}$. Defining $H:= f^{-1}(\ball{y_0}{P'})\cap\ball{x_0}{P}$ proves \ref{ift_est_a}.
\\ \\
For \ref{ift_est_b}, from the relation $f^{-1}\circ f = \text{identity}$, for any $u_1,u_2\in\R[n]$ we have 
\begin{equation*}
    \D^2 f^{-1}(y)(\D f(x)u_1,\D f(x)u_2)+\D f^{-1}(y)\D^2 f(x)(u_1,u_2) = 0.
\end{equation*}
After some rearrangement and imposing bounds on $\norm{\D f^{-1}(y)}$, for all $x\in\ball{x_0}{P/2M}$ we arrive at $$\norm{\D^2f^{-1}(y)}<8M^3K=N~\forall y\in\ball{y_0}{P/2M}.$$ 
Applying \ref{ift_est_a} on $f^{-1}$ gives \ref{ift_est_b}. 
\end{proof} 
}
\begin{figure}
\centering
\begin{tikzpicture}
\node [
    above,
    inner sep=0] (image) at (0,0) {\includegraphics[width=0.70\linewidth]{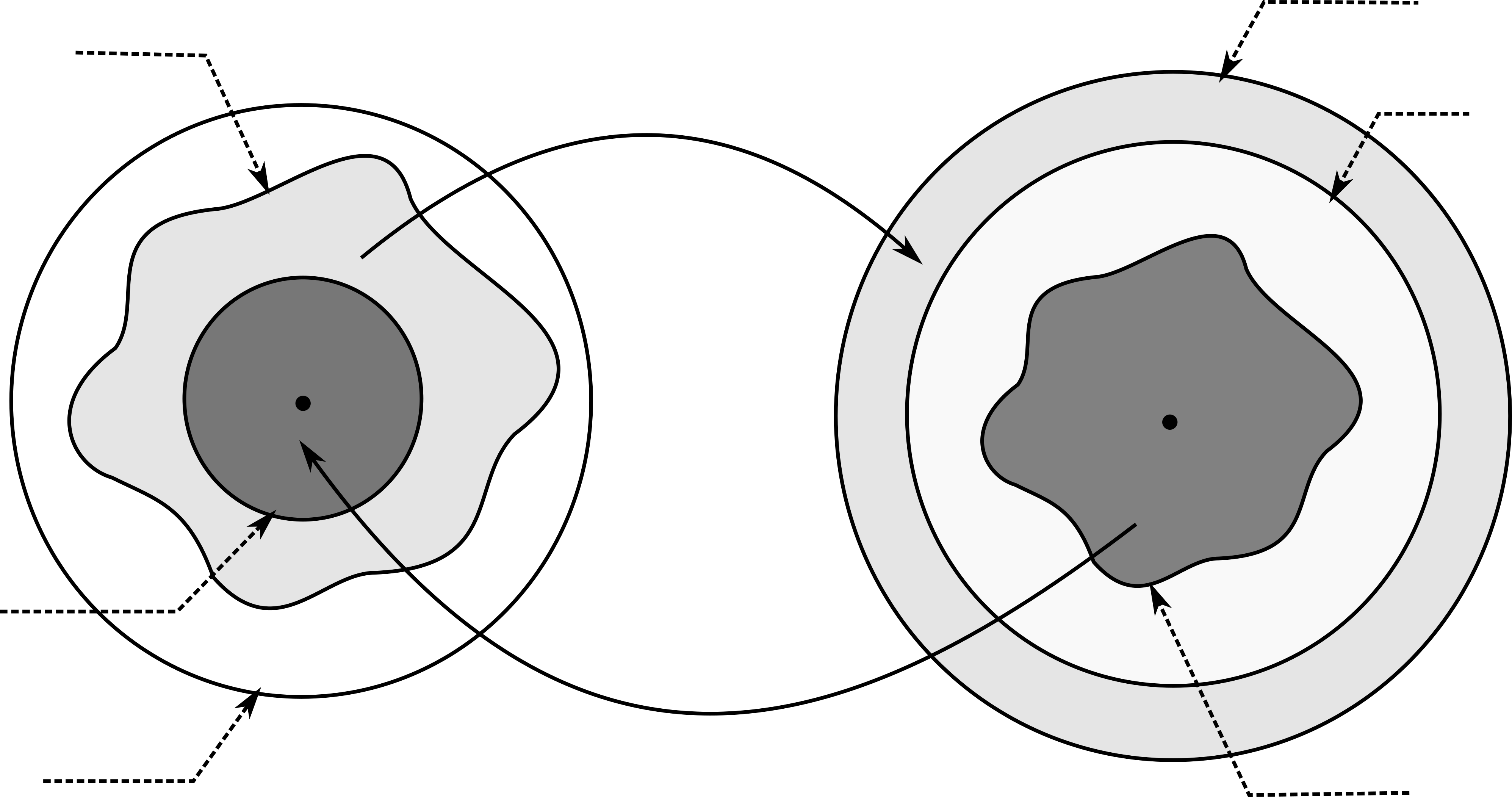}};
\begin{scope}[
x={($0.1*(image.south east)$)},
y={($0.1*(image.north west)$)}]
\draw (18,9.2) 
        node[above right,black]{\small{$\ball{y_0}{P'}$}};
\draw(-8.4,1.5) 
        node[above left,black]{\small{$\ball{x_0}{Q'}$}};
\draw(-9.8,-0.5) 
        node[above left,black]{\small{$\ball{x_0}{P}$}};
\draw(17.4,7.9) 
        node[above right,black]{\small{$\ball{y_0}{Q}$}};
\draw(7.1,9.4) 
        node[below right,black]{\small{$f$}};
\draw(0,-0.2) 
        node[above left,black]{\small{$f^{-1}$}};
\draw(-2,4.1) 
        node[above right,black]{\small{$x_0$}};
\draw(10,4.6) 
        node[below right,black]{\small{$y_0$}};
\draw(9.4,0.8) 
        node[below right,black]{\small{$H'$}};
\draw(-1.5,8.8) 
        node[above right,black]{\small{$H$}};
\end{scope}
\end{tikzpicture}\\
\caption{\centering Neighborhoods given in Corollary \ref{ift_est}}
    \label{inv_nbd}
\end{figure}
\begin{rmk}
\label{ift_rem}
Here are some remarks on  Theorem \ref{imft_est} and Corollary \ref{ift_est}.
\begin{enumerate}[label=\textup{(\ref{ift_rem}\alph*)},leftmargin=*, widest=b, align=left]
    \item Equation \eqref{subeqn-2:epsbnd} ensures uniqueness of $y_x$. If one is only interested in showing the existence of $y_x$, then one may relax \eqref{eqn:epsbnd} to just \eqref{subeqn-1:epsbnd}. 
    \item For finite-dimensional spaces, the bounds given by \correction{Corollary \ref{ift_est}} are better than those given by Abraham et al.~ \cite[Proposition~2.5.6]{abraham2012manifolds}. : \correction{ Proposition 2.5.6 from \cite{abraham2012manifolds} provides the bounds for the estimates for the IFT. Bounds given by Corrolary \ref{ift_est} for the IFT are applicable on a larger domain than that of \cite[Proposition~2.5.6]{abraham2012manifolds} and hence are superior.} 
    \item Since our objective is to come up with bounds that require minimal numerical computation, we relax several inequalities on the way and use conservative versions of them. For instance, we use Corollary \ref{cor_pbh} which is a coarser version of Theorem \ref{pbh}. Further, in several places, we utilize the submultiplicativity of norms to arrive at these results.
    \item The proof for Theorem \ref{imft_est} relies on constructing an approximation $\Fapx{x}{\cdot}$ of the map $\F{x}{\cdot}$ and showing both of them are nonvanishingly homoptopic on $\bd\ball{y_0}{\ry}$. One may improve the obtained estimates by approximating $\F{x}{\cdot}$ with a different function $\tilde F(\cdot~;x)$ and constructing homotopy $H'$ such that $F$ and $\tilde F$ are nonvanishingly homotopic and then use analysis on $\tilde F$ to show the existence of solutions of $F(y;x)= 0$. We refer the reader to \cite[Chapter~1]{krasnoselskij1984geometrical}, in which Krasnosel'skij et al. have discussed several methods to come up with such homotopies.   
\end{enumerate}
\end{rmk}
In order to compute these estimates one needs to compute the second-order derivatives over a bounded region. In light of this, $f\in\cont{2}$ is a necessary requirement. However, for $\cont{1}$ maps, one can alternatively use the bounds on the first-order derivatives to come up with similar estimates. \correction{In \cite{holtzman1970explicit}, Holtzman provides explicit estimates for the Implicit function theorem (given by the following proposition). However, for finite-dimensional spaces, one can also obtain these estimates by degree theoretic ideas. We now present these estimates and a degree theoretic proof of the same.}
\begin{prop}\emph{\cite[Theorem]{holtzman1970explicit}}
\correction{
\label{Imft_holtzman} 
Let $\R[n]\times\R[m]\supset\Omega\ni(x,y)\mapsto P(x,y)\in\R[m]$ be a continuous map. Let $\Omega$ be open and $(x_0,y_0)\in \Omega$. Assume:
\begin{enumerate}[label=\textup{(\roman*)},leftmargin=*, widest=b, align=left]
\item $P(x_0,y_0)=0$;
\label{item1}
\item $\pdf{P}{y}$ exists and is continuous in $\Omega$;
\label{itme2}
\item $\pdf{P}{y}(x_0,y_0)$ has bounded inverse $\Gamma = (\pdf{P}{y}(x_0,y_0))^{-1}$ and $\norm{\Gamma} = k_1$;
\label{item3}
\item $S = \{(x,y)\in\ball{x_0}{\delta}\times\cl\ball{y_0}{\epsilon}\}\subset\Omega$;
\label{item4}
\item there is a real-valued function $g_1(u,v)$ defined for $(u,v)\in[0,\delta]\times[0,\epsilon]$ and nondecreasing in each argument with the other fixed such that for all $(x,y)\in S$
\begin{equation*}
    \norm{\pdf{P}{y}(x,y)-\pdf{P}{y}(x_0,y_0)}\leq g_1(\norm{x-x_0},\norm{y-y_0});
\end{equation*}
\label{item5}
\item there is a nondecreasing function $g_2$ defined on $[0,\delta]$ such that for all $(x,y)\in S$ 
\begin{equation*}
    \norm{P(x,y_0)}\leq g_2(\norm{x-x_0});
\end{equation*}
\label{item6}
\item $k_1g_1(\delta,\epsilon)\leq\alpha<1$, and $k_1g_2(\delta)\leq \epsilon(1-\alpha)$.
\label{item7}
\end{enumerate}
Then a unique map $F$ exists defined on $\ball{x_0}{\delta}$ mapping into $\cl\ball{y_0}{\epsilon}$ and possessing the following properties:
\begin{enumerate}[label=\textup{(\alph*)},leftmargin=*, widest=b, align=left]
    \item $P(x,F(x)) = 0$ for all $x\in\ball{x_0}{\delta}$.
    \label{itema}
    \item $F(x_0) = y_0$.
    \label{itemb}
    \item $x\mapsto F(x)$ is continuous.
    \label{itemc}
\end{enumerate}}
\end{prop}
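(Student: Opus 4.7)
The plan is to mirror the strategy of the proof of Theorem~\ref{imft_est}: for each fixed $x\in\ball{x_0}{\delta}$ I will build a nonvanishing homotopy on $\bd\ball{y_0}{\epsilon}$ between $y\mapsto P(x,y)$ and an affine reference map whose degree is $\pm 1$, invoke Corollary~\ref{cor_pbh} together with homotopy invariance \ref{deg_HI} to transport the degree, and conclude the existence of a zero via the Negative Existence Principle~\ref{deg_NI}. Uniqueness will follow from a mean-value-based invertibility argument using Lemma~\ref{lem_mat_inverse}, and continuity of the resulting map $F$ will be a routine compactness argument.

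Concretely, fix $x\in\ball{x_0}{\delta}$ and set $F_x(y)\coloneqq P(x,y)$ and $G_x(y)\coloneqq \pdf{P}{y}(x_0,y_0)(y-y_0)$. Writing $P(x,y)-P(x,y_0)=\int_0^1 \pdf{P}{y}\bigl(x,y_0+s(y-y_0)\bigr)(y-y_0)\,\d s$ and splitting $F_x(y)-G_x(y)=[P(x,y)-P(x,y_0)-\pdf{P}{y}(x_0,y_0)(y-y_0)]+P(x,y_0)$, hypotheses \ref{item5}, \ref{item6} and \ref{item7} yield, for every $y\in\bd\ball{y_0}{\epsilon}$,
\begin{equation*}
\norm{F_x(y)-G_x(y)}\leq g_1(\delta,\epsilon)\epsilon + g_2(\delta) \leq \frac{\alpha\epsilon}{k_1}+\frac{(1-\alpha)\epsilon}{k_1}=\frac{\epsilon}{k_1},
\end{equation*}
while the inverse bound from \ref{item3} gives $\norm{G_x(y)}\geq \epsilon/k_1$ on the same sphere. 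This produces only $\norm{F_x-G_x}\leq\norm{G_x}$, with equality permitted; this boundary-degeneracy is the main technical obstacle.

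I will handle it by a dichotomy. If $P(x,\cdot)$ vanishes at some $y_*\in\bd\ball{y_0}{\epsilon}$, existence is immediate and one takes $F(x)=y_*$. Otherwise both $F_x$ and $G_x$ are nonzero on $\bd\ball{y_0}{\epsilon}$ (indeed $G_x$ vanishes only at $y_0$), so the ``moreover'' clause of Corollary~\ref{cor_pbh} admits the weak inequality and delivers
\begin{equation*}
\DEG{F_x}{\ball{y_0}{\epsilon}}=\DEG{G_x}{\ball{y_0}{\epsilon}}=\sign\bigl(\det\pdf{P}{y}(x_0,y_0)\bigr)\in\{-1,+1\},
\end{equation*}
whence \ref{deg_NI} supplies a zero of $P(x,\cdot)$ in $\ball{y_0}{\epsilon}$, establishing \ref{itema}.

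For uniqueness, if $y_1,y_2\in\cl\ball{y_0}{\epsilon}$ both solve $P(x,y)=0$, the fundamental theorem of calculus gives $0=\bigl(\int_0^1 \pdf{P}{y}(x,y_1+s(y_2-y_1))\,\d s\bigr)(y_2-y_1)$. Factoring $\pdf{P}{y}(x_0,y_0)$ out of the integrand, the residual perturbation has norm at most $k_1 g_1(\delta,\epsilon)\leq\alpha<1$, so Lemma~\ref{lem_mat_inverse} forces the integrated Jacobian to be invertible, hence $y_1=y_2$. Specializing to $x=x_0$ and using $P(x_0,y_0)=0$ gives \ref{itemb}. For continuity \ref{itemc}, if $x_n\to x$ then $F(x_n)$ lies in the compact set $\cl\ball{y_0}{\epsilon}$; any convergent subsequence has a limit solving $P(x,\cdot)=0$, which by the just-established uniqueness must equal $F(x)$, so the whole sequence converges to $F(x)$.
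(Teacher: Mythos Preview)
Your proof is correct and follows the same degree-theoretic template as the paper (compare $P(x,\cdot)$ to a reference map via Corollary~\ref{cor_pbh}, read off degree $\pm1$, invoke \ref{deg_NI}), but two choices diverge in interesting ways.

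First, your reference map is the purely linear $G_x(y)=\pdf{P}{y}(x_0,y_0)(y-y_0)$, whereas the paper uses the \emph{affine} comparison $\Papx{x}{y}=P(x,y_0)+\pdf{P}{y}(x_0,y_0)(y-y_0)$. By absorbing $P(x,y_0)$ into the reference, the paper gets the cleaner split $\norm{\mathbf{P}-\hat{\mathbf{P}}}\le g_1(\delta,\epsilon)\epsilon$ and $\norm{\hat{\mathbf{P}}}\ge \epsilon/k_1-g_2(\delta)$, which under the \emph{strict} form of \ref{item7} yields a strict inequality and Corollary~\ref{cor_pbh} applies directly. Your choice collapses both errors onto the left side, so you only obtain $\norm{F_x-G_x}\le\norm{G_x}$; you then handle the resulting degeneracy by your boundary dichotomy together with the ``moreover'' clause of Corollary~\ref{cor_pbh}. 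This is legitimate and arguably more economical: the paper instead first proves the strict-inequality case and then passes to the non-strict \ref{item7} via a somewhat delicate limiting argument (choosing $\alpha'>\alpha$, extending $g_1$, and taking a sequence $\epsilon_n\downarrow\epsilon^*$ so that $y_x\in\bigcap_n\ball{y_0}{\epsilon_n}=\cl\ball{y_0}{\epsilon^*}$). Your route sidesteps that entirely.

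Second, for uniqueness the paper shows pointwise invertibility of $\pdf{P}{y}(x,y)$ on $S$ and appeals to the argument of Theorem~\ref{imft_est}, while you invert the \emph{integrated} Jacobian in one stroke; both rest on Lemma~\ref{lem_mat_inverse} and the bound $k_1g_1(\delta,\epsilon)\le\alpha<1$. For continuity the paper simply cites Theorem~\ref{imft1}; your compactness-plus-uniqueness argument is more self-contained given that $P$ is only assumed continuous in $x$.
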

\begin{proof}
\correction{We first prove the strict version of \ref{item7} and then using this we achieve the nonstrict version \ref{item7}. To this end, let $\epsilon>0$ and $\delta>0$ be such that $k_1g_1(\delta,\epsilon)<\alpha<1$ and $k_1g_2<\epsilon(1-\alpha)$ and $\ball{x_0}{\delta}\times \cl\ball{y_0}{\epsilon}\subset\Omega$. Fix an $x\in\ball{x_0}{\delta}$ and define $\cl\ball{y_0}{\epsilon}\ni y\mapsto \P{x}{y}\coloneqq P(x,y)-P(x_0,y_0)$. Further, define an approximation of $\P{x}{\cdot}$ as 
\begin{equation}
\label{e1.3}
y\mapsto \Papx{x}{y} \coloneqq P(x,y_0)+ \pdf{P}{y}(x_0,y_0)\Tilde{y}    
\end{equation}
where $\Tilde{y} \coloneqq y-y_0$ and $\Tilde{x} \coloneqq x-x_0$. From definition of $\P{x}{\cdot}$ we have 
\begin{equation*}
\begin{split}
\P{x}{y}- \Papx{x}{y} &= (P(x,y)-P(x,y_0)) - \pdf{P}{y}(x_0,y_0)\Tilde{y}.\\
&=\int_{0}^{1}\Big(\pdf{P}{y}(x,y_0+t\Tilde{y})-\pdf{P}{y}(x,y_0)\Big)\tilde{y}\d t, 
\end{split}    
\end{equation*}
and 
\begin{equation*}
    \norm{\P{x}{y}- \Papx{x}{y}}\leq \left(\int_{0}^{1}\norm{\pdf{P}{y}(x,y_0+t\Tilde{y})-\pdf{P}{y}(x,y_0)}\d t\right)\norm{\tilde{y}}. 
\end{equation*}
Therefore, from \ref{item5}, for all $y\in\bd\ball{y_0}{\epsilon}$ 
\begin{equation}
\label{eq3.13}
    \norm{\P{x}{y}- \Papx{x}{y}}\leq g_1(\delta,\epsilon)\epsilon \leq \frac{\alpha\epsilon}{k_1}.
\end{equation}
Similarly, using \ref{item3}, \ref{item6} and \eqref{e1.3}, for all $y\in\bd\ball{y_0}{\epsilon}$ we have 
\begin{equation}
\label{eq3.14}
    \norm{\Papx{x}{y}}\geq \frac{\epsilon}{k_1} -g_2(\delta).
\end{equation} 
Since by assumption $k_1g_2(\delta)<\epsilon(1-\alpha)$, from \eqref{eq3.13} and \eqref{eq3.14} we have 
\begin{equation*}
    \norm{\P{x}{y}- \Papx{x}{y}}<\norm{\Papx{x}{y}}. 
\end{equation*}
Thus using Corollary \ref{cor_pbh} we have 
\begin{equation*}
    \DEG{\P{x}{\cdot}}{\ball{y_0}{\epsilon}} = \DEG{\Papx{x}{\cdot}}{\ball{y_0}{\epsilon}} \in\{-1,1\}.
\end{equation*}
Therefore, there exists a $y_x\in\ball{y_0}{\epsilon}$ such that $P(x,y_x)=0$. 
\\ \\
Following arguments similar to those of Theorem \ref{imft_est}, for uniqueness, it suffices to show that $\pdf{P}{y}(x,y)$ is nonsingular for all $(x,y)\in S$. Rewriting $\pdf{P}{y}(x,y)$ as 
\begin{equation*}
    \pdf{P}{y}(x,y) = \pdf{P}{y}(x_0,y_0)\Big(\eye{n} - \Gamma(\pdf{P}{y}(x_0,y_0)-\pdf{P}{y}(x,y))\Big)
\end{equation*}
Using \ref{item7}, for all $(x,y)\in S$, 
\begin{align*}
    \norm{\Gamma\big(\pdf{P}{y}(x_0,y_0)-\pdf{P}{y}(x,y)\big)} &\leq \norm{\Gamma}\norm{\pdf{P}{y}(x_0,y_0)-\pdf{P}{y}(x,y)}\\ &\leq k_1\alpha<1.
\end{align*}
Hence using Lemma \ref{lem_mat_inverse}, $\pdf{P}{y}(x,y)$ is nonsingular for all $(x,y)\in S$. Therefore for all $\epsilon>0$, $\delta>0$ such that $k_1g_1(\delta,\epsilon)<\epsilon(1-\alpha)$, for all $x\in\ball{x_0}{\delta}$ there exists a unique $y_x\in\ball{y_0}{\epsilon}$ such that $P(x,y_x) =0$.  
\\ \\
For given $\delta>0$, define $$\epsilon^*\coloneqq\supr{\epsilon}{k_1g_1(\delta,\epsilon)\leq\alpha,~k_1g_2(\delta)\leq\epsilon(1-\alpha)}>0.$$Choose an $\alpha<\alpha'<1$, and extend $g_1$, $g_2$ if necessary, then there exists an $\epsilon'$ such that $k_1g_1(\delta,\epsilon')<\alpha'$ and $k_1g_2(\delta)<\epsilon'(1-\alpha')$ and $\ball{x_0}{\delta}\times\ball{y_0}{\epsilon'}\in\Omega$. One can see that $\epsilon'>\epsilon^*$. Set $$\epsilon''(\alpha') = \infr{\epsilon'}{k_1g_1(\delta,\epsilon')<\alpha',~k_1g_2(\delta)<\epsilon'(1-\alpha')}.$$ Then for each $x\in\ball{x_0}{\delta}$, there exist a unique $y_x\in\ball{y_0}{\epsilon''(\alpha')}$ such that $P(x,y_x) =0$. Let $\{\alpha_n\mid n\in\N\}$ be a monotone decreasing sequence with $\alpha_n\leq\alpha'$ and $\alpha_n\lra \alpha$. Suppose $g_1(\delta,\cdot)$ is continuous on $[\epsilon^*,\epsilon']$\footnote{Such an extension of $g_1(\delta,\cdot)$ exists and one possible choice is 
\begin{equation*}
g_1(\delta,r) = \begin{cases}
g_1(\delta,r)&~\quad r\leq\epsilon^*,\\
g_1(\delta,\epsilon^*)+\supr{\norm{\pdf{P}{y}(x,y)-\pdf{P}{y}(x_0,y_0)}}{S_r}&\\ - \supr{\norm{\pdf{P}{y}(x,y)-\pdf{P}{y}(x_0,y_0)}}{S}&~\quad r>\epsilon^*,
\end{cases}    
\end{equation*}
where $S_r = \ball{x_0}{\delta}\times\ball{y_0}{r}$.} then we have a sequence $\epsilon_n \coloneqq \epsilon''(\alpha_n)$, such that $\epsilon_n\lra\epsilon^*$. Furthermore, for each $x\in\ball{x_0}{\delta}$, there exists a unique $y_x$ such that $P(x,y_x)= 0$ and for each $n$, $y_x\in\ball{y_0}{\epsilon_n}$. Therefore $$y_x\in\bigcap_{n\in \N}\ball{y_0}{\epsilon_n}=\cl\ball{y_0}{\epsilon^*}.$$ Thus, for each $\delta,\epsilon$ satisfying \ref{item7}, for each  $x\in\ball{x_0}{\delta}$ there exists a unique $y_x\in\cl\ball{y_0}{\epsilon}$ satisfying $P(x,y_x) =0$. Defining $x\mapsto F(x) = y_x $ proves \ref{itema} and \ref{itemb}, while \ref{itemc} is a consequence of Theorem \ref{ift} (ImFT).}
\end{proof}
In Proposition \ref{Imft_holtzman}, in addition to the above assumptions, if $\pdf{P}{x}(x,y)$ exists and is continuous then one has the following corollary. 
\begin{cor}
\label{imftc1}
Let $U\subset \R[n]$ and $V\subset\R[m]$ be open sets and $U \times V \ni(x,y)\mapsto f(x,y)\in\R[n]$ be $\cont{1}$. Suppose for some $(x_0,y_0)\in U\times V$, $\pdf{f}{y}(x_0,y_0)$ is an isomorphism. Define $M_y\coloneqq \norm{(\pdf{f}{y}(x_0,y_0))^{-1}}$ and $L_x\coloneqq\norm{\pdf{f}{x}(x_0,y_0)}$. For any given $\rx>0$ and $\ry>0$ define
\begin{align*}
    \lipconst[x]{\rx} &\coloneqq \supr{\norm{\pdf{f}{x}(x,y_0)-\pdf{f}{x}(x_0,y_0)}}{(x,y)\in\ball{x_0}{\rx}},\And\\
    \lipconst[y]{\rx,\ry} &\coloneqq \supr{\norm{\pdf{f}{y}(x,y)-\pdf{f}{y}(x_0,y_0)}}{(x,y)\in\ball{x_0}{\rx}\times\ball{y_0}{\ry}}.
\end{align*}
Then for all $\rx>0$ and $\ry>0$ satisfying 
\begin{subequations}
\label{eq_eps_c1}
\begin{align}
\label{eq_eps_c1a}
\lipconst[x]{\rx}\rx+\lipconst[y]{\rx,\ry}\ry&<\frac{\ry}{M_y}-L_x\rx,\And\\
\label{eq_eps_c1b}
M_y\lipconst[y]{\rx,\ry}\leq\alpha<1\correction{,}
\end{align}      
\end{subequations}
for each $x\in\ball{x_0}{\rx}$ there exist a unique $y_x\in\ball{y_0}{\ry}$ satisfying $f(x,y_x)= w_0\eqqcolon f(x_0,y_0)$. Furthermore, $\ball{x_0}{\rx}\ni x\mapsto g(x)\coloneqq  y_x\in \ball{y_0}{\ry}$ is continuously differentiable.  
\end{cor}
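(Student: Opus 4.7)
The plan is to mimic the proof of Theorem~\ref{imft_est} almost verbatim, simply swapping the second-order Taylor remainder bound (which required $\cont{2}$ regularity and the constants $K_{xx},K_{xy},K_{yy}$) for a $\cont{1}$ estimate obtained from the fundamental theorem of calculus and expressed through the moduli $\lipconst[x]{\rx}$ and $\lipconst[y]{\rx,\ry}$. For fixed $x\in\ball{x_0}{\rx}$ I would again set $\F{x}{y}\coloneqq f(x,y)-w_0$ and compare it with the affine approximation $\Fapx{x}{y}\coloneqq \pdf{f}{x}(x_0,y_0)\tilde x+\pdf{f}{y}(x_0,y_0)\tilde y$, with the goal of invoking Corollary~\ref{cor_pbh} on $\bd\ball{y_0}{\ry}$.

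The key estimate is the difference $\F{x}{y}-\Fapx{x}{y}$. Rather than passing to a second differential, I would split $f(x,y)-f(x_0,y_0)$ as $[f(x,y_0)-f(x_0,y_0)]+[f(x,y)-f(x,y_0)]$, apply the fundamental theorem of calculus along each axis separately, and subtract the corresponding linearisations. Each resulting integrand then involves either $\pdf{f}{x}(\cdot,y_0)-\pdf{f}{x}(x_0,y_0)$ evaluated on $\ball{x_0}{\rx}\times\{y_0\}$, or $\pdf{f}{y}(x,\cdot)-\pdf{f}{y}(x_0,y_0)$ evaluated on $\{x\}\times\ball{y_0}{\ry}$, which are precisely the sets on which $\lipconst[x]{\rx}$ and $\lipconst[y]{\rx,\ry}$ are defined. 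This yields $\norm{\F{x}{y}-\Fapx{x}{y}}<\lipconst[x]{\rx}\rx+\lipconst[y]{\rx,\ry}\ry$ for $y\in\bd\ball{y_0}{\ry}$, while the reverse-triangle lower bound $\norm{\Fapx{x}{y}}\geq \ry/M_y-L_x\rx$ is identical to the one in Theorem~\ref{imft_est}. Hypothesis~\eqref{eq_eps_c1a} then delivers $\norm{\F{x}{y}-\Fapx{x}{y}}<\norm{\Fapx{x}{y}}$ on $\bd\ball{y_0}{\ry}$, Corollary~\ref{cor_pbh} equates the two degrees, and the same computation as in Theorem~\ref{imft_est} shows that the unique zero $y^{*}=y_0-\pdf{f}{y}(x_0,y_0)^{-1}\pdf{f}{x}(x_0,y_0)\tilde x$ of $\Fapx{x}{\cdot}$ lies in $\ball{y_0}{\ry}$ and contributes $\pm 1$ to the degree, so \ref{deg_NI} produces the sought $y_x$.

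For uniqueness, I would use \eqref{eq_eps_c1b} in place of \eqref{subeqn-2:epsbnd}: writing $\pdf{f}{y}(x,y)$ as $\pdf{f}{y}(x_0,y_0)$ plus the perturbation $\pdf{f}{y}(x,y)-\pdf{f}{y}(x_0,y_0)$, the latter has norm at most $\lipconst[y]{\rx,\ry}\leq \alpha/M_y<1/M_y$ throughout $\ball{x_0}{\rx}\times\ball{y_0}{\ry}$, so Lemma~\ref{lem_mat_inverse} makes $\pdf{f}{y}(x,y)$ invertible on the whole box. The counting argument from Theorem~\ref{imft_est}—every zero of $\F{x}{\cdot}$ contributes the same sign in \eqref{deg_f}, so the total degree of $\pm 1$ forces a single zero—then completes uniqueness, and continuous differentiability of $x\mapsto g(x)=y_x$ is inherited from the classical Theorem~\ref{imft1}.

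The only real point of care, and what makes the $\cont{1}$ adaptation possible at all, is the axis-by-axis decomposition of $f(x,y)-f(x_0,y_0)$: traversing first in $x$ at height $y_0$ and then in $y$ at the new $x$ keeps each remainder integral inside the set over which the corresponding modulus was defined, so no ``mixed'' constant analogous to $K_{xy}$ is needed. That is exactly why the two hypotheses \eqref{eq_eps_c1a}–\eqref{eq_eps_c1b} suffice.
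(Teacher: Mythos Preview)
Your argument is correct. The paper, however, does not redo the degree computation from scratch: its proof of the corollary is a short reduction to Proposition~\ref{Imft_holtzman}, identifying $g_1(\delta,\epsilon)=\lipconst[y]{\delta,\epsilon}$ and taking for $g_2(\delta)$ a bound on $\norm{f(x,y_0)-w_0}$ (effectively $(L_x+\lipconst[x]{\delta})\delta$), after which Holtzman's condition~\ref{item7} rearranges into \eqref{eq_eps_c1a}--\eqref{eq_eps_c1b}. The difference is packaging rather than substance. Proposition~\ref{Imft_holtzman} was itself proved in the paper via Corollary~\ref{cor_pbh}, but with the surrogate $\Papx{x}{y}=P(x,y_0)+\pdf{P}{y}(x_0,y_0)\tilde y$ that keeps the $x$-dependence \emph{exact} rather than linearised; you instead linearise in both variables (following Theorem~\ref{imft_est} verbatim) and compensate with the axis-by-axis split $(x_0,y_0)\to(x,y_0)\to(x,y)$ to control the extra remainder, which is precisely what produces the term $\lipconst[x]{\rx}\rx$ in \eqref{eq_eps_c1a}. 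Both routes are valid. The paper's is shorter once Holtzman is in hand and does not even need $\pdf{f}{x}$ to exist for the existence part; yours is self-contained, makes the parallel with Theorem~\ref{imft_est} transparent, and explains cleanly why no mixed second-order constant is needed in the $\cont{1}$ setting.
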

\begin{proof}
\correction{Define $g_1(\rx,\ry) \coloneqq \lipconst[y]{\rx}{\ry}$ and $g_2(\rx) \coloneqq \lipconst[x]{\rx}$. Applying Proposition \ref{Imft_holtzman} with inequality \ref{item7} replaced with its strict version and rearranging terms provide \eqref{eq_eps_c1a} and \eqref{eq_eps_c1b}. The differentiability of $g$ is a consequence of Theorem \ref{imft1} (ImFT).}
\end{proof}
\begin{cor}
\label{iftc1}
Let $U\subset\R[n]$ be open and $U\ni x\mapsto f(x)\in\R[n]$ be a $\cont{1}$ map. Let $x_0\in U$ be such that $\D f(x_0)$ is nonsingular. Define $L\coloneqq \norm{\D f(x_0)}$ and $M\coloneqq \norm{(\D f(x_0))^{-1}}$. For each $r>0$, define
\begin{equation*}
    \lipconst{r} \coloneqq\supr{\correction{\norm{\D f(x)-\D f(x_0)}}}{x\in\ball{x_0}{r}}.
\end{equation*}
Then for all $\rx>0$ and $\ry>0$ satisfying
\begin{equation*}
    \lipconst{\rx}\correction{<} \frac{1}{M} \qand \ry = \frac{\rx(1-M\lipconst{\rx})}{M},
\end{equation*}
there exist an open set $H_{\rx}\subset\ball{x_0}{\rx}$ such that $f$ maps $H_{\rx}$ onto $\ball{y_0}{\ry}$ $\cont{1}$-diffeomorphically. Further, for each $0<r<\ry$, define 
\begin{equation*}
    \mathbf{M}(r) = \supr{\norm{\D f^{-1}(y)-\D f^{-1}(y_0)}}{y\in\ball{y_0}{r}}.
\end{equation*}
Then for each $\ry'>0$ and $\rx'>0$ satisfying
\begin{equation*}
    \mathbf{M}(\ry')\correction{<} \frac{1}{L} \qand \rx' = \frac{\ry{'}(1-L\mathbf{M}(\ry')}{L},
\end{equation*}
there exists an open set $H'_{\ry'}\subset\ball{x_0}{\ry'}$ such that $f^{-1}$ maps $H'_{\ry'}$, $\cont{1}-$ diffeomorphically onto $\ball{x_0}{\rx'}\subset H_{\rx}$.  
\end{cor}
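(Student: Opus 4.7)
The plan is to reduce both assertions to Corollary \ref{imftc1}, mirroring the strategy used to derive Corollary \ref{ift_est} from Theorem \ref{imft_est}. I would introduce the auxiliary map $(u,v)\mapsto \psi(u,v)\coloneqq f(v)-u$ on a product neighborhood of $(y_0,x_0)$. Then $\psi$ is $\cont{1}$, vanishes at $(y_0,x_0)$, and its partial in the second slot $\pdf{\psi}{v}(y_0,x_0)=\D f(x_0)$ is an isomorphism, so Corollary \ref{imftc1} applies with $u$ in the role of the ``parameter'' (the ``$x$'' of that corollary) and $v$ in the role of the implicit variable (the ``$y$'' of that corollary).

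Under this identification the ingredients of Corollary \ref{imftc1} collapse to simple expressions: the analog of $M_y$ is $\norm{(\D f(x_0))^{-1}}=M$; the analog of $L_x$ is $\norm{-\eye{}}=1$; the Lipschitz constant $\lipconst[x]{\cdot}$ of Corollary \ref{imftc1} vanishes identically, because $\pdf{\psi}{u}\equiv -\eye{}$ is constant; and the Lipschitz constant $\lipconst[y]{\cdot}$ reduces exactly to $\lipconst{\cdot}$ defined in the present corollary. Substituting into \eqref{eq_eps_c1}, while tracking that the ``$\rx$'' of Corollary \ref{imftc1} corresponds to $\ry$ here and vice versa, collapses the two bounds to $\lipconst{\rx}<1/M$ and $\ry<\rx(1-M\lipconst{\rx})/M$; the boundary case $\ry=\rx(1-M\lipconst{\rx})/M$ stated in the conclusion follows by exhausting the open ball with strictly smaller radii and invoking the uniqueness assertion in Corollary \ref{imftc1}. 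This yields, for each $y\in\ball{y_0}{\ry}$, a unique $x_y\in\ball{x_0}{\rx}$ with $f(x_y)=y$ together with a $\cont{1}$ dependence $y\mapsto x_y$. Setting $H_{\rx}\coloneqq\{x_y:y\in\ball{y_0}{\ry}\}$, openness follows because the continuous bijection $y\mapsto x_y$ has continuous inverse $f\restrto{H_{\rx}}$, and hence $f$ restricted to $H_{\rx}$ is a $\cont{1}$-diffeomorphism onto $\ball{y_0}{\ry}$.

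For the second assertion I would apply the first assertion verbatim to $f^{-1}\colon\ball{y_0}{\ry}\lra H_{\rx}$. Because $\D f^{-1}(y_0)=(\D f(x_0))^{-1}$, the constants interchange: $\norm{\D f^{-1}(y_0)}=M$ plays the role of ``$L$'', $\norm{(\D f^{-1}(y_0))^{-1}}=L$ plays the role of ``$M$'', and the associated Lipschitz-of-derivative quantity is exactly $\mathbf{M}(\cdot)$. Transcribing the first-part bounds under this relabeling produces $\mathbf{M}(\ry')<1/L$ and $\rx'=\ry'(1-L\mathbf{M}(\ry'))/L$, and delivers the advertised open set $H'_{\ry'}$.

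The only real obstacle I anticipate is bookkeeping: Corollary \ref{imftc1} is written with ``parameter'' first and ``implicit variable'' second, whereas in the IFT the implicit variable is $x$, so the labels $\rx$ and $\ry$ swap roles between the two corollaries and must be tracked carefully. Beyond this, no new analytic input is required: pointwise invertibility of $\D f$ throughout $\ball{x_0}{\rx}$ (needed for injectivity and smoothness of $f\restrto{H_{\rx}}$) is already ensured by $\lipconst{\rx}<1/M$ and Lemma \ref{lem_mat_inverse}.
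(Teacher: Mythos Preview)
Your proposal is correct and follows precisely the route the paper intends: the paper omits an explicit proof of Corollary~\ref{iftc1}, but its placement immediately after Corollary~\ref{imftc1}, together with the explicit derivation of Corollary~\ref{ift_est} from Theorem~\ref{imft_est} via $\psi(x,y)=f(x)-y$, makes clear that the intended argument is exactly the reduction you describe. Your bookkeeping of the swapped roles of $\rx$ and $\ry$, the identifications $M_y=M$, $L_x=1$, $\lipconst[x]\equiv0$, $\lipconst[y]=\lipconst{}$, and the passage to the boundary value of $\ry$ by exhaustion are all handled correctly.
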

\begin{rmk}
Estimating $\mathbf{M}$ requires explicit expression of $\D f^{-1} (y_0)$. This can be overcome by using the relation $\D f^{-1}(y) = (\D f(x))^{-1}$, where $x=f^{-1}(y)$. Let $r'<\rx$, be such that $r = r{'}(1-M\lipconst{r'})/M$, then we have 
\begin{equation*}
   \mathbf{M}(r)\leq\supr{\norm{(\D f(x))^{-1}-(\D f(x_0))^{-1}}}{x\in\ball{x_0}{r'}}.
\end{equation*}
\end{rmk}
\correction{Corollary} \ref{ift_est} and Corollary \ref{iftc1} considers an unconstrained variation around $y_0$. However, often we are interested to know how the preimage of $y$ under $f$ varies when $y$ is varied along a particular direction. A slightly improved version of Corollary \ref{ift_est} is presented in the following proposition.  
\begin{prop}
\label{dir_bound}
Let $U\subset\R[n]$ be a nonempty open set  and $f\:U\lra\R[n]$ satisfy the assumptions of Theorem \ref{ift} and $L$, $M$, $K$, $P$ and $R$ be as defined in \correction{Corollary} \ref{ift_est}. Let $\vecspc{W}\subset\R[n]$ be a subspace, 
define $$\ball[\vecspc{W}]{y_0}{r_\vecspc{W}} \coloneqq\{y\subset\R[n]\mid y-y_0\in\vecspc{W}~\text{and}~\norm{y-y_0}<r_\vecspc{W}\}$$ and  
\begin{equation*}
    M_{\vecspc{W}} \coloneqq \supr{\norm{\D f^{-1}(y_0){\y}}}{{\y}\in\ball[\vecspc{W}]{0}{1}}.
\end{equation*}
Then for any $$0\correction{<} \rx < \min\left\{\frac{1}{MK},R\right\} \qand r_\vecspc{W} = \frac{\rx(2-\rx MK)}{2M_{\vecspc{W}}},$$for each $y \in \ball[\vecspc{W}]{y_0}{r_\vecspc{W}}$, there exist  a unique $x_y\in\ball{x_0}{\rx}\subset\R[n]$ such that $f(x_y) = y$.  
\end{prop}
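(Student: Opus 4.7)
My plan is to adapt the degree-theoretic proof of Corollary~\ref{ift_est} (equivalently Theorem~\ref{imft_est} applied to $\psi(x,y)=f(x)-y$) to the subspace-restricted setting. The key new ingredient is that, since the target variation $y-y_0$ is forced to lie in $\vecspc{W}$, the pre-image $(\D f(x_0))^{-1}(y-y_0)$ obeys the sharper bound $\norm{(\D f(x_0))^{-1}(y-y_0)}\leq M_{\vecspc{W}}\norm{y-y_0}$ with $M_{\vecspc{W}}\leq M$ in general. This is the only place $M_{\vecspc{W}}$ replaces $M$; the invertibility of $\D f(\cdot)$ on $\ball{x_0}{\rx}$ still requires the full constant $M$.

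Fix $y\in\ball[\vecspc{W}]{y_0}{r_{\vecspc{W}}}$ and set $F_y(x)\coloneqq f(x)-y$ with affine approximation $F_{y,a}(x)\coloneqq \D f(x_0)(x-x_0)-(y-y_0)$. I would first show that $F_y$ and $F_{y,a}$ are nonvanishingly homotopic on $\bd\ball{x_0}{\rx}$ by invoking Corollary~\ref{cor_pbh}. A second-order Taylor expansion, exactly as in the proof of Theorem~\ref{imft_est}, gives $\norm{F_y(x)-F_{y,a}(x)}\leq\tfrac12 K\rx^2$ on $\cl\ball{x_0}{\rx}$. For a lower bound on $\norm{F_{y,a}(x)}$, observe that $F_{y,a}$ vanishes uniquely at $x^{*}\coloneqq x_0+(\D f(x_0))^{-1}(y-y_0)$, that $\norm{x^{*}-x_0}\leq M_{\vecspc{W}}\norm{y-y_0}<M_{\vecspc{W}}r_{\vecspc{W}}$, and that the identity $F_{y,a}(x)=\D f(x_0)(x-x^{*})$ yields $\norm{F_{y,a}(x)}\geq\norm{x-x^{*}}/M$. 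On $\bd\ball{x_0}{\rx}$ this becomes $\norm{F_{y,a}(x)}>(\rx-M_{\vecspc{W}}r_{\vecspc{W}})/M$, and substituting the prescribed $r_{\vecspc{W}}=\rx(2-\rx MK)/(2M_{\vecspc{W}})$ collapses the right-hand side to exactly $K\rx^2/2$. Corollary~\ref{cor_pbh} then gives $\DEG{F_y}{\ball{x_0}{\rx}}=\DEG{F_{y,a}}{\ball{x_0}{\rx}}=\sign(\det\D f(x_0))\in\{-1,1\}$, and \ref{deg_NI} produces some $x_y\in\ball{x_0}{\rx}$ with $f(x_y)=y$.

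For uniqueness, the hypothesis $\rx<1/(MK)$ together with Lemma~\ref{lem_mat_inverse} applied to $\D f(x)=\D f(x_0)\bigl(\eye{}+(\D f(x_0))^{-1}(\D f(x)-\D f(x_0))\bigr)$ shows that $\D f(x)$ is invertible throughout $\ball{x_0}{\rx}$: by the mean value inequality and the definition of $K$, the correction term has operator norm at most $MK\rx<1$. Consequently $0$ is a regular value of $F_y\restrto{\ball{x_0}{\rx}}$, and the compactness-plus-sign counting used in Claim~4 of the proof of Theorem~\ref{imft_est} converts the degree value $\pm1$ into exactly one pre-image.

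The single delicate point is producing the \emph{strict} inequality $\norm{F_y(x)-F_{y,a}(x)}<\norm{F_{y,a}(x)}$ on all of $\bd\ball{x_0}{\rx}$. The prescribed $r_{\vecspc{W}}$ is calibrated so that the two sides coincide in the limiting regime $\norm{y-y_0}\to r_{\vecspc{W}}$; strictness is recovered because the open ball $\ball[\vecspc{W}]{y_0}{r_{\vecspc{W}}}$ enforces $\norm{y-y_0}<r_{\vecspc{W}}$. This calibration is exactly where $M_{\vecspc{W}}$ in the denominator supplants $M$ and is the improvement over Corollary~\ref{ift_est}.
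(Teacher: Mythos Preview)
Your proposal is correct and follows essentially the same degree-theoretic strategy as the paper: a boundary comparison between $f(x)-y$ and its affine approximation, followed by the regular-value/sign-counting argument for uniqueness. The only cosmetic difference is that the paper first premultiplies by $(\D f(x_0))^{-1}$, taking $\F{y}{x}=(\D f(x_0))^{-1}(f(x)-y)$ and $\Fapx{y}{x}=(x-x_0)-(\D f(x_0))^{-1}(y-y_0)$, which makes the lower bound on $\norm{\Fapx{y}{\cdot}}$ read directly as $\rx-M_{\vecspc{W}}r_{\vecspc{W}}$ instead of $(\rx-M_{\vecspc{W}}r_{\vecspc{W}})/M$; after clearing the factor of $M$ your inequality and the paper's coincide, so the two arguments are equivalent.
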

\begin{proof}
\correction{
First fix a $y\in\ball[\vecspc{W}]{y_0}{r_\vecspc{W}}$ and define $\F{y}{x} = (\D f{x_0})^{-1}(f(x)-y)$. We have $\F{y}{x} = 0\iff f(x)=y$. Define an approximation of $\F{y}{\cdot}$ as 
$$\Fapx{y}{x} = (x-x_0)-\D f(x_0)^{-1}(y-y_0).$$
Setting $\Tilde{x} \coloneqq x-x_0$ and $\Tilde{y} \coloneqq y-y_0$ we have 
\begin{align*}
 \F{y}{x}- \Fapx{y}{x}&= \D f(x_0)^{-1}(f(x-y))-(\Tilde{x}-\D f(x_0)^{-1}\Tilde{y})\\
 &= \D f(x_0)^{-1}\Big(f(x)-\big(f(x_0)+\D f(x_0)\Tilde{x}\big)\Big).
\end{align*}
Therefore, 
$$\norm{\F{y}{x}- \Fapx{y}{x}}<\frac{1}{2}MK\rx^2~\forall~x\in\ball{x_0}{\rx}.$$
Similarly, 
\begin{align*}
  \norm{\Fapx{y}{x}}&= \norm{\Tilde{x}-\D f(x_0)^{-1}\Tilde{y}}\\
  &\geq \left\lvert \norm{\Tilde{x}} - \norm{\D f(x_0)^{-1}\Tilde{y}}\right\lvert\\
  &> \rx - M_\vecspc{W} r_\vecspc{W}~\forall~x\in\bd\ball{\rx}.
\end{align*}
Then for all $\rx$ and $r_{\vecspc{W}}$ satisfying 
\begin{equation}
\label{eq3.131}
    \frac{1}{2}MK\rx^2 \leq \rx-M_\vecspc{W}r_\vecspc{W}
\end{equation}
we have (for all $y\in\ball[\vecspc{W}]{y_0}{r_\vecspc{W}}$) $\norm{\F{y}{x}-\Fapx{y}{x}}<\norm{\Fapx{y}{x}}$ for all $x\in\bd\ball{x_0}{\rx}$ and therefore are nonvanishingly homotopic. Therefore, 
\begin{equation*}
    \DEG{\F{y}{\cdot}}{\ball{x_0}{\rx}} = \DEG{\Fapx{y}{\cdot}}{\ball{x_0}{\rx}} \in\{-1,1\},
\end{equation*}
and thus there exists $x_y\in\ball{x_0}{\rx}$ such that $\F{y}{x_y}=0\iff f(x_y) = y$. This proves existence of such an $x_y\in\ball{x_0}{\rx}$ for each  $y\in\ball[\vecspc{W}]{y_0}{\vecspc{r_\vecspc{W}}}$. Since for all $r_x<\min\{ \frac{1}{MK},R\}$, $\D \F{y}{x} = \D f(x_0)^{-1}\D f(x)$ is invertible for all $x\in\ball{x_0}{\rx}$, the uniqueness of $x_y$ is asserted following arguments similar to Proposition \ref{ift_est}. For a given $\rx$, maximizing $r_\vecspc{W}$ while satisfying \eqref{eq3.131} results in the equality 
\begin{equation}
    r_\vecspc{W} = \frac{\rx(2-\rx MK)}{2M_{\vecspc{W}}},
\end{equation}
and thereby completing the proof.
}
\end{proof}

\section{Applications}
ImFT and IFT find their applications in proving several results in nonlinear analysis and form the basis of several results in system theory and control, such as the robustness of solutions, and the existence, and uniqueness of solutions of ordinary differential equations. Several existential results on robustness are a consequence of the ImFT. One can apply the bounds derived in this article on the ImFT and IFT to obtain quantitative variants of the aforementioned methods. In this section, we include two such applications. First, we look into the robustness of the solutions of the nonlinear equations with respect to parametric variations and particularly the Quadrartically constrained Quadratic Programs (QCQP). A second application of these bounds is in geometric control methods, where we utilize these bounds to give explicit estimates on the domain of the feedback linearization of discrete-time dynamical systems. 
\subsection{Robustness of Nonlinear Equations}
Solving nonlinear equations is challenging, and one often needs to employ numerical methods to compute a solution for a system of nonlinear equations. Estimates of the bounds in the ImFT and IFT help \correction{in proving} the existence of solutions for a system of nonlinear equations on a given set, as we now demonstrate.
\\ \\
Let $f\:\R[n]\times\R[m]\lra\R[n]$ be a $\cont{2}$ map. Let $X\subset\R[n]$, $U\subset\R[m]$ be nonempty open sets. A system of nonlinear equations in $(x,u)$ on $X\times U$ is given by 
\begin{equation}
\label{non_eq}
f(x,u)=0
\end{equation}
is \emph{solvable} if there exists an $(x_0,u_0)\in X\times U$ such that $f(x_0,u_0) =0$.
\begin{defn}
\label{def4.1}
Let \correction{$\Omega\subset X$} have a nonempty interior. \eqref{non_eq} is said to be \emph{robustly solvable} on $\Omega$, if there exists an $\ru>0$, such that for all $u\in\ball{u_0}{\ru}\subset U$, there exists a \correction{(unique)} $x_u\in\Omega$ satisfying $f(x_u,u) = 0$. Moreover, for a given $\rx$, the \correction{supremum over all such $\ru$s} is called the \emph{robustness margin} for \eqref{non_eq} over \correction{$\Omega$}. 
\end{defn}
\begin{rmk}
\correction{Definition \ref{def4.1} is similar to \cite[Definition~2.1]{dvijotham2019robust}. However, unlike \cite[Definition~2.1]{dvijotham2019robust}, $\Omega$ can be arbitrarily chosen and need not arise from linear constraints given by equation (2) in \cite{dvijotham2017solvability}.}     
\end{rmk}
\begin{thm}
Let $X\subset\R[n],~U\subset\R[m]$ be nonempty set and $f:X\times U\lra \R[n]$ be a $\cont{2}$ map. Let $(x_0,u_0)\in X\times U$ be such that $f(x_0,u_0)=0$ and $\pdf{f}{x}(x_0,u_0)$ is nonsingular. Define 
\begin{equation*}
    L_u \coloneqq \norm{\pdf{f}{u}(x_0,u_0)}\qand M_x \coloneqq \norm{\left(\pdf{f}{x}(x_0,u_0)^{-1}\right)}
\end{equation*}
and for a given $R_x>0,R_u>0$ such that $\ball{x_0}{R_x}\subset X$ and $\ball{u_0}{R_u}\subset U$ set
\begin{align*}
    K_{xx}&\coloneqq  \supr{\norm{\pdf[2]{f}{x}(x,u)}}{(x,u)\in\ball{x_0}{R_x}\times\ball{u_0}{R_u}},\\
    K_{uu}&\coloneqq  \supr{\norm{\pdf[2]{f}{x}(x,u)}}{(x,u)\in\ball{x_0}{R_x}\times\ball{u_0}{R_u}},\And \\
    K_{xu}&\coloneqq\supr{\norm{\pdf{f}{x,u} (x,u)}}{(x,u)\in\ball{x_0}{R_x}\times\ball{u_0}{R_u}}.
\end{align*}
Then for all $\rx$ and $\ru$ satisfying 
\begin{subequations}\label{eqn:epsrobbnd}
      \begin{align}
        \frac{1}{2}K_{xx}\rx^2+K_{xu}\rx\ru+\frac{1}{2}K_{uu}\ru^2&<\frac{\rx}{M_x}-L_u\ru, \label{subeqn-1:epsrobbnd} \\
        K_{xu}\ru+K_{xx}\rx&<\frac{1}{M_x}, \label{subeqn-2:epsrobbnd} \\
        0<\ru&\leq R_x,\And\label{subeqn-3:epsrobbnd} \\ 0< \rx&\leq R_y, \label{subeqn-4:epsrobbnd}
      \end{align}
\end{subequations}
\eqref{non_eq} is robustly solvable on $\ball{x_0}{\rx}$ and robustness margin is bounded below by $\ru$. 
\end{thm}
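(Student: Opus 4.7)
The plan is to reduce the statement to a direct application of Theorem~\ref{imft_est}, after interchanging the roles of the ``dependent'' and ``parameter'' variables. In Theorem~\ref{imft_est}, the invertible partial was $\pdf{f}{y}$ and the variable solved for was $y$; in the present setting the invertible partial is $\pdf{f}{x}(x_0,u_0)$, so $x$ will play the role that $y$ played in Theorem~\ref{imft_est} (the variable we solve for) and $u$ will play the role that $x$ played (the parameter). Under this relabeling, the constants $M_x$, $L_u$, $K_{xx}$, $K_{xu}$, $K_{uu}$ of the present theorem correspond exactly to $M_y$, $L_x$, $K_{yy}$, $K_{xy}$, $K_{xx}$ of Theorem~\ref{imft_est}, and the target value $w_0$ there equals $0$ here because $f(x_0,u_0)=0$ is given.

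First I would check the hypotheses of Theorem~\ref{imft_est}: $f$ is $\cont{2}$ on a neighbourhood of $(x_0,u_0)$, $f(x_0,u_0)=0$, and $\pdf{f}{x}(x_0,u_0)$ is an isomorphism of $\R[n]$, which is exactly what is assumed. Conditions \eqref{subeqn-3:epsrobbnd}--\eqref{subeqn-4:epsrobbnd} ensure that the balls $\ball{x_0}{\rx}$ and $\ball{u_0}{\ru}$ lie inside $\ball{x_0}{R_x}\times\ball{u_0}{R_u}$, which is the region where the second-order bounds $K_{xx}$, $K_{xu}$, $K_{uu}$ were taken; these containments are what Theorem~\ref{imft_est} requires of its analogous $\rx,\ry$. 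Next, I would observe that inequalities \eqref{subeqn-1:epsrobbnd} and \eqref{subeqn-2:epsrobbnd} reduce term by term to \eqref{subeqn-1:epsbnd} and \eqref{subeqn-2:epsbnd} under the swap described above (using the symmetry $K_{xu}=K_{ux}$ afforded by the symmetry of the mixed second partial).

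Invoking Theorem~\ref{imft_est} with this correspondence then yields, for every $u\in\ball{u_0}{\ru}$, a unique $x_u\in\ball{x_0}{\rx}$ satisfying $f(x_u,u)=0$, together with $\cont{2}$ dependence $u\mapsto x_u$. Since $\ball{x_0}{\rx}\subset X$ by construction, this is precisely the statement of robust solvability on $\ball{x_0}{\rx}$ in the sense of Definition~\ref{def4.1}, and hence the robustness margin is bounded below by $\ru$. I do not anticipate a substantive obstacle; the only delicate point is the consistent bookkeeping of the variable swap in the definitions of $M_x,L_u$ and of the three second-order constants, so that the two scalar inequalities \eqref{subeqn-1:epsrobbnd}--\eqref{subeqn-2:epsrobbnd} are seen to coincide with the hypotheses of Theorem~\ref{imft_est} after renaming.
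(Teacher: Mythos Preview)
Your proposal is correct and coincides with the paper's own treatment: the paper states that the proof follows directly from applying Theorem~\ref{imft_est} to $f$ (with the roles of the two blocks of variables swapped so that $x$ is the implicitly defined variable and $u$ the parameter) and omits further detail. Your bookkeeping of the correspondence $M_x\leftrightarrow M_y$, $L_u\leftrightarrow L_x$, $K_{xx}\leftrightarrow K_{yy}$, $K_{xu}\leftrightarrow K_{xy}$, $K_{uu}\leftrightarrow K_{xx}$, $\rx\leftrightarrow \ry$, $\ru\leftrightarrow \rx$ is exactly what is needed to see that \eqref{subeqn-1:epsrobbnd}--\eqref{subeqn-2:epsrobbnd} are \eqref{subeqn-1:epsbnd}--\eqref{subeqn-2:epsbnd} after relabeling.
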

\begin{rmk}
\correction{Similar to Theorem \ref{imft_est}, \eqref{subeqn-2:epsrobbnd} is necessary for the uniqueness of the solution. If the uniqueness is not required then one can relax \eqref{eqn:epsrobbnd} to \eqref{subeqn-1:epsrobbnd}, \eqref{subeqn-3:epsrobbnd} and \eqref{subeqn-4:epsrobbnd}.} 
\end{rmk}
\noindent \correction{The proof follows directly from applying Theorem \ref{imft_est} on $f$, and is therefore omitted.} With the general nonlinear case described, we now consider the QCQPs.
\subsubsection{Robustness of Solutions of QCQPs}
Let $u\in\R[n]$ and define 
\begin{equation*}
\R[n]\ni x\mapsto  F(x) \coloneqq Q(x)+Lx\in\R[n],
\end{equation*}
where $L\in\R[{n\times n}]$ and $Q(x)$ is a quadratic function with its $i^\text{th}$ component defined as 
\begin{equation}
    [Q(x)]_i \coloneqq x^\top Q_i x ~\text{for all}~i\in \n.
\end{equation}
where for all $i\in \n$,~$Q_i \in \R[{n\times n}]$ is a symmetric matrix. \correction{For a given matrix $A$ and a vector $b$}, a QCQP is defined by a system of quadratic equations 
\begin{equation}
\label{quadeq}
    F(x) = Q(x)+Lx = u,
\end{equation}
subjected to $m$ constraints
\begin{equation}
\label{eq4.5}
    x\in \const\coloneqq\{x\in \R[n]~\mid~[Ax]_i \leq b_i ~i\in [m]\},
\end{equation}
where \correction{$[Ax]_i$ and $b_i$ denotes the $i^\text{th}$ components of the vectors $Ax$ and $b$.} The constraint set $\const$ defines a \correction{polyhedron} as shown in Figure \ref{constset}. The dashed lines represent the equalities $[Ax]_i=b_i$ and the shaded interior along with the boundary defines the constraint set. 
\begin{figure}
\begin{center}
\begin{tikzpicture}
\node [
    above,
    inner sep=0] (image) at (0,0) {\includegraphics[width=0.95\linewidth]{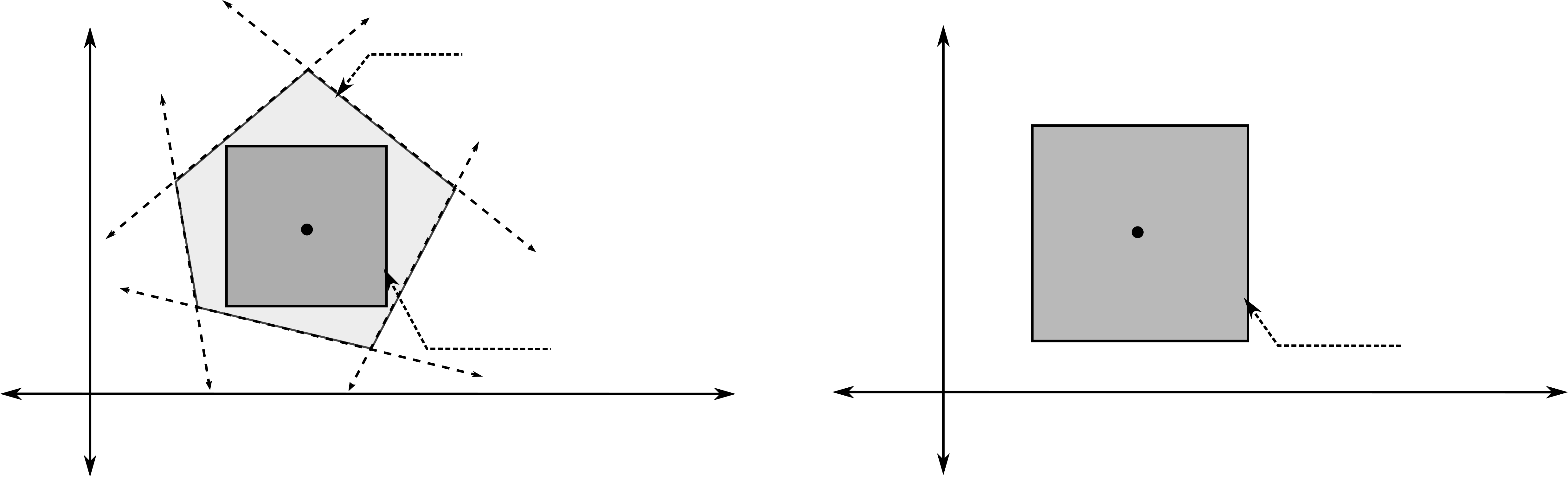}};
\begin{scope}[
x={($0.1*(image.south east)$)},
y={($0.1*(image.north west)$)}]
\draw(-9.9,0.2) 
        node[above right,black]{\small{$\R[n]$}};
\draw(5.4,9.5) 
        node[below right,black]{\small{$\const$}};
\draw(2.3,0.3) 
        node[above left,black]{\small{$\R[n]$}};
\draw(-2.8,4.1) 
        node[above right,black]{\small{$x_0$}};
\draw(8.8,5.1) 
        node[below right,black]{\small{$u_0$}};
\draw(0.5,3.5) 
        node[below right,black]{\small{$\ball{x_0}{\rx}$}};        
\draw(9.8,2.0) 
        node[above right,black]{\small{$\ball{u_0}{\ru}$}};
\end{scope}
\end{tikzpicture}\\
\caption{\centering Constraint set $\const$ and the $n$-\emph{cube} around $x_0$ and $u_0$.} 
\label{constset}
\end{center}
\end{figure}
It is assumed that the constraints are not redundant, and $x$ and $u$ have the same dimension. We are interested in the following problem. 
\begin{problem}
\label{quadsolve}
For a given $u_0$, let there be an $x_0\in \const$ such that $(x_0,u_0)$ solves \eqref{quadeq}. Find the \emph{robustness margin}, i.e., the largest $\ru\correction{>} 0$ such  that for all $u$ satisfying $\Big\lvert[u_0]_i-[u]_i\Big\rvert\correction{<} \ru$ for all $i\in \n$, there exists an $x\in \const$ \correction{(not necessarily unique)} such that $(x,u)$ solves \eqref{quadeq}.
\end{problem}
First we utilize the ImFT to show that a nonzero robustness margin exists and then use the bounds derived for ImFT to come up with a lower bound on $\ru$. Casting \eqref{quadeq} as \eqref{non_eq}, define 
\begin{equation}
\label{f_quad_eq}
\R[n]\times\R[n]\ni    (x,u)\mapsto f(x,u) \coloneqq Q(x)+Lx-u\in\R[n].
\end{equation}
The Jacobian for $f$ at $(x_0,u_0)$ is given by
\begin{equation}
    \D f(x_0,u_0) = \pmat{\pdf{f}{x}(x_0,u_0),&\pdf{f}{u}(x_0,u_0)} = \pmat{2Q'(x_0)+L,&\eye{n}}
\end{equation}
where $Q'(x_0)$ denotes an ${n\times n}$ matrix with its $i^{\text{th}}$ row as $x_0^\top Q_i$ and $\eye{n}$ denotes an identity matrix of order $n$.  Since the inequalities are placed component-wise, i.e., we are looking for an $n$-\emph{cube} around $u_0$, and $\const$ \correction{is defined by \eqref{eq4.5}}.
We shall choose the infinity norm to work with, for a vector $v\in\R[n]$ the infinity norm is defined as
\begin{equation*}
    \norminf{v} \coloneqq \maxi{\lvert v_i\rvert}{i\in \n}. 
\end{equation*}
\correction{The corresponding induced norm is defined as follows}: let $A\:\R[n]\times\R[m]$ be a linear operator, then $\norminf{A}$ is defined as
\begin{equation*}
    \norminf{A} \coloneqq \maxi{\norminf{Ax}}{\norminf{x}\leq1}.
\end{equation*}
Let $\R[n]\times\R[n]\ni(u,v)\mapsto B(u,v)\in\R[m]$ \correction{be a bilinear map}. The induced norm on $B$ is defined as
\begin{equation*}
    \norminf{B} \coloneqq \maxi{\norminf{B(u,v)}}{\norminf{u}\leq1,~\norminf{v}\leq 1}.
\end{equation*}
With this premise set, we have the following result.
\begin{thm}
\label{th4.6}
Suppose $Q\in\R[n\times n] \And L\in\R[n\times n]$ is such that $2Q'(x_0)+L$ is non-singular, and $\const$ is nondegenerate, i.e., has a nonempty interior. Then \eqref{quadeq} is robustly solvable on $\const$. Moreover,
define $M_x\coloneqq \norminf{(2Q'(x_0)+L)^{-1}}$ and $K_{xx}\coloneqq \maxi{{\lvert x^\top Q_ix\rvert}}{\norminf{x}=1,i\in \n}$. Then for a given $0<\rx<\frac{1}{M_xK_{xx}}$ such that $\ball{x_0}{\rx}\subset \const$, the robustness margin $\ru$ is lower bounded by $\frac{\rx(2-M_xK_{xx}\rx)}{2M_x}$. 
\end{thm}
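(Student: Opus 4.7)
The plan is to apply the preceding robustness theorem directly to $f(x,u) \coloneqq Q(x) + Lx - u$, exploiting that this $f$ is quadratic in $x$ and affine in $u$. First I would identify the derivatives: $\pdf{f}{x}(x_0,u_0) = 2Q'(x_0) + L$ (nonsingular by hypothesis, so $M_x = \norminf{(2Q'(x_0)+L)^{-1}}$) and $\pdf{f}{u}(x_0,u_0) = -\eye{n}$, giving $L_u = \norminf{-\eye{n}} = 1$. Because $f$ is affine in $u$, both $\pdf[2]{f}{u}$ and $\pdf{f}{x,u}$ vanish identically, so the constants $K_{uu}$ and $K_{xu}$ appearing in \eqref{eqn:epsrobbnd} may be taken to be zero.

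The only nontrivial second derivative is $\pdf[2]{f}{x}$, whose $i$-th slice is the constant symmetric matrix $2Q_i$. Since $f$ is exactly quadratic in $x$, the second-order Taylor expansion at $(x_0,u_0)$ is exact, and its residual in direction $\tilde x \coloneqq x - x_0$ has $i$-th component equal to $\tilde x^\top Q_i \tilde x$; its infinity norm is therefore bounded by $K_{xx}\norminf{\tilde x}^2$, which is precisely the quantity that the term $\tfrac{1}{2}K_{xx}\rx^2$ controls in \eqref{subeqn-1:epsrobbnd}. With these substitutions, the system \eqref{subeqn-1:epsrobbnd}--\eqref{subeqn-4:epsrobbnd} collapses to the pair
\begin{equation*}
\tfrac{1}{2}K_{xx}\rx^2 < \tfrac{\rx}{M_x} - \ru \qquad \text{and} \qquad K_{xx}\rx < \tfrac{1}{M_x},
\end{equation*}
the latter being equivalent to the hypothesis $\rx < 1/(M_x K_{xx})$. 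The former rearranges to $\ru < \rx(2 - M_x K_{xx}\rx)/(2M_x)$, and the containment $\ball{x_0}{\rx}\subset\const$ guarantees the domain-containment constraints \eqref{subeqn-3:epsrobbnd}--\eqref{subeqn-4:epsrobbnd} are met (after choosing $R_u$ sufficiently large, say any $R_u > \ru$). The conclusion of the robustness theorem then delivers, for every $u$ with $\norminf{u-u_0}$ strictly less than this threshold, a unique $x_u \in \ball{x_0}{\rx} \subset \const$ satisfying $f(x_u,u) = 0$; in particular \eqref{quadeq} is robustly solvable on $\const$ with robustness margin at least $\rx(2-M_x K_{xx}\rx)/(2M_x)$.

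The only subtlety I anticipate concerns the scalar $K_{xx}$, defined as a maximum of quadratic forms, versus the full bilinear-operator norm of $\pdf[2]{f}{x}$ that appears in the general framework of Theorem~\ref{imft_est} — in the infinity norm these two quantities can differ in general. The saving feature here is that we never need the full bilinear bound: because $f$ is exactly quadratic, the Taylor residual is exhibited explicitly and involves only diagonal quadratic forms $\tilde x^\top Q_i \tilde x$, for which $K_{xx}$ is by construction the correct and tight bound.
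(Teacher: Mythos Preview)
Your proposal is correct and follows exactly the route the paper takes: the paper's own proof is the single sentence that Theorem~\ref{th4.6} is a direct application of Theorem~\ref{imft_est} to $f(x,u)=Q(x)+Lx-u$, and you have carried out precisely that computation, including the simplifications $L_u=1$, $K_{xu}=K_{uu}=0$. Your remark about the mismatch between the quadratic-form definition of $K_{xx}$ in the statement and the bilinear operator norm used in Theorem~\ref{imft_est} is well spotted---indeed the paper's worked example takes $K_{xx}=2=\norminf{\pdf[2]{f}{x}}$ rather than the quadratic-form value $1$, so the stray factor of two you are chasing is absorbed by reading $K_{xx}$ as $\norminf{\pdf[2]{f}{x}}$ throughout.
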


\noindent The proof is a straight forward application of \correction{Theorem} \ref{imft_est} on $f$ as defined in \eqref{f_quad_eq} and hence omitted.
\begin{rmk}
\correction{The condition $0<\rx<1/M_xK_{xx}$ ensures uniqueness of $x\in\ball{x_0}{1/M_xK_{xx}}$. However, one can not assert the uniqueness of $x$ on $\Omega$, using Theorem \ref{th4.6}.}   
\end{rmk}
\begin{examplenn}
In order to illustrate the above calculated bounds we present a simple example. The example is taken from \cite{dvijotham2019robust} so as to establish comparisons. The data is as follows: 
\begin{equation*}
    A = \bmat{-1&0\\\correction{1}&0\\0&\correction{-1}\\0&1},~B=\bmat{-0.5\\3\\-0.5\\3},~Q_1 =\bmat{1&0\\0&0}~,Q_2=\bmat{0&0\\0&1},~L=\bmat{1&-3\\2&-1},\text{ and}~u_0 = \bmat{-2\\4}.
\end{equation*}
The underlying expression is then given by
\begin{equation}
\label{q_exmpl}
    F(x) = \bmat{x_1^2+x_1-3x_2\\x_2^2+2x_1-x_2}=\bmat{-2\\4}\quad\text{with}\quad\const = \left\{x \:\bmat{0.5\\0.5}\leq\bmat{x_1\\x_2}\leq\bmat{3\\3}\right\}.
\end{equation}
The unique solution is given by $x_0 \approxeq \bmat{1.36&1.74}^\top$. One can check that $\D F(x_0)$ is nonsingular. The corresponding quantities are $L_x= 6.7204$, $M_x= 0.3763$, $L_u = 1$, and $K_{xx} = 2$. For $\correction{\rx} = 0.86$, $\ball{x_0}{0.86}\subset\const$. The robustness margin for \eqref{q_exmpl} comes out to be, \correction{$\ru \geq 1.546$}. \correction{This is greater than the estimates given by LP-feasibility  routine ($r_u\geq1.2054$) by Dvijotham et al. in  \cite{dvijotham2019robust}. The bounds can be further improved by changing the map $(x,u) \mapsto (F(x)-u)$ to $\bar {F}(x,u) \coloneqq (\D F(x_0)^{-1})\cdot(F(x)-u)$. This gives us an updated bound of $\ru\geq 1.5781$, however, this is smaller than the estimate given by the bound tightening procedure ($r_u\geq1.7069$) in \cite{dvijotham2019robust}. Figure \ref{kdresult} shows simulation results for the improved bounds. The blue scatter plot shows $F(\Omega)$. The yellow rectangle shows the $\bd\ball{u_0}{\ru}$. One can see that $\ball{u_0}{\ru}$ is contained in $F(\Omega)$ and thus for each $u\in\ball{u_0}{\ru}$ there will be an $x\in\omega$ such that $F(x)-u=0$. The bounds can be improved if one restricts the variation along a particular subspace. This is shown by the purple-colored parallelogram. The $u$ was successively allowed to vary along an arbitrary unit vector $(w_1,w_2)$, and the maximum perturbation was computed. This is denoted by $\bd(B_W)$ in the figure.} 
\begin{figure}
\begin{center}
\includegraphics[width=\linewidth]{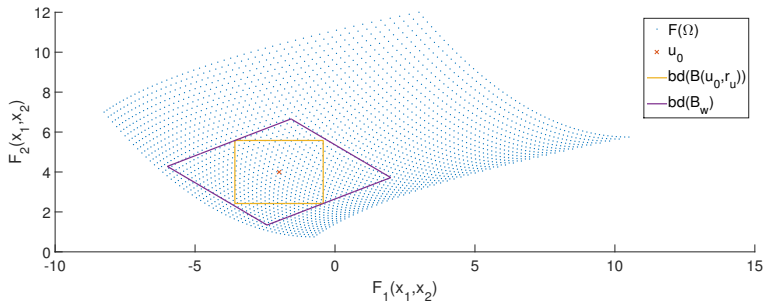}    
\caption{ \centering Robustness Margin for Example \ref{q_exmpl}}
\label{kdresult}
\end{center}
\end{figure}
\end{examplenn}
\subsubsection{Application to Power Systems}
Power Flow studies and Optimal Power Flow studies require us to calculate operating voltages and currents by balancing the load demand and the power generated. Integration of renewable sources like solar and wind energy in the power system network leads to uncertainty in the generated supply. Due to this uncertain supply of renewable sources, it is difficult to ensure that there will be sufficient power generation to meet the power demand while adhering to the stable operation limits of the power system network. The Power Flow equations involve balancing the generated power supply and the required demand. As described by Dvijotham et al. \cite{dvijotham2017solvability}, the AC Power Flow equations are written as follows
\begin{equation}
\label{ac_flow}
\begin{split}
\real\left(V_i\conjugate{Y_{i0}V_0}+\sum_{k=1}^nV_i\conjugate{Y_{ik}V_k}\right)&=p_i\quad\forall i\in\PQ,\\
\imaginary\left(V_i\conjugate{Y_{i0}V_0}+\sum_{k=1}^nV_i\conjugate{Y_{ik}V_k}\right)&=\correction{q_i},\quad\forall i\in\PQ,\\
\real\left(V_i\conjugate{Y_{i0}V_0}+\sum_{k=1}^nV_i\conjugate{Y_{ik}V_k}\right)&=p_i\quad\forall i\in\PV,\\
\lvert V_i\lvert^2&=v_i^2\quad\forall i\in\PV,
\end{split}
\end{equation}
where $V_i$ denotes the complex voltage phasor, $p_i,~q_i$ denotes the active and reactive power injection at the node $i$, and $Y$ denotes the admittance matrix. $\PV$ denotes the set of generator buses, and $\PQ$ denotes the load bus. $v_i$ denotes the root mean square (rms) voltage at the $i^\text{th}$ bus. $V_0$ denotes the reference slack bus and is kept fixed at $1+j0$ per unit magnitude, where $j=\sqrt{-1}$. Power Flow as given by \ref{ac_flow} is quadratic. Further, for stable operation of the power system network, one needs to maintain the bus voltage magnitude within prescribed limits. This imposes a quadratic constraint on \eqref{ac_flow}. 
Setting
\begin{align*}
    x &=\bmat{\real(V_1)&\ldots&\real(V_n)&\imaginary(V_1)&\ldots&\imaginary(V_n)}\text{ and}\\
    u &= \bmat{p_1&\ldots&p_n&q_1&\ldots&q_n&v_1^2&\ldots&v_n^2}
\end{align*} 
the powerflow equations given by \eqref{ac_flow} can be written as QCQP of type \eqref{quadsolve} with $\R[2n]\ni x\mapsto F(x)\in\R[2n]$ with its $i^\text{th}$ component given by 
\begin{equation}
\begin{split}
[F(x)]_i&\coloneqq \real\left(\sum_{k=1}^nV_i\conjugate{Y_{i0}V_0}+\conjugate{V_iY_{ik}V_k}\right)\quad\forall i\in\n,\\
[F(x)]_{n+i}&\coloneqq \imaginary\left(V_i\conjugate{Y_{i0}V_0}+\sum_{k=1}^nV_i\conjugate{Y_{ik}V_k}\right)\quad\forall i\in\PQ, \and\\
[F(x)]_{n+i}&\coloneqq{\real (V_i)}^2 +{\imaginary (V_i)}^2\quad\forall i\in \PV.
\end{split}
\end{equation}
For stable operation, the voltage magnitudes are to be kept within tolerance limits. For a fixed $\rx\in]0,1[$, the constraint set is 
\begin{equation*}
    \const = \{x\in\R[2n]\mid 1-\rx\leq v_i^2\leq 1+\rx ~ \text{for all}~i\in\n\}.
\end{equation*}
The constraint set $\const$ can be relaxed to the following
\begin{equation*}
    \const' = \{x\in\R[2n]\mid -\rx\leq \norminf{x-x^*}\leq \rx ~ \text{for all}~i\in\n\}
\end{equation*}
where $x^*\in\R[2n]$ be the nominal operating point satisfying $(\real{V_i^*})^2+\imaginary({V_i^*})^2=1$ for all $i\in\n$. \\ \\
The test cases are obtained from the dataset given in the MatPower package found in MATLAB software \cite{zimmerman2010matpower}. The package contains well-defined libraries and a set of routines for solving problems like Power Flow analysis, Optimal Load Flow, and DC Power Flow analysis.  The package is open source and is available online. The problem is simulated for several test cases from MatPower. The maximum allowable $\rx$ and $\ru$ are recorded in P.U. magnitude. In order to simulate real-life scenarios, we only consider the variation in the first five dimensions of $u$. The results are tabulated in Table \ref{tab:results}, where $M_f\coloneqq\norm{\D F^{-1}(u_0)}$, $M_F' \coloneqq\supr{\D F^{-1}(u_0)\cdot \Tilde{u}}{u\in\R[2n],~\tilde u_i=0~\text{for all}~i>5}$, and $K_{\bar F} = \supr{\norm{\D^2\bar{F}(x)}}{x\in\R[2n]}$ where $\bar F = (\D F(x_0))^{-1}\cdot F$. \correction{For Case 5, 9, 14, and 30, we also plot $\rx$ with respect to $\ru$ and compare it with the bounds given by the bound tightening method given by Dvijotham et al. \cite{dvijotham2019robust}. These are shown in Figure. \ref{powerresult} } 
\begin{table}[h]
\centering
\begin{tabular}{@{}cccccc@{}}
\toprule
Case &$M_F$ &$M_F'$ &$K_{\bar F}$ & Max $\rx$ & Max $\ru$ \\ 
\midrule
5    &0.5154 &0.0512  &12.971           &0.0771             &0.7529\\
9    &1.3802 &0.7968  &39.065           &0.0256             &0.0161\\
14   &2.4795 &0.5291  &91.066           &0.0110             &0.0104\\
30   &6.2576 &0.3225  &0.330$\times10^3$       &3.303$\times 10^{-3}$      &5.120$\times 10^{-3}$\\
57   &12.153 &0.2657  &1.032$\times10^3$       &0.969$\times 10^{-3}$      &1.823$\times 10^{-3}$\\
85   &5.1119 &0.0140  &13.48$\times10^3$       &0.074$\times 10^{-3}$      &2.643$\times 10^{-3}$\\
141  &22.049 &0.2371  &1.958$\times10^3$       &0.512$\times 10^{-3}$      &1.078$\times 10^{-3}$\\
\bottomrule
\end{tabular}
\caption{\centering Lower bounds on the robustness margin $\ru$ for various cases from MatPower}
\label{tab:results}
\end{table}
\begin{figure}
\begin{center}
\includegraphics[width=\linewidth]{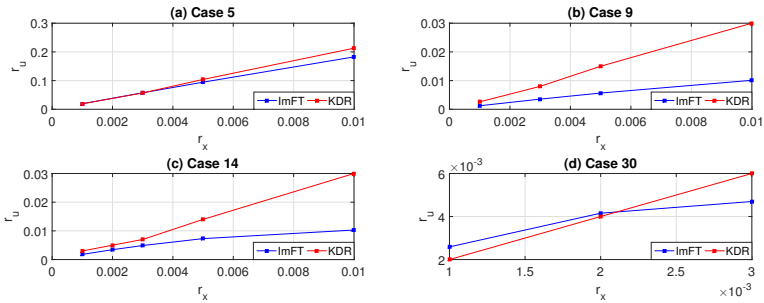}    
\caption{\centering Robustness margin estimates obtained using ImFT (Blue) for (a) Case 5, (b) Case 9, (c) Case 14, and (d) Case (30) compared with estimates obtained using bound tightening (Red) from \cite{dvijotham2019robust}.}
\label{powerresult}
\end{center}
\end{figure}
\begin{rmk}
\correction{From Table \ref{tab:results} and Figure \ref{powerresult}, one can see, for lower values of $\rx$, our bounds are comparable to those of \cite{dvijotham2019robust}. However, with an increase in the dimension, our estimates start to perform poorly as compared to those given in \cite{dvijotham2019robust}. The one key reason for this is the way we bound $K_{xx}$. In order to keep computation minimum we bound $\maxi{\lvert x^\top Qx\rvert}{\norminf{x}\leq1}$ by $\sum_{i,n\in\n}\lvert Q_{ij}\rvert$.  For a general matrix $Q$ this grows quadratically with respect to $n$. However, if $Q$ is a sparse matrix, then one can obtain tighter estimates. One can alternatively utilize the inequlaity $$\maxi{\lvert x^\top Qx\rvert}{\norminf{x}\leq1}\leq j\norm{Q}_2,$$to estimate  which grows linearly with $n$.}

Further, with an increase in $n$, the problem dimension increases and norm-based inequalities such as triangular inequality and sub-multiplicative inequality become more conservative.  Our estimates are conservative when compared to those provided in \cite{dvijotham2019robust}. However, the numerical computation required to compute these bounds is much lower than that required by \cite{dvijotham2019robust}. In particular, no optimization routine is required to compute these bounds. \correction{The bounds given by Dvijotham et.al. in \cite{dvijotham2019robust} require us to perform nontrivial optimizations on matrices and solve LP feasibility programs. In comparison to that, we only require to compute for the inverse of an $n\times n$ matrix i.e., $\D F(x_0)$, for which there exist efficient algorithms and packages in literature. Since our objective is not to tackle matrix inversion methods, we do not investigate methods for computing matrix inversion. Some literature on matrix inversion can be found in \cite{osti_5257971, MARTINSSON2005741, lisong,liujun}}.
One key limitation of this method is that it can only provide us with a lower bound on the robustness margin.
\end{rmk}
\subsubsection{Robustness of the Algebraic Riccati Equation}
Consider a linear time invariant control system
\begin{equation}
\label{lin_are}
\odv{x}{t} =  Ax(t)+Bu(t)
\end{equation}
with $x(t)\in\R[n]$, $u(t)\in\R[m]$ for all $t>0$. For any given $n\in\N$, $\symm{n}$, $\poss{n}$, and $\psd{n}$ denotes the set of symmetric matrices, positive definite matrices, and positive semidefinite matrices of order $n\times n$ respectively.  Let $$\mathcal{U} = \bigg\{[0,\infty[\ni t\mapsto u(t)\in\R[m]\mid\int_{0}^\infty\norm{u(t)}^2\d t<\infty\bigg\}~,$$ be the set of square integrable controls over \correction{infinite} horizon. Define the infinite horizon Linear Quadratic Regulator (LQR) as the following optimal control problem\correction{.}
\begin{problem}
For given $Q\in\psd{n}, \correction{\And} R\in\poss{m}$ compute a \correction{$u^*$,} if it exists, such that it solves
\begin{equation*}
\begin{cases}
\underset{u\in\mathcal{U}}{\minimize} \quad    &J(u)=\frac{1}{2}\int_{0}^{\infty}x(t)^\top Qx(t)+u(t)^\top Ru(t),\\
\\
\text{subject to}\quad & x(0) = x_0, \text{ and } \eqref{lin_are} \text{ for almost all }t\in[0,\infty[~.
\end{cases}
\end{equation*}
\end{problem}
\noindent From results on LQR \cite[Chapter~6]{liberzon2011calculus}, the optimal control is of \correction{the} type 
\begin{equation*}
    u^*(t) = -R^{-1}B^\top Px^*(t) ~ \quad \text{ for almost all  }t\in[0,\infty[ \correction{,}
\end{equation*}
where $P\in\poss{n}$ solves the Algebraic Riccati Equation (ARE)
\begin{equation}
\label{riccatieq}
A^\top P +PA + Q - PBR^{-1}B^\top P = 0\correction{.}   
\tag{ARE}
\end{equation}
 For given $A_0\in\R[n\times n]$, \correction{and} $B_0\in\R[n\times m]$ such that $(A_0,B_0)$ forms a stabilizable pair, \correction{consider a linear control system} given by 
 \begin{equation}
\label{perturbedsys}
    \odv{x}{t} = (A_0+\dstb)x + B_0u
\end{equation}
where parameter $\dstb\in\R[n\times n]$ denotes the unmodelled part of system dynamics. 
\begin{problem}
\label{ret_sys_rob}
Find a bound on $\norm{\dstb}$ such that \eqref{perturbedsys} is stabilizable, i.e., find an $\rl$ such that $(A_0+\dstb,B_0)$ is stabilizable for all $\dstb\in\ball{0}{\rl}$.
\end{problem}
\begin{rmk}
Note that we may have very well considered variations in $B_0$ as well, however for illustration purposes, we restrict to variations in $A_0$.
\end{rmk}
\begin{lem}
\label{lem_are}\emph{\cite[Chapter~6]{liberzon2011calculus}}
For each $Q\in\poss{n}$ and $R\in\poss{m},$ a unique positive definite solution for \eqref{riccatieq} exists if and only if $(A,B)$ forms a stabilizable pair\footnote{\correction{Two given matrices $A\in\R[n\times n]$ and $B\in\R[m\times n]$ are said to form a stabilizable pair if there exists a $K\in\R[n\times m]$ such that $A+BK$ is Hurwitz i.e., have eigenvalues with strictly negative real parts.}}.
\end{lem}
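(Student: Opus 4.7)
The plan is to prove both directions of the biconditional using standard Lyapunov-function arguments coupled with the value-function interpretation of the LQR problem.

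For the implication ($\Leftarrow$), assume $(A,B)$ is a stabilizable pair. Then there exists $K_0\in\R[m\times n]$ such that $A+BK_0$ is Hurwitz, and the feedback $u(t)=K_0 x(t)$ yields exponentially decaying trajectories, hence a finite quadratic cost. Consequently the value function $V(x_0):=\inf_{u\in\mathcal{U}}J(u)$ is finite for every $x_0\in\R[n]$. A standard calculation (completing the square against a candidate quadratic value function) then shows that $V(x)=\tfrac{1}{2}x^\top Px$, with $P\in\symm{n}$ solving \eqref{riccatieq}. Positive definiteness of $P$ is inherited from positive definiteness of $Q$: for every nonzero $x_0$, any admissible $u$ produces a nonzero trajectory on a set of positive measure, so $J(u)\ge\tfrac{1}{2}\int_0^\infty x(t)^\top Qx(t)\d t>0$, whence $V(x_0)>0$.

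For the implication ($\Rightarrow$), assume there exists $P\in\poss{n}$ solving \eqref{riccatieq}. Set $K:=-R^{-1}B^\top P$ and consider the closed-loop matrix $A_{\mathrm{cl}}:=A+BK=A-BR^{-1}B^\top P$. Substituting into \eqref{riccatieq} and symmetrizing gives the Lyapunov identity
\begin{equation*}
A_{\mathrm{cl}}^\top P + P A_{\mathrm{cl}} = -Q - PBR^{-1}B^\top P.
\end{equation*}
The right-hand side is negative definite because $Q\in\poss{n}$ and $PBR^{-1}B^\top P\in\psd{n}$. Using $V(x)=x^\top Px$ as a strict Lyapunov function for $\dot x=A_{\mathrm{cl}}x$, one concludes that $A_{\mathrm{cl}}$ is Hurwitz, so the pair $(A,B)$ is stabilizable with the explicit stabilizing gain $K$.

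The step I expect to be the main obstacle is uniqueness in the forward direction. The plan is the classical subtraction argument: suppose $P_1,P_2\in\poss{n}$ both solve \eqref{riccatieq}, and let $K_i:=-R^{-1}B^\top P_i$, so that by the reverse direction both closed-loop matrices $A_i:=A+BK_i$ are Hurwitz. Subtracting the two instances of \eqref{riccatieq} and rearranging yields a Sylvester/Lyapunov-type equation
\begin{equation*}
A_1^\top(P_1-P_2) + (P_1-P_2)A_2 = 0.
\end{equation*}
Since $A_1$ and $A_2$ are both Hurwitz, their spectra lie in the open left half-plane and in particular are disjoint from the mirrored spectrum of $-A_2$, so the associated Sylvester operator $X\mapsto A_1^\top X+XA_2$ is invertible, forcing $P_1=P_2$. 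This concludes uniqueness and closes the proof.
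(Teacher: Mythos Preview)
The paper does not supply its own proof of this lemma; it is stated with a citation to \cite[Chapter~6]{liberzon2011calculus} and then immediately used. There is therefore nothing in the paper to compare against on a step-by-step basis.

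Your argument is the standard one and is correct in its essentials. The reverse direction is fully rigorous: the Lyapunov identity $A_{\mathrm{cl}}^\top P + P A_{\mathrm{cl}} = -Q - PBR^{-1}B^\top P$ is obtained exactly as you indicate, and with $Q\in\poss{n}$ the right-hand side is negative definite, so $A_{\mathrm{cl}}$ is Hurwitz. The uniqueness argument is also clean: your subtraction indeed yields $A_1^\top(P_1-P_2)+(P_1-P_2)A_2=0$, and since both $A_1,A_2$ are Hurwitz the Sylvester operator is nonsingular, forcing $P_1=P_2$.

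The only place where the sketch is thin is the \emph{existence} part of the forward direction. Finiteness of $V(x_0)$ follows from stabilizability as you say, but the step ``completing the square against a candidate quadratic value function then shows $V(x)=\tfrac12 x^\top Px$ with $P$ solving \eqref{riccatieq}'' presupposes a symmetric $P$ to complete the square against; it establishes optimality given $P$, not existence of $P$. To close that gap you would typically either (i) run the finite-horizon Riccati differential equation, show monotonicity and boundedness of its solution (the bound coming from the cost of the stabilizing feedback $K_0$), and pass to the limit; or (ii) argue directly that $V$ is a finite, nonnegative, homogeneous-of-degree-two function satisfying the stationary Hamilton--Jacobi--Bellman equation, hence quadratic with matrix solving \eqref{riccatieq}. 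Either route is standard and is what Liberzon carries out; once you plug one of them in, your proof is complete.
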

Using the above lemma, for any given $\dstb$, $(A_0+\dstb,B_0)$ being stabilizable is equivalent to \correction{showing} that \eqref{riccatieq} has a positive definite solution with $A= A_0+\dstb$. Further, since $\poss{n}$ is open in $\symm{n}$, there exists an $\rp$ such that for all $P\in\ball{P_0}{\rp}\subset\symm{n}$, $P$ is positive definite. Problem \ref{ret_sys_rob} can now be formulated as the following robustness problem\correction{.}
\begin{problem}
\label{ricc_prob_rob}
For given $A_0,B_0,Q,\and R$, let there be a $P_0\in\poss{n}$ such that it solves \eqref{riccatieq}. For a given $\rp>0$ such that \correction{$\ball{P_0}{\rp}\subset\poss{n}$}, find an $\rl>0$ such that for each $A\in\ball{A_0}{\rl}\subset\R[n\times n]$ there \correction{exists} a $P\in\ball{P_0}{\rp}$ solving 
\begin{equation*}
f(A,P) \coloneqq A^\top P +PA + Q - PB_0 R^{-1}B_0^\top P = 0\correction{.}
\end{equation*}
\end{problem}
\noindent Note that Problem \ref{ricc_prob_rob} is just investigating the robustness of solutions of nonlinear equations. \correction{Therefore we may use the bounds derived for ImFT to compute $\rl$}. To this note we have $\pdf{f}{P}(A,P))$ defined as 
\begin{equation}
\symm{n}\ni \mu\mapsto\pdf{f}{P}(A,P)\cdot \mu = A^\top \mu+\mu A-(\mu B_0 R^{-1} B_0^\top P +PB_0 R^{-1}B_0^\top \mu)   
\end{equation}
and $\pdf{f}{A}(A,P)$ is defined as
\begin{equation}
\R[n\times n]\ni \nu\mapsto \pdf{f}{A}(A,P)\cdot\nu = \nu^\top P+P\nu.
\end{equation}
Evaluating $\pdf{f}{P}$ at $(A_0,P_0)$ we have the following linear map \correction{:}
\begin{equation}
\begin{split}
\mu\mapsto\pdf{f}{P}(A_0,P_0)\cdot \mu &=  A_0^\top \mu +\mu A_0 - (\mu B_0 R^{-1}B_0^\top P_0 +P_0 B_0 R^{-1}B_0^\top \mu)\\
&= (A_0-B_0R^{-1}B_0^\top P_0)^\top \mu + \mu(A_0-B_0R^{-1}B_0^\top P_0)\\
&= A_{c}^\top \mu+\mu A_{c}
\end{split}
\end{equation}
with $A_{c} = A_0-B_0R^{-1}B_0^\top P_0$ Since $(A_0,B_0)$ is a stabilizable pair and $u^* = -R^{-1}B_0^\top P_0x^*$ is the stabilizing control, $A_{c}$ is a stable matrix with its \correction{eigenvalues} having strictly negative real parts. For any $v\in\symm{n}$ define 
\begin{equation*}
    v\mapsto \mu(v)\coloneqq -\int_{0}^{\infty}e^{tA_{c}^\top}ve^{tA_{c}}\d t\correction{,}
\end{equation*}
then we have 
\begin{equation*}
    A^\top_{c}\mu(v)+\mu(v)A_{c} = v\correction{.}
\end{equation*}
Therefore, $\pdf{f}{P}(A_0,P_0) : \symm{n}\lra \symm{n}$ is a surjective map. \correction{Furthermore, using} rank-nullity theorem we can conclude that it is an isomorphism with inverse
\begin{equation}
    v\mapsto (\pdf{f}{P}(A_0,P_0)^{-1})\cdot v = -\int_{0}^{\infty}e^{tA_{c}^\top}ve^{tA_{c}}\d t\correction{.}
\end{equation}
Since $\pdf{f}{P}(A_0,P_0)$ is an isomorphism, we can now apply Lemma \ref{imftc1}. Define $$M_{P}\coloneqq \norm{\pdf{f}{P}(A_0,P_0)} \qand L_{A} \coloneqq \norm{\pdf{f}{A}(A_0,P_0)}.$$ Further for any given $r>0$ and $r^*>0$\correction{,} define
\begin{equation*}
    \lipconst[P]{r,r^*} \coloneqq \supr{\pdf{f}{P}(A,P)-\pdf{f}{P}(A_0,P_0)}{(A,P)\in\ball{\correction{A_0}}{r}\times\ball{P_0}{r^*}}
\end{equation*}
and
\begin{equation*}
    \correction{\lipconst[A]{r}\coloneqq \supr{\pdf{f}{A}(A,P_0)-\pdf{f}{A}(A_0,P_0)}{A\in\ball{A_0}{r}}.}
\end{equation*}
Then for all $\rp>0,\and \rl>0$ with $\ball{P_0}{\rp}\subset\poss{n}$ satisfying
\begin{subequations}
\label{are_eps_c1}
\begin{align}
\lipconst[A]{\rl}\rl+\lipconst[P]{\rl,\rp}\rp&<\frac{\rp}{M_P}-L_A \rl,\correction{\And}\\
M_P\lipconst[P]{\rl,\rp}&< 1\correction{,}
\end{align}      
\end{subequations}
for each $A\in\ball{A_0}{\rl}$\correction{,} there exists a $P\in\ball{P_0}{\rp}$ such that it solves \ref{riccatieq} and consequently $(A,B)$ is stablizable for all $A\in\ball{A_0}{\rl}$.
\begin{rmk}
The bounds $\rp$ and $\rl$ depend upon the choice of the nominal point $(A_0,P_0)$ which itself depends upon the
choice of $Q$ and $R$. One can then optimize these bounds over all possible choices of $Q\in\poss{n}$ and $R\in\poss{m}$. However, for each $Q$ and $R$, to compute the corresponding $P$, we need to solve \eqref{riccatieq} which is a nonlinear equation making the optimization of $\rp,\rl$ with respect to $Q$ and $R$ at least difficult, if not intractable.
\end{rmk}
\noindent We now present a simple illustrative example on \correction{the} double integrator.
\begin{examplenn}
Consider the (perturbed) double integrator system defined by 
\begin{equation*}
    \odv{}{t}\pmat{x_1\\x_2}= \left(\pmat{0&1\\0&0}+\dstb\right)\pmat{x_1\\x_2} + \pmat{0\\1}u
\end{equation*}
with $\dstb\in\R[2\times 2]$. Set $Q = \eye{2}$ and $R=1$, then the corresponding solution of \eqref{riccatieq} is $$P_0 = \pmat{\sqrt{3}&1\\1&\sqrt{3}}.$$ For any $\pmat{v_{11}& v_{12}\\v_{21}&v_{22}}\eqqcolon v\in\R[2\times 2]$ and $\pmat{\mu_1&\mu_2\\\mu_2&\mu_3}=:\mu\in\symm{2}$ we have
\begin{equation*}
    \pdf{f}{A}(A_0,P_0)\cdot v = \pmat{2(\sqrt{3}v_{11}+v_{21})&v_{11}+v_{22}+\sqrt{3}(v_{21}+v_{12})\\v_{11}+v_{22}+\sqrt{3}(v_{21}+v_{12})&2(\sqrt{3}v_{22}+v_{12})}
\end{equation*}
and
\begin{equation*}
    \pdf{f}{P}(A_0,P_0)\cdot \mu = \pmat{-2\mu_2&\mu_1-\sqrt{3}\mu_2-\mu_3\\\mu_1-\sqrt{3}\mu_2-\mu_3&2(\mu_2-\sqrt{3}\mu_3)}.
\end{equation*}
Since $A_c = (A_0-B_0B_0^\top P_0)$ is \correction{Hurwitz}, $\pdf{f}{P}(0,P_0)$ is invertible and for any $\pmat{\nu_1&\nu_2\\\nu_2&\nu_3}=:\nu\in\symm{2}$ we have
\begin{equation}
    \pdf{f}{P}(A_0,P_0)^{-1}\cdot\nu = \pmat{-1.1574\nu_1+\nu_2-0.2887\nu_3&-0.5\nu_1\\-0.5\nu_1&-0.2887(\nu_1+\nu_3)}. 
\end{equation}
Using the infinity norm, we have
\begin{equation*}
    L_A= \norm{\pdf{f}{A}(A_0,P_0)} = 6.928   \qand M_P = \norm{\pdf{f}{A}(A_0,P_0)^{-1}} = 3.0207
\end{equation*}
Further, we have
\begin{equation*}
\lipconst[A]{\rl}=0 \qand \lipconst[P]{\rl,\rp}\leq 2(\rp+\rl). 
\end{equation*}
Substituting in Equation \eqref{are_eps_c1}, \correction{we get that} for all $\rp>0,\correction{\And}\rl>0$ satisfying 
\begin{align*}
    2\rp(\rl+\rp)&< 0.3310\rp-6.928\rl \qand\\
    \rp+\rl&<0.1655 
\end{align*}
for each $A\in\ball{A_0}{\rl}$\correction{,} there exists a $P\in\ball{P_0}{\rp}$ such that it solves the \eqref{riccatieq}.
\end{examplenn}
\begin{rmk}
The purpose of these examples is not to compute the optimum bounds for which we require nonlinear optimization (often global), but to rather show that several fundamental problems in systems and control applications can be recast into the ImFT and IFT framework. 
\end{rmk}
\subsection{Estimating the Domain of Feedback linearizability}
Let $\stateset\subset\R[n]$ and $\controlset\subset\R[m]$ be nonempty open sets and $\stateset\times\controlset\ni(x,u) \mapsto f(x,u)\in\stateset$ be an analytic map. Define a discrete time control system as
\begin{equation}
\label{cont_system}
x(k+1) = f(x(k),u(k))
\end{equation}
where for each $k\in \N$, $x(k)\in\stateset$ denotes the system state and $u(k)\in\controlset$ is the control input. A point $(\xopt,\uopt)\in\stateset\times\controlset$ is called the \emph{equilibrium point} of \eqref{cont_system} if it satisfies $f(\xopt,\uopt) = \xopt$. 
\begin{defn}
Let $(\xopt,\uopt)$ be an equilibrium point of \eqref{cont_system} and $\Open{\xopt}\ni\xopt$, $\Open{\uopt}\ni\uopt$ \correction{be nonempty and open. Suppose} 
\begin{equation*}
    \Open{\xopt}\ni x\mapsto \fstate(x) \in \R[n]
\end{equation*}
is a diffeomorphism onto its image and 
\begin{equation*}
    \Open{\xopt}\times\Open{\uopt}\ni (x,u) \mapsto \fcont(x,u)\eqqcolon v\in\R[m]
\end{equation*}
is such that for each $x\in\Open{\xopt}$, $\fcont(x,\cdot):\Open{\uopt}\lra\R[m]$ is an injective map. Then \eqref{cont_system} is said to be \emph{locally feedback linearizable} on $\Open{\xopt}\times\Open{\uopt}$ if $\fstate$ and $\fcont$ transforms \eqref{cont_system} to an equivalent controllable linear system of type
\begin{equation}
\label{system_lin}
z(k+1)\coloneqq Az(k)+Bv(k)
\end{equation}
with $z(k)=\fstate(x(k))$ and $v(k) = \fcont(x(k),u(k))$ for all $k\in\N$.
\end{defn} 
Since feedback-linearization-based methods are local, i.e., the results \correction{hold} only in a small neighborhood of \correction{the} operating point, for practical implementation of such methods one needs to know an apriori estimate on the domain on which \eqref{cont_system}  is feedback linearizable. \correction{Using the necessary and sufficient conditions for feedback linearizability one can ensure the existence of such open sets and maps. However, these methods do not provide estimates about how large these sets are.} For systems with $f$ being analytic, this is equivalent to \correction{finding} the domains on which $\fstate$ is \correction{a} diffeomorphism and $\fcont(x,\cdot)$ is injective. One can therefore utilize the bounds on the ImFT and IFT to come up with these estimates. The following propositions provide first estimates on the domain on which $\fstate$ \correction{is a diffeomorphism} and $\fcont(x,\cdot)$ \correction{is injective.} 
\begin{prop}
\label{statebound}
Let $\fstate$ be the linearizing coordinate change for \eqref{cont_system}. Define $$L_\fstate \coloneqq \norm{\D \fstate(\xopt)}\and M_\fstate \coloneqq\norm{\D\fstate^{-1}(\zopt)}.$$ For any given $r>0$ define
$$ \Lipconst[\fstate]{r} \coloneqq \supr{\D\fstate(x)-\D\fstate(x_0)}{x\in\ball{x_0}{r}}$$
and define 
\begin{equation*}
    P_\fstate \coloneqq \arg\maxi{\epsilon(r)}{r>0, \Lipconst[\fstate]{r}\leq\frac{1}{M}}~\correction{\And}~ P_\fstate'= \epsilon(P_\fstate)\correction{,}
\end{equation*}
where for all $r>0$, $\epsilon(r) \coloneqq r(1-M_\fstate\Lipconst[\fstate]{r})/M_\fstate.$ \correction{
Then there exists an open set $H_\fstate\subset\ball{\xopt}{P_\fstate}$ such that $\fstate$ maps $H_{\fstate}$ onto $\ball{\xopt}{P_\fstate'}$ diffeomorphically.}
\end{prop}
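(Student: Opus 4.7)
The plan is to recognize the statement as a direct application of Corollary~\ref{iftc1} to the linearizing coordinate change $\fstate$, with the radius $\rx$ chosen so as to maximize the guaranteed image radius $\ry = \epsilon(\rx)$. The corollary already does all the topological work (homotopy invariance of the degree, the invertibility of $\D\fstate$ on the ball through Lemma~\ref{lem_mat_inverse}, and the resulting diffeomorphism), so the task reduces to checking the hypotheses and carrying out the optimization.

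First, I would verify the hypotheses of Corollary~\ref{iftc1} applied to $\fstate$ at the equilibrium point $\xopt$. Since $\fstate$ is by definition a diffeomorphism from $\Open{\xopt}$ onto its image, the Jacobian $\D\fstate(\xopt)$ is an isomorphism, so $L_\fstate$ and $M_\fstate$ are well-defined and finite. Moreover, since the underlying map $f$ is analytic, $\fstate$ inherits enough regularity to be $\cont{1}$, so the sup $\Lipconst[\fstate]{r}$ is well-defined for every $r>0$ for which $\ball{\xopt}{r}\subset\Open{\xopt}$. With these identifications (the corollary's $L$, $M$, and $\lipconst{r}$ playing the roles of $L_\fstate$, $M_\fstate$, and $\Lipconst[\fstate]{r}$), Corollary~\ref{iftc1} directly guarantees that for every $r>0$ satisfying $\Lipconst[\fstate]{r}<1/M_\fstate$, there exists an open set $H_r\subset\ball{\xopt}{r}$ that $\fstate$ maps $\cont{1}$-diffeomorphically onto the ball of radius $\epsilon(r)= r(1-M_\fstate\Lipconst[\fstate]{r})/M_\fstate$ around $\zopt = \fstate(\xopt)$.

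Second, I would explain how to get the optimal bound. Among all admissible radii $r$, the largest image ball is obtained by maximizing $\epsilon(r)$ over $\{r>0:\Lipconst[\fstate]{r}\leq 1/M_\fstate\}$; this defines $P_\fstate$ and $P_\fstate'=\epsilon(P_\fstate)$. The set $H_\fstate := H_{P_\fstate}$ then has exactly the properties claimed: it is open, contained in $\ball{\xopt}{P_\fstate}$, and $\fstate$ restricts to a diffeomorphism from $H_\fstate$ onto the ball of radius $P_\fstate'$.

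The technical subtleties I expect to confront lie in two places. First, one must ensure the argmax in the definition of $P_\fstate$ is in fact attained. This follows because $r\mapsto \Lipconst[\fstate]{r}$ is continuous and nondecreasing (it is a supremum of a continuous function over an increasing family of closed balls), so the feasibility set is a closed interval of the form $(0,r^{\max}]$ and the continuous function $\epsilon$ attains its maximum there. Second, Corollary~\ref{iftc1} is formulated with the strict inequality $\Lipconst[\fstate]{r}<1/M_\fstate$, whereas the definition of $P_\fstate$ uses the weak inequality. If the boundary case $\Lipconst[\fstate]{P_\fstate}=1/M_\fstate$ occurs, I would argue by a standard limiting argument: apply the corollary at each $r<P_\fstate$, collect the corresponding open sets $H_r$, and take their union. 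Since the bound $\epsilon(r)$ varies continuously in $r$ and the diffeomorphism property is local, the union of these sets still maps diffeomorphically onto $\ball{\zopt}{P_\fstate'}$ (with the understanding that $P_\fstate'$ can be interpreted either as $\epsilon(P_\fstate)$ or as $\sup_{r<P_\fstate}\epsilon(r)$, and these coincide by continuity). The rest of the argument is a routine bookkeeping exercise, as the main analytic content has already been packaged into Corollary~\ref{iftc1}.
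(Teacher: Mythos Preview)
Your proposal is correct and matches the paper's approach: the paper does not give a detailed proof of Proposition~\ref{statebound} but simply states that it (together with Proposition~\ref{controlbound}) is a direct application of the bounds computed on the domain of the IFT and ImFT, which is exactly what you do by invoking Corollary~\ref{iftc1}. Your treatment of the argmax attainment and the strict-versus-weak inequality is a reasonable elaboration beyond what the paper spells out.
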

\begin{prop}
\label{controlbound}
Let $u\mapsto\fcont(x,u)\coloneqq v$ be the new linearized control. Suppose $\pdf{\fcont}{u}(\xopt,\uopt)$ is nonsingular, then define 
\begin{equation*}
  M_\fcont^u \coloneqq \norm{\pdf{\fcont}{u}(\xopt,\uopt)^{-1}}\and L_\fcont^x \coloneqq\norm{\pdf{\fcont}{x}(\xopt,\uopt)}.  
\end{equation*}
For a given $\rx>0$, $\ru>0$ and $\rv>0$ let $\mathcal{B}_1 \coloneqq \ball{\xopt}{\rx}\times\ball{\uopt}{\ru}$ and define 
\begin{equation*}
\correction{\lipconstr[\fcont]{\rx}{x} \coloneqq \supr{\norm{\pdf{\fcont}{x}(\xopt,\uopt)-\pdf{\fcont}{x}(x,\uopt)}}{(x,\uopt)\in\mathcal{B}_1}} 
\end{equation*}
and 
\begin{equation*}
\lipconstr[\fcont]{\rx,\ru}{u}\coloneqq \supr{\norm{\pdf{\fcont}{u}(\xopt,\uopt)-\pdf{\fcont}{u}(x,u)}}{(x,u)\in\mathcal{B}_1}.
\end{equation*}
Then for any given $\rx>0$, $\ru>0$  and $\rv>0$ satisfying \begin{subequations}\label{eqn:epsbnd1}
\begin{align}
\correction{\lipconstr[\fcont]{\rx}{x}\rx}+ \lipconstr[\fcont]{\rx,\ru}{u}\ru&<\frac{\ru}{M_\fcont^u}-\rx L_\fcont^x-\rv,\correction{\And}\\
M_\fcont^u\lipconstr[\fcont]{\rx,\ru}{u}&<1
\end{align}
\end{subequations}
for each $(x,v)\in\ball{\xopt}{\rx}\times\ball{\vopt}{\rv}$ there exists a unique $u\in\ball{\uopt}{\ru}$ satisfying $$\fcont(x,u) = v.$$ Further, define $\ball{\xopt}{\rx}\times\ball{\vopt}{\rv}\ni(x,v)\mapsto\gamma(x,v) \correction{\coloneqq} u\in\ball{\uopt}{\ru}$, then $\gamma$ is an analytic map.
\end{prop}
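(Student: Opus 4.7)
The plan is to reduce Proposition \ref{controlbound} to Corollary \ref{imftc1} by absorbing $v$ into the parameter of an auxiliary implicit equation. First I would introduce
\[
F(x,v,u) \coloneqq \fcont(x,u) - v,
\]
defined on $\ball{\xopt}{\rx}\times\ball{\vopt}{\rv}\times\ball{\uopt}{\ru}$. Since $\vopt = \fcont(\xopt,\uopt)$, we have $F(\xopt,\vopt,\uopt)=0$, and $\pdf{F}{u}(\xopt,\vopt,\uopt)=\pdf{\fcont}{u}(\xopt,\uopt)$ is nonsingular with inverse of norm $M_\fcont^u$. The sought equation $\fcont(x,u)=v$ is thus the implicit equation $F(x,v,u)=0$ with $u$ the unknown and $(x,v)$ the parameter.

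Next I would apply Corollary \ref{imftc1}, in an anisotropic-radius version that tracks $\rx$ and $\rv$ separately, with $u$ the unknown and $(x,v)$ the product parameter. The derivative of $F$ along the parameter splits as $\pdf{F}{x}=\pdf{\fcont}{x}$ and $\pdf{F}{v}=-\eye{m}$. Hence the base-point contribution $L_x\rx$ appearing on the right of \eqref{eq_eps_c1a} becomes $L_\fcont^x\rx+1\cdot\rv$, reappearing as the $-\rx L_\fcont^x-\rv$ terms on the right of the first inequality of \eqref{eqn:epsbnd1}. Because $\pdf{F}{v}$ is constant, the Lipschitz modulus of $\pdf{F}{(x,v)}$ over the parameter ball reduces to that of $\pdf{\fcont}{x}$ along $x$, giving the $\lipconstr[\fcont]{\rx}{x}\rx$ term. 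The Lipschitz modulus of $\pdf{F}{u}$ on the product ball is $\lipconstr[\fcont]{\rx,\ru}{u}$, which produces the $\lipconstr[\fcont]{\rx,\ru}{u}\ru$ summand on the left of the first inequality and, through Lemma \ref{lem_mat_inverse}, the invertibility constraint expressed in the second inequality of \eqref{eqn:epsbnd1}.

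Finally, since $\fcont$ is analytic so is $F$; and because $\pdf{F}{u}$ has just been shown to remain nonsingular on the entire product ball, the analytic version of Theorem \ref{imft1} delivers the asserted analyticity of $\gamma$.

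The main obstacle is entirely notational rather than conceptual: Corollary \ref{imftc1} is phrased for a single parameter, so one must either invoke an obvious product-ball generalization (tracking the two parameter radii $\rx$ and $\rv$ independently through the constant and Lipschitz bounds), or simply re-run the Poincar\'e--Bohl and degree argument underlying Theorem \ref{imft_est} directly on $F$ with parameter ball $\ball{\xopt}{\rx}\times\ball{\vopt}{\rv}$. Because $\pdf{F}{v}\equiv -\eye{m}$ is constant, no genuinely new analytic content arises beyond Corollary \ref{imftc1}; the additional term $-\rv$ on the right-hand side of the first inequality of \eqref{eqn:epsbnd1} is the sole algebraic trace of the $v$-dependence.
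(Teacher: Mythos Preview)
Your proposal is correct and matches the paper's approach: the paper does not give a detailed proof but simply states that Proposition \ref{controlbound} is a ``direct application of bounds computed on the domain of IFT and ImFT,'' and your reduction via $F(x,v,u)=\fcont(x,u)-v$ together with Corollary \ref{imftc1} (or, equivalently, re-running the Poincar\'e--Bohl argument with parameter $(x,v)$) is exactly how that direct application is made precise. Your bookkeeping is accurate: the term $-\rv$ arises from bounding $\norm{F(x,v,\uopt)}\leq (L_\fcont^x+\lipconstr[\fcont]{\rx}{x})\rx+\rv$, and the remaining terms drop out of Corollary \ref{imftc1} verbatim since $\pdf{F}{v}$ is constant.
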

\correction{Propositions} \ref{statebound} and \ref{controlbound} are direct applications of bounds computed on the domain of IFT and ImFT. \correction{Note that, Propositions} \ref{statebound} and \ref{controlbound} provide the domain on which $\fstate$ and $\fcont$ are well defined. However, \correction{for the control system} \eqref{cont_system} to be feedback linearizable one also needs to ensure that the trajectory \correction{belongs} to $\Open{\xopt}$ (or equivalently in $\Open{\zopt})$ for all $k\in\N$. This limits the choice of \correction{the} control input $v(k)$. From the linearized dynamics, we have
\begin{equation*}
    z(k+1)-\zopt = A(z(k)-\zopt)+B(v(k)-\vopt)\correction{.}
\end{equation*}
Taking norms, we have
\begin{equation*}
\begin{split}
\norm{z(k+1)-\zopt}&=\norm{A(z(k)-\zopt)+B(v(k)-\vopt)}\\
&\leq \norm{A}\norm{z(k)-\zopt}+\norm{B}\norm{v(k)-\vopt}\correction{.}
\end{split}    
\end{equation*}
For a given $0<\rz<P_\fstate'$, for all $0<\epsilon_v<\frac{1}{\norm{B}}\left(P_\fstate'-\norm{A}\rz\right)$ we have $\norm{z(k+1)-\zopt} < \rz.$ Thus, for all $z(k)\in\ball{\zopt}{\rz}$ and $v(k)\in\ball{\vopt}{\epsilon_v}$, $z(k+1)\in\ball{\zopt}{\rz}.$ Combining bounds from Proposition \eqref{statebound} and \eqref{controlbound}, along with ensuring that trajectories stay in $\Open{\zopt}$ we can find an $\Open{\xopt}$ and $\Open{\uopt}$ such that System \eqref{cont_system} is feedback \correction{linearizable} on $\Open{\xopt}\times\Open{\uopt}$. 
\begin{rmk}
The bounds are given by \correction{Propositions} \ref{statebound} and \ref{controlbound} \correction{utilize} minimum information of $\fstate$ and $\fcont$\correction{. In particular, they only rely} on the bounds on the first-order derivatives. This leads to often conservative estimates. However, the bounds can be significantly improved if we utilize the structure of $\fstate$ and $\fcont$ as we demonstrate in the next example. 
\end{rmk}
\begin{examplenn}
Consider the following discrete system
\begin{equation}
\label{discexample}
\pmat{x_1(k+1)\\x_2(k+1)} = \pmat{x_2(k)\\(1+x_1(k))^2u(k)}
\end{equation}
with an equilibrium point at $(0,0,0)$. Setting $\fstate(x_1,x_2) = (x_1,x_2)$ and $\fcont(x_1,x_2,u) = (1+x_1^2)u = v$, one can linearize \eqref{discexample} about $(0,0,0)$. In order to compute the bounds on the domain of feedback linearizability of \eqref{discexample} we employ Propositions \ref{statebound} and \ref{controlbound}. Let us first compute bounds for $\fstate$. \correction{Since $\fstate$ is the identity map, it is a global diffeomorphism on $\R[2]$.} A similar assertion is obtained from Proposition \ref{statebound}. From definition of $\fcont$ we have
\begin{equation}
    \pdf{\fcont}{u}(x_1,x_2,u) = (1+x_1)^2
\end{equation}
and 
\begin{equation}
    \pdf{\fcont}{x}(x_1,x_2,u) = \pmat{2(1+x_1)u&0}
\end{equation}
Computing quantities as defined in Proposition \ref{controlbound} we have $M_\fstate^{u} \coloneqq \norm{\pdf{\fcont}{u}(0,0,0)} = 1$, $L_\fstate^{x} \coloneqq \norm{\pdf{\fcont}{x}(0,0,0)} = 0$. For any given $\rx>0,\ru>0$, $\mathcal{B}_1 \eqqcolon\ball{x_0}{\rx}\times\ball{x_0}{\ru}$, we have
\begin{equation*}
\lipconstr[\fcont]{\rx}{x} \coloneqq \supr{\norm{\pdf{\fcont}{x}(\xopt,\uopt)-\pdf{\fcont}{x}(x,\uopt)}}{(x,\uopt)\in\mathcal{B}_1}  = 0
\end{equation*}
and
\begin{equation*}
\lipconstr[\fcont]{\rx,\ru}{u} \coloneqq \supr{\norm{\pdf{\fcont}{x}(\xopt,\uopt)-\pdf{\fcont}{x}(x,u)}}{(x,u)\in\mathcal{B}_1}  = \rx(\rx+2)
\end{equation*}
Substituting these \correction{quantities in to Equation \eqref{eqn:epsbnd} we get that,} for $\rx>0,\ru>0,\rv>0$ satisfying \begin{subequations}\label{eqn:flbound}
\begin{align}
\rv<\ru(1-\rx(\rx+2)),&\\
(2+\rx)\rx<1,&
\end{align}
\end{subequations}
for each $(x,v)\in\ball{x_0}{\rx}\times\ball{u_0}{\rv}$\correction{,} there exists a unique $u\in\ball{u_0}{\ru}$ such that $\fcont(x,u) = v$. Therefore \eqref{discexample} is feedback linearizable on $\ball{x_0}{\rx}\times\ball{u_0}{\ru}$. For $\rx = 0.2$, we have $\rv\leq 0.56\ru$. \end{examplenn}
However, the bounds given by \correction{Equation} \eqref{eqn:flbound} are conservative in nature. From direct observation, for $x_1\neq -1$, one has $u = v/(1+x_1)^2$ and hence the map $\fcont(x,\cdot)$ is globally invertible, further $\R\setminus\{-1\}\times\R\times\R \ni(x_1,x_2,v)\mapsto\gamma(x,v) =v/(1+x_1)^2$ is a smooth map and thus \eqref{discexample} is feedback \correction{linearizable} for all $(x_1,x_2,u)\in\R\setminus\{-1\}\times\R\times\R$\correction{.}
\section{Conclusion}
In this article, we provide a lower bound on the domain of the applicability of the ImFT. We utilize degree theoretic ideas to come up with these estimates. One key advantage that is gained by utilizing the degree theoretic ideas is the applicability of these results. For $\cont{2}$ functions the bounds are given as \correction{functions} of the magnitude of the first-order derivatives evaluated over a point and the bounds on the second-order derivatives calculated over a region of interest. These ideas can be suitably extended to $\cont{1}$ and $\cont{0}$ maps. \correction{These bounds can be suitably extended to IFT and the bounds derived by Corollary \ref{ift_est} surpass those given by Abraham et al.~\cite{abraham2012manifolds}.}

The ImFT and IFT have several applications in system theory and control. In this article, we have addressed a few of them. We utilize these bounds to investigate the robustness of the solutions of the nonlinear equations with parametric variations. The method is adapted to address Quadratically Constrained Quadratic \correction{Programs} (QCQPs). In control theory, the Algebraic Riccati Equation (ARE) can be formulated as a \correction{QCQP}. We utilize the bounds on ImFT to compute the robustness of the solutions of the ARE under variations in the system matrix. This helps us compute bounds on system matrices so that the system remains stable. We also apply these bounds on the \correction{power flow} equations to compute margins on the allowable power variations that ensure the stable operation of the power system network. We validate our results on the benchmark systems provided in the MatPower package of MATLAB software. 

Another important application of these bounds is in the feedback linearization methods, where we use these bounds to come up with an estimate on the \correction{domains} of the feedback linearizability of discrete-time systems. 

Note that we do not claim presented bounds to be optimal, rather we tend to seek estimates that require minimal numerical computation and can be used in scenarios where limited computation capacity is available on board. As a future extension, one may look at improving these estimates by exploiting the structures of the underlying mapping $f$ and finding different applications of these bounds in engineering problems.
\section*{Acknowledgments}
We would like to thank Professor Mayank Baranwal for his input and suggestions.
\section*{Declarations}
\subsection*{Ethical approval}
Not applicable.
\subsection*{Competing interests}
The authors declare that they have no competing interests whatsoever.
\subsection*{Authors' contribution}
\textbf{Ashutosh Jindal}: Formal analysis, Investigation, and Writing original draft; \textbf{Debasish Chatterjee}: Conceptualization, Methodology, and Supervision; \textbf{Ravi Banavar}: Supervision and Critical Analysis. 
\subsection*{Funding}
The authors have no funding information to declare.
\subsection*{Availability of data and materials}
Not applicable.
\begin{appendices}
\section{Some more results on the ImFT using Degree Theory}
\subsection{Showing \cite[Proposition 2.5.4]{abraham2012manifolds} as a corollary of Proposition \ref{Imft_holtzman}.}
\label{appen}
For $\cont{2}$ maps,  Abraham et al.~\cite{abraham2012manifolds} provide explicit estimates on the size of the neighborhoods involved in the IFT. These bounds are given in the following proposition. 
\begin{prop}\emph{\cite[Proposition~2.5.6]{abraham2012manifolds}} 
\label{propamd}
Suppose $f\: U\subset E\lra F$ is $\cont{r}$, $r\geq2$, $x_0\in U$, and $\D f(x_0)$ is an isomorphism. Let $L= \norm{\D f(x_0)}$ and $M=\norm{(\D f(x_0))^{-1}}$. Assume $\norm{\D^2f(x)}\leq K$ for all $x\in\ball{x_0}{R}\subset U$, for some $R>0$. Define $P\coloneqq \min\{\frac{1}{2KM},R\}$ and $P' = \frac{P}{2M}$. Further, let $N= 8M^3K$ and $Q \coloneqq \min\{\frac{1}{2NL},\frac{P}{2M},P\}$ and $Q'\coloneqq \frac{Q}{2L}$. Then there exist 
\begin{enumerate}[label=\textup{(\ref{propamd}\alph*)},leftmargin=*, widest=b, align=left]
    \item an open set $H\subset\ball{x_0}{P}$ such that $f$ maps $H$ diffeomorphically onto $\ball{x_0}{P'}$ and
\label{prpamd_1}    
    \item an open set $H'\subset\ball{f(x_0)}{Q}$ such that $f^{-1}$ maps $H'$ diffeomorphically onto $\ball{x_0}{Q'}$.
\label{prpamd_2}
\end{enumerate}
\end{prop}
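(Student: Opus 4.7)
The plan is to derive this proposition as a two-step corollary of Proposition \ref{Imft_holtzman} (Holtzman's theorem): first I would apply Holtzman to the auxiliary map $\psi(x,y) \coloneqq f(x) - y$ (with $y$ playing the role of Holtzman's parameter and $x$ the role of the unknown) to obtain \ref{prpamd_1}, and then I would apply \ref{prpamd_1} itself to $f^{-1}$ to obtain \ref{prpamd_2}.

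For \ref{prpamd_1}, I would verify Holtzman's hypotheses for $\psi$: $\psi(x_0,y_0)=0$ is immediate; $\pdf{\psi}{x}(x_0,y_0)=\D f(x_0)$ is an isomorphism with $k_1 = M$; the mean value inequality applied to $\D f$ on $\ball{x_0}{R}$ together with $\norm{\D^2 f}\leq K$ gives $\norm{\D f(x)-\D f(x_0)}\leq K\norm{x-x_0}$, so the majorant $g_1(u,v)=Kv$ works (where $u$ is the Holtzman-$x$ displacement, i.e.\ the $y$-displacement here, and $v$ is the $x$-displacement); and $\norm{\psi(x_0,y)}=\norm{y_0-y}$ gives $g_2(u)=u$. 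Taking $\alpha=1/2$, Holtzman's strict conditions reduce to $MK\epsilon \leq 1/2$ and $M\delta \leq \epsilon/2$, which are precisely satisfied by $\epsilon = P$ and $\delta = P'$. Holtzman then produces for each $y\in \ball{y_0}{P'}$ a unique $x_y\in\cl\ball{x_0}{P}$ with $f(x_y)=y$ depending continuously on $y$, and Theorem \ref{ift} upgrades this assignment to a $\cont{r}$-diffeomorphism on $H \coloneqq f^{-1}(\ball{y_0}{P'})\cap \ball{x_0}{P}$, yielding \ref{prpamd_1}.

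For \ref{prpamd_2}, the plan is to reapply \ref{prpamd_1} to the local inverse $f^{-1}$, for which I need bounds playing the roles of $M$ and $K$. Using $\D f^{-1}(y) = (\D f(x))^{-1}$ together with Lemma \ref{lem_mat_inverse} and a Neumann-series argument: restricting $x$ to $\ball{x_0}{P/2}$ forces $\norm{\D f(x)-\D f(x_0)}\leq K P/2 \leq 1/(4M)$, which yields $\norm{\D f^{-1}(y)}\leq 2M$. Differentiating $f^{-1}\circ f = \mathrm{id}$ twice gives the standard relation $\D^2 f^{-1}(y)(\D f(x)\cdot,\D f(x)\cdot) = -\D f^{-1}(y)\cdot \D^2 f(x)(\cdot,\cdot)$, whence $\norm{\D^2 f^{-1}(y)}\leq \norm{\D f^{-1}(y)}^3\norm{\D^2 f(x)}\leq 8M^3K = N$, valid on the image of $\ball{x_0}{P/2}$ under $f$, which in particular contains $\ball{y_0}{P/(2M)}$. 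Applying \ref{prpamd_1} to $f^{-1}$ with the substitutions $L\mapsto M$, $M\mapsto L$, $K\mapsto N$, $R\mapsto P/(2M)$ then produces $P_{f^{-1}} = \min\{1/(2NL), P/(2M)\}$ (with the extra entry $P$ in the minimum being harmless since $P/(2M)\leq P$ in the relevant regime) and $P'_{f^{-1}} = P_{f^{-1}}/(2L) = Q'$, matching the claim.

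The main obstacle will be the bookkeeping of constants: keeping track of which $L$ and $M$ refer to $f$ versus $f^{-1}$, and ensuring the auxiliary ball $\ball{y_0}{P/(2M)}$ on which the $\D^2 f^{-1}$ bound is valid is correctly identified as the one feeding into the definition of $Q$. Beyond this, no new analytic ideas are needed — everything follows from two invocations of Proposition \ref{Imft_holtzman}.
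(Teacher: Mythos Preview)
Your approach is essentially the same as the paper's: define $F(x,y)=f(x)-y$, apply Proposition~\ref{Imft_holtzman} with $g_1=K\rx$, $g_2=\ry$, $\alpha=1/2$ to get \ref{prpamd_1}, then bound $\norm{\D^2 f^{-1}}\le 8M^3K$ and reapply to $f^{-1}$ for \ref{prpamd_2}.

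One bookkeeping slip to fix: your restriction to $\ball{x_0}{P/2}$ is unnecessary and the subsequent claim that its image contains $\ball{y_0}{P/(2M)}$ does not follow (part~\ref{prpamd_1} at radius $P/2$ only gives $\ball{y_0}{P/(4M)}$). Simply work on the full ball $\ball{x_0}{P}$: there $\norm{\D f(x)-\D f(x_0)}\le KP\le 1/(2M)$, so the Neumann bound already yields $\norm{(\D f(x))^{-1}}\le 2M$, and since $f^{-1}(\ball{y_0}{P'})\subset H\subset\ball{x_0}{P}$ by \ref{prpamd_1}, the estimate $\norm{\D^2 f^{-1}(y)}\le N$ holds on all of $\ball{y_0}{P'}=\ball{y_0}{P/(2M)}$, exactly as needed. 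Also, your parenthetical ``$P/(2M)\le P$'' need not hold in general (it requires $M\ge 1/2$), but including the extra $P$ in the minimum only shrinks $Q$, so the stated conclusion is still valid.
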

Proposition \ref{propamd} can be shown as a corollary of Proposition \ref{Imft_holtzman} as we now demonstrate.
\begin{proof}[Proof of Proposition~{\upshape\ref{propamd}}] Define $(x,y)\mapsto F(x,y) = f(x)-y$. Then we have $F(x,y) =0 \iff f(x) = y$. Furthermore, $\pdf{F}{x}(x_0,f(x_0)) = \D f(x_0)$ is nonsingular. Therefore $F$ satisfies assumptions of Proposition \ref{Imft_holtzman}. Define $M = \norm{\D f(x_0)^{-1}}$ and let $R>0$ be such that $\ball{x_0}{R}\subset U$ and define
$$K = \supr{\norm{\D^2f(x)}}{x\in\ball{x_0}{R}}.$$
For a given $0<\rx<R$ and $\ry$, define 
\begin{equation*}
g_1(\rx) = K\rx \and g_2(\ry) = \ry.   
\end{equation*}
Then for all $(x,y)\in\cl\ball{x_0}{\rx}\times\ball{y_0}{\ry}$, we have
\begin{align*}
    \norm{\pdf{F}{x}(x,y)-\pdf{F}{x}(x_0,y_0)} & =  \norm{\D f(x)-\D f(x_0)}\leq g_1(\rx),\And\\
    \norm{F(x_0,y)} &= \norm{y_0-y}\leq \ry. 
\end{align*}
Setting $\alpha = 0.5$, for all $\rx,\ry$ satisfying 
\begin{equation*}
    MK\rx\leq0.5 \And M\ry\leq 0.5\rx
\end{equation*}
For each $y\in\ball{y_0}{\ry}$ there exists a unique $x_y$ in $\cl\ball{x_0}{\rx}$ satisfying $f(x_y) = y$. Defining $y\mapsto g(y) \coloneqq x_y$, provides the local inverse, further $g$ is $\cont{2}$ in view of Theorem \ref{ift}. Setting $\rx = P$, $\ry =P'$, and $H = g(\ball{y_0}{P'})$ proves \ref{prpamd_1}. From the expression of $\D f^{-1}(y)$, we have 
\begin{equation*}
    \supr{\D^2f^{-1}(y)}{y\in\ball{y_0}{P'}} \leq 8M^3K=N.
\end{equation*}
Therefore, replacing $f$ with $g$, $M$ with $L$ and $K$ with $N$ we get $Q = \min\{\frac{1}{2NL},\frac{P}{2M}, P\}$ and $Q'=\frac{Q}{2L}$, such that for each $x\in\ball{x_0}{Q'}$ there exists a $y_x\in\cl\ball{y_0}{Q}$ such that $f(x) = y_x$. Defining $H' = f(\ball{x_0}{Q'})$ proves \ref{prpamd_2}. 
\end{proof}
\subsection{Generalized ImFT for $\cont{0}$ maps.}
A generalized version of ImFT for continuous maps is presented by Halkin~\cite{halkin_0312017}. We can extend Proposition \ref{Imft_holtzman} to obtain estimates for such maps as follows. 
\begin{prop} 
\label{imft-c0}
Let $\Omega\subset\R[n]\times\R[m]$ be open and $(x_0,y_0)\in\Omega.$ Let $\Omega\ni(x,y)\mapsto P(x,y)$ be a continuous map. Assume: 
\begin{enumerate}[label=\textup{(\roman*)},leftmargin=*, widest=b, align=left]
\item $P(x_0,y_0) =0$;
\label{Apitem1}
\item $P$ is differentiable with respect to $y$ at $(x_0,y_0)$ and $\pdf{P}{y}(x_0,y_0)$ has a bounded inverse $\Gamma$ and $\norm{\Gamma}=k_1$. 
\label{Apitem2}
\item $S = \{(x,y)\in\ball{x_0}{\delta}\times\cl\ball{y_0}{\epsilon}\}\subset \Omega$.
\label{Apitem3}
\item there exists a real-valued function $[0,\delta]\times[0,\epsilon]\ni(u,v)\mapsto g_3(u,v)$ such that $g_3$ is nondecreasing in each argument with the other fixed and for all $(x,y)\in S$
\begin{equation*}
    \norm{P(x,y)-(P(x,y_0)+\pdf{P}{y}(x_0,y_0)\cdot(y-y_0))}\leq g_3(\norm{x-x_0},\norm{y-y_0});
\end{equation*}
\label{Apitem4}
\item there is a nondecreasing function $[0,\delta]\ni x\mapsto g_2(x)$ such that for all $(x,y_0)\in S$
\begin{equation*}
    \norm{P(x,y_0)}\leq g_2(\norm{x-x_0}); 
\end{equation*}
\label{Apitem5}
\item $k_1(g_2(\delta)+g_3(\delta,\epsilon))\leq \epsilon$.
\label{Apitem6}
\end{enumerate}
Then for all $x\in\ball{x_0}{\delta}$ there exists a $y_x$ (not necessarily unique) in $\cl\ball{y_0}{\epsilon}$ such that $P(x,y_x)=0$. 
\end{prop}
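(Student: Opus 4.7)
The plan is to adapt the degree-theoretic blueprint of Proposition~\ref{Imft_holtzman} to the present $\cont{0}$ setting, exploiting the fact that uniqueness of $y_x$ is \emph{not} claimed here. Fix an arbitrary $x\in\ball{x_0}{\delta}$ and introduce the $y$-section $\cl\ball{y_0}{\epsilon}\ni y\mapsto F_x(y)\coloneqq P(x,y)\in\R[m]$ together with its affine model
\begin{equation*}
G_x(y)\coloneqq P(x,y_0)+\pdf{P}{y}(x_0,y_0)(y-y_0).
\end{equation*}
Although $P$ is only assumed differentiable in $y$ at the single point $(x_0,y_0)$, the map $G_x$ is a bona fide smooth affine map on $\R[m]$ with unique zero $y^\star\coloneqq y_0-\Gamma P(x,y_0)$, and $\norm{y^\star-y_0}\leq k_1 g_2(\delta)\leq \epsilon$ so that $y^\star\in\cl\ball{y_0}{\epsilon}$.

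On $\bd\ball{y_0}{\epsilon}$, hypothesis~\ref{Apitem4} yields $\norm{F_x(y)-G_x(y)}\leq g_3(\delta,\epsilon)$, while \ref{Apitem2}, \ref{Apitem5} and the reverse triangle inequality give $\norm{G_x(y)}\geq \epsilon/k_1-g_2(\delta)$. Constraint~\ref{Apitem6} then delivers the (possibly non-strict) inequality
\begin{equation*}
\norm{F_x(y)-G_x(y)}\leq\norm{G_x(y)}\qquad\forall\,y\in\bd\ball{y_0}{\epsilon},
\end{equation*}
which is exactly the hypothesis of the non-strict form of Corollary~\ref{cor_pbh}. A short case analysis handles the boundary situation: if $F_x$ vanishes somewhere on $\bd\ball{y_0}{\epsilon}$, the required $y_x$ is produced directly; if $G_x$ vanishes on $\bd\ball{y_0}{\epsilon}$, then $\norm{y^\star-y_0}=\epsilon$, which forces $k_1 g_2(\delta)=\epsilon$ and hence $g_3(\delta,\epsilon)=0$ by~\ref{Apitem6}, so $F_x\equiv G_x$ on $S$ and $y^\star$ itself is a zero of $F_x$. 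In every remaining case both $F_x$ and $G_x$ are nonvanishing on $\bd\ball{y_0}{\epsilon}$ and $y^\star\in\ball{y_0}{\epsilon}$.

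In that main case the non-strict form of Corollary~\ref{cor_pbh} yields $\DEG{F_x}{\ball{y_0}{\epsilon}}=\DEG{G_x}{\ball{y_0}{\epsilon}}$, and since $G_x$ is affine with an isolated regular zero $y^\star\in\ball{y_0}{\epsilon}$ and non-singular Jacobian $\pdf{P}{y}(x_0,y_0)$, Definition~\ref{top_deg} gives $\DEG{G_x}{\ball{y_0}{\epsilon}}=\sign(\det\pdf{P}{y}(x_0,y_0))\in\{-1,+1\}$; property~\ref{deg_NI} then produces the desired $y_x\in\ball{y_0}{\epsilon}\subset\cl\ball{y_0}{\epsilon}$ with $P(x,y_x)=0$. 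The anticipated main obstacle is precisely the tightness permitted by~\ref{Apitem6}: under $\cont{1}$ regularity, Proposition~\ref{Imft_holtzman} circumvents this with a limiting argument on the Lipschitz-type bound $g_1$, but here mere continuity of $P$ forces a direct treatment through the non-strict version of Corollary~\ref{cor_pbh}, coupled with the case analysis above to rule out a hidden zero of $G_x$ on the boundary.
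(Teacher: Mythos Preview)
Your proof is correct and follows the same degree-theoretic route as the paper's own argument: fix $x$, compare the section $F_x(y)=P(x,y)$ with the affine map $G_x(y)=P(x,y_0)+\pdf{P}{y}(x_0,y_0)(y-y_0)$ on $\bd\ball{y_0}{\epsilon}$, invoke Corollary~\ref{cor_pbh} to transfer the degree, and conclude via \ref{deg_NI}. The only substantive difference is that you carry out an explicit case analysis to dispose of the possibility that $G_x$ vanishes on $\bd\ball{y_0}{\epsilon}$ (forcing $k_1g_2(\delta)=\epsilon$, hence $g_3(\delta,\epsilon)=0$ and $F_x\equiv G_x$), whereas the paper simply asserts that both maps are nonvanishing on the boundary and applies the non-strict form of Corollary~\ref{cor_pbh} directly. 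Your treatment is the more careful of the two, since the non-strict inequality in~\ref{Apitem6} does in principle permit $y^\star\in\bd\ball{y_0}{\epsilon}$; otherwise the two proofs are essentially identical.
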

\begin{proof}
Existence of such a $g_1,~g_2$, $\epsilon>0$ and $\delta>0$ satisfying \ref{Apitem1}-\ref{Apitem6} is ensured from the continuity of $P$ and properties of the derivative $\pdf{P}{y}(x_0,y_0)$.
\\ \\
Fix an $x\in\ball{x_0}{\delta}$ and define
\begin{equation}
    \cl\ball{y_0}{\epsilon}\ni y \mapsto \P{x}{y} \coloneqq P(x,y)-P(x_0,y_0).
\end{equation}
then $P(x,y) = 0 \iff \P{x}{y} = P(x_0,y_0) = 0$. If $\P{x}{\cdot}$ vanishes on $\bd\ball{y_0}{\epsilon}$ then there is nothing to prove. Otherwise, assume $\P{x}{\cdot}$ is nonvanishing for all $y\in\bd\ball{y_0}{\epsilon}$. Define an approximation of $\P{x}{\cdot}$ as 
\begin{equation}
\label{appen1}
    \Papx{x}{y} = P(x,y_0) +\pdf{P}{y}(x_0,y_0)\cdot(y-y_0).
\end{equation}
Using \ref{Apitem2}, \ref{Apitem5} and \ref{appen1}, for all $y\in\bd\ball{y_0}{\epsilon}$ we have 
\begin{equation*}
    \norm{\Papx{x}{y}}\leq \frac{\epsilon}{k_1}-g_2(\delta).
\end{equation*}
Furthermore, from \ref{Apitem4}, for all $y\in\bd\ball{y_0}{\epsilon}$, we have
\begin{equation*}
    \norm{\P{x}{y}-\Papx{x}{y}}\leq g_3(\epsilon,\delta). 
\end{equation*}
From \ref{Apitem6}, for all $y\in\bd\ball{y_0}{\epsilon}$ we have 
\begin{equation*}
    \norm{\P{x}{y}-\Papx{x}{y}}\leq \norm{\Papx{x}{y}}.
\end{equation*}
Furthermore, $\Papx{x}{\cdot}$ and $\P{x}{\cdot}$ are nonvanishing on $\bd\ball{y_0}{\epsilon}$. Therefore, from Corollary \ref{cor_pbh} we have 
\begin{equation*}
  \DEG{\P{x}{\cdot}}{\ball{y_0}{\epsilon}} = \DEG{\Papx{x}{\cdot}}{\ball{y_0}{\epsilon}} \in\{-1,1\},   
\end{equation*}
and thus there exists a (not necessarily unique)  $y_x\in\cl\ball{y_0}{\epsilon}$ such that $\P{x}{y_x} = P(x,y_x) =0$.
\end{proof}
\end{appendices}

\end{document}